\newcommand\ext[1]{\scalebox{.8}[1]{$\bigwedge^{\!#1}$}}
\pgfplotsset{compat=1.18}
\tikzset{>=latex}
\renewcommand*{\backref}[1]{}
\renewcommand*{\backrefalt}[4]{%
  \ifcase #1%
  \or [Page~#2.]%
  \else [Pages~#2.]%
  \fi%
}
\theoremstyle{plain}
\newtheorem{lemma}{Lemma}
\newtheorem{proposition}[lemma]{Proposition}
\newtheorem{corollary}[lemma]{Corollary}
\theoremstyle{definition}
\newcommand*{\dd}{\mathop{}\!\mathrm{d}}
\newcommand{\g}{\mathfrak{g}}
\renewcommand{\d}{\partial}
\newcommand{\ft}{\mathfrak{t}}
\newcommand{\fM}{\mathfrak{M}}
\newcommand{\fe}{\mathfrak{e}}
\newcommand{\Tgh}{T^{\text{gh}}}
\newcommand{\Mgh}{M^{\text{gh}}}
\newcommand{\Gpgh}{G^{+\,\text{gh}}}
\newcommand{\Gmgh}{G^{-\,\text{gh}}}
\newcommand{\Ttot}{T^{\text{tot}}}
\newcommand{\Ltot}{L^{\text{tot}}}
\newcommand{\Mtot}{M^{\text{tot}}}
\newcommand{\p}{\boldsymbol{p}}
\newcommand{\bzero}{\boldsymbol{0}}
\newcommand{\sB}{\mathsf{B}}
\newcommand{\sZ}{\mathsf{Z}}
\newcommand{\sC}{\mathsf{C}}
\newcommand{\sH}{\mathsf{H}}
\newcommand{\Crel}{\mathsf{C}_{\mathrm{rel}}}
\newcommand{\Hrel}{\mathsf{H}_{\mathrm{rel}}}
\newcommand{\Brel}{\mathsf{B}_{\mathrm{rel}}}
\newcommand{\Zrel}{\mathsf{Z}_{\mathrm{rel}}}
\newcommand{\eL}{\mathcal{L}}
\newcommand{\eO}{\mathcal{O}}
\newcommand{\vac}{\ket{p}}
\newcommand{\reg}{\operatorname{reg}}
\newcommand{\Ort}{\operatorname{O}}
\newcommand{\dvol}{\operatorname{dvol}}
\newcommand{\Tr}{\operatorname{Tr}}
\newcommand{\tr}{\operatorname{tr}}
\newcommand{\hh}{\mathbb{h}}
\newcommand{\RR}{\mathbb{R}}
\newcommand{\MM}{\mathbb{M}}
\newcommand{\LL}{\mathbb{L}}
\newcommand{\NN}{\mathbb{N}}
\newcommand{\ZZ}{\mathbb{Z}}
\newcommand{\CC}{\mathbb{C}}
\newcommand{\1}{\mathbb{1}}
\newcommand{\GL}{\operatorname{GL}}
\newcommand{\ISO}{\operatorname{ISO}}
\newcommand{\SO}{\operatorname{SO}}
\newcommand{\U}{\operatorname{U}}
\newcommand{\D}{{\partial}}
\newcommand{\zAdS}{\mathsf{AdS}}
\newcommand{\vv}{\mathbbm{v}}
\newcommand{\ee}{\mathbbm{e}}
\renewcommand{\tt}{\mathbbm{t}}
\newcommand{\qq}{\mathbbm{q}}
\newcommand{\pder}[2]{\frac{\partial #1}{\partial #2}}
\definecolor{dkgr}{rgb}{0,0.6,0}
\definecolor{gris}{rgb}{0.5,0.5,0.5}
\numberwithin{equation}{section}
\begin{document}

\title{Quantum carrollian bosonic strings}
\author[Figueroa-O'Farrill]{José Figueroa-O'Farrill}
\author[Have]{Emil Have}
\author[Obers]{Niels A. Obers}
\address[JMF]{Maxwell Institute and School of Mathematics, The University
  of Edinburgh, James Clerk Maxwell Building, Peter Guthrie Tait Road,
  Edinburgh EH9 3FD, Scotland, United Kingdom}
\address[EH,NO]{Center of Gravity, Niels Bohr Institute, University of Copenhagen, Blegdamsvej 17, DK-2100 Copenhagen Ø, Denmark}
\address[NO]{Nordita, KTH Royal Institute of Technology and Stockholm University, Hannes Alfv\'{e}ns v\"{a}g 12, SE-106 91 Stockholm, Sweden}
\email[JMF]{\href{mailto:j.m.figueroa@ed.ac.uk}{j.m.figueroa@ed.ac.uk}, ORCID: \href{https://orcid.org/0000-0002-9308-9360}{0000-0002-9308-9360}}
\email[EH]{\href{mailto:emil.have@nbi.ku.dk}{emil.have@nbi.ku.dk}, ORCID: \href{https://orcid.org/0000-0001-8695-3838}{0000-0001-8695-3838}}
\email[NO]{\href{mailto:obers@nbi.ku.dk}{obers@nbi.ku.dk}, ORCID: \href{https://orcid.org/0000-0003-4947-8526}{0000-0003-4947-8526
}}
\begin{abstract}
  We study a recently discovered carrollian bosonic string, described
  classically by a sigma model where both the worldsheet and the
  target spacetime are carrollian.  After fixing the carrollian
  analogue of the conformal gauge, we determine the Lie algebra of
  residual gauge symmetries and show it is isomorphic to the
  three-dimensional extended BMS algebra. We quantise the sigma model
  à la BRST and determine the spectrum of the corresponding string
  theory by computing the BRST cohomology.  In contrast with the usual
  bosonic string, the spectrum of the carrollian string is
  finite-dimensional.  The cohomology displays Poincaré duality and
  can be interpreted, for a given momentum, as inducing
  representations for unitary irreducible representations of the
  26-dimensional Carroll group.  Furthermore we interpret (most of)
  the cohomology as deformations of the spacetime carrollian structure
  augmented by the Kalb--Ramond field to which the carrollian string
  couples.
\end{abstract}
\maketitle
\tableofcontents

\section{Introduction}
\label{sec:introduction}

Originally billed as a ``theory of everything'', promising to unify
gravity with all other known forces, string theory has since blossomed
into a broad subject with particularly deep connections to
mathematics.  Within the last decade or so, a new direction of
research has emerged within string theory that upends the lorentzian
hegemony that has long reigned supreme, by investigating what happens
to strings when the target spacetime and/or the worldsheet are instead
described by \textbf{nonlorentzian geometry}, where, roughly speaking,
the structure group of the tangent bundle does not preserve a
(lorentzian) inner product.  The precursor of such nonlorentzian
strings was the Gomis--Ooguri
string~\cite{Gomis:2000bd,Danielsson:2000gi}, which arose by
considering a near-critical Kalb--Ramond field limit that generalises
the $c\to \infty$ limit (with $c$ the speed of light) known for
particles. The Gomis--Ooguri string has a flat target spacetime, and
it was only realised much later
\cite{Harmark:2017rpg,Bergshoeff:2018yvt,Bidussi:2021ujm} that
nonrelativistic strings couple to a string version of Newton--Cartan
geometry (see~\cite{Oling:2022fft} for a review of nonrelativistic
string theory). The Gomis--Ooguri string has a lorentzian worldsheet
and the associated conformal field theoretic description makes the
computation of its spectrum amenable to homological techniques, as is
the case for the usual lorentzian strings.  Recently, a bosonic string
reminiscent of the Gomis--Ooguri string has been constructed by
gauging a Wess--Zumino--Witten model \cite{Figueroa-OFarrill:2025nmo}
and that too is amenable to homological techniques.  Interestingly,
there exists a further limit that turns the worldsheet nonlorentzian,
leading to a novel type of strings that are related to spin matrix
theory~\cite{Harmark:2014mpa, Harmark:2017rpg, Harmark:2018cdl,
  Harmark:2020vll,Bidussi:2023rfs}.

In this paper, we study quantum aspects of \textbf{carrollian
  strings}. Carrollian symmetries arise when the speed of light goes
to zero, and were first discussed by Lévy-Leblond~\cite{Levy1965} and Sen
Gupta~\cite{SenGupta1966OnAA}.  Much of the recent interest
in carrollian symmetry and geometry stems from its connection with
flat space holography~\cite{Hartong:2015usd,
  Ciambelli:2018wre,Figueroa-OFarrill:2021sxz, Donnay:2022aba,
  Bagchi:2022emh, Donnay:2022wvx, Hartong:2025jpp}
(see~\cite{Donnay:2023mrd} for a review and further references).
Carrollian symmetry also arises when considering the small speed of
light (or ultra-local) expansion of General Relativity
\cite{Henneaux:1979vn,Dautcourt:1997hb,Hansen:2021fxi}, giving rise to
Carroll gravity actions (see also, e.g., \cite{Henneaux:2021yzg,
  Figueroa-OFarrill:2022mcy, Campoleoni:2022ebj, deBoer:2023fnj}).
Furthermore, Carroll symmetry might also be relevant for de Sitter
cosmology and inflation \cite{deBoer:2021jej,Blair:2025nno}.  In
addition, Carrollian symmetry rears its head in condensed matter
physics in the context of fractons~\cite{Bidussi:2021nmp,
  Figueroa-OFarrill:2023vbj, Figueroa-OFarrill:2023qty}, which are
hypothetical quasiparticles with restricted mobility relevant for
quantum computing. It appears in the description of flat band
structures~\cite{Bagchi:2022eui, Ara:2024fbr}, which for example arise
in magic-angle superconductivity in twisted bilayer graphene, where
the dispersion relation becomes trivial.  In all, there is by now a
fast growing literature on the broad array of aspects of Carroll
symmetry in field theory, gravity, holography, and we refer to the
recent review \cite{Bagchi:2025vri} for further references.

Given this wealth of physical examples where carrollian symmetry plays
a rôle, it is natural to investigate what we may learn by studying
strings with carrollian symmetries. As it turns out, the answer is
quite a lot: the ambitwistor string~\cite{Mason:2013sva,
  Casali:2015vta}, whose correlators encode massless scattering
through the Cachazo--He--Yuan
formalism~\cite{Cachazo:2013hca,Cachazo:2013iea,Cachazo:2014xea}, has
a carrollian worldsheet.\footnote{Another construction of a string
  theory with carrollian worldsheet and relativistic target spacetime
  was obtained in \cite{Parekh:2023xms} via current-current
  deformations of the usual bosonic string.}  Ambitwistor strings are
furthermore central to the celestial approach to flat space
holography, where it was shown in in~\cite{Adamo:2021zpw,
  Adamo:2022wjo} that the worldsheet operator product expansions
(OPEs) of vertex operators of four-dimensional ambitwistor string
theory generate the coefficients of celestial OPEs, and thus provides
a dynamical principle for computing these. Moreover, by looking at
strings near black hole horizons as seen by an observer at infinity,
one finds that the effective description of these strings acquires a
carrollian target spacetime, where the distance to the horizon emerges
as an ``effective'' speed of light~\cite{Hartong:2023yxo,
  Hartong:2024ydv}. In addition to this, carrollian strings show up in
a vast duality web comprising ordinary relativistic strings,
ambitwistor strings, and the Gomis--Ooguri string~\cite{Blair:2023noj,
  Gomis:2023eav, Blair:2024aqz}, which was shown to provide a novel
and deep reinterpretation of the anti-de Sitter/conformal field theory
(AdS/CFT) correspondence, where the field theory dual to gravity in
the bulk resides in an asymptotic region rather than on the
boundary. Within this interpretation, the dual field theory takes the
form of a nonrelativistic matrix theory arising as a near-BPS
limit. In particular, this perspective allows for immediate
generalisation by taking multiple near-BPS limits, leading to a
cascade of novel holographic dualities, most of which have a
nonlorentzian flavour. Very recently, it was shown
in~\cite{Blair:2025nno} (see also~\cite{Argandona:2025jhg} for related
results and~\cite{Fontanella:2025tbs} for another approach in this
context) that the same procedure that leads to a nonrelativistic
holographic description in the context of anti-de Sitter spacetime
suggests that a holographic description of de Sitter spacetime becomes
carrollian.  In particular, this involves $p$-brane generalisations of
carrollian geometry and the formulation naturally leads to string
worldsheet actions that probe such geometries.  While in the setup of
\cite{Blair:2023noj, Gomis:2023eav, Blair:2024aqz, Blair:2025nno} the
strings are not the fundamental (light) degrees of freedom, the
procedure nevertheless leads to interesting novel classes of string
theories. The present paper focuses on strings that probe ``particle''
(or $0$-brane) carrollian geometry, i.e., the carrollian
string.\footnote{On such ``particle'' carrollian backgrounds, the
  fundamental degrees of freedom are D-instantons rather than
  strings.}

Carrollian strings of the type we consider have both a carrollian
worldsheet and a carrollian target spacetime and first
appeared\footnote{The phase space form of this action in a flat target
  spacetime was derived in~\cite{Cardona:2016ytk}.}
in~\cite{Blair:2023noj,Gomis:2023eav,Hartong:2024ydv} (see
also~\cite{Casalbuoni:2024jmj,Harksen:2024bnh} for other recent work
on carrollian strings).  Like null
strings~\cite{Isberg:1993av,Bagchi:2013bga,Bagchi:2020fpr}, and
therefore in particular like the ambitwistor
string~\cite{Casali:2016atr}, the symmetry algebra of the worldsheet
CFT of the carrollian string is BMS$_3$. This algebra is an abelian
extension of the Witt algebra, and it has the following nonzero
brackets
\begin{equation}
\label{eq:BMS-intro}
    [L_n,L_m] = (n-m)L_{m+n},\qquad [L_n,M_m] = (n-m)M_{m+n},
\end{equation}
where we did not include central charges.

Using BRST methods, we will quantise the carrollian bosonic
string.\footnote{Quantisation of the supersymmetric version of the
  carrollian string was considered in~\cite{Chen:2025gaz}.}
This is made possible by recent advances in the understanding of BRST
quantisation of chiral BMS-like
theories~\cite{Figueroa-OFarrill:2024wgs}, and we shall make heavy use
of the machinery developed in that work. In
particular,~\cite{Figueroa-OFarrill:2024wgs} considers a one-parameter
family of BMS-like algebras $\g_\lambda$, which for $\lambda = -1$
reduces to the BMS$_3$ algebra~\eqref{eq:BMS-intro}. By reformulating
the description of $\g_\lambda$ in terms of the operator product
expansion of fields depending on a formal variable $z$, the BRST
operator for all $\g_\lambda$ is realised as a semi-infinite
differential of $\g_\lambda$. The realisation that the BRST cohomology
of a CFT in certain cases coincide with the semi-infinite cohomology
of the symmetry algebra goes back to~\cite{MR865483}.

\subsection*{Overview}

This paper is organised as follows. In
Section~\ref{sec:carroll-string} we study the sigma model description
of the carrollian bosonic string. Since both the worldsheet and target
spacetime geometries are carrollian, we start in
Section~\ref{sec:car-geom} with a brief review of carrollian geometry
using the language of $G$-structures, including a review of the
intrinsic torsion of carrollian $G$-structures. This is followed in
Section~\ref{sec:action-and-syms} by a discussion of the carrollian
string sigma model action and its symmetries, which include Carroll
boosts and Weyl rescalings. In Section~\ref{sec:gauge-fixing-WS} we
show that any two-dimensional carrollian surface (e.g., the
worldsheet) admits local coordinates where the carrollian structure is
``flat'' (to be defined below).  We only treat the case of the target
being Carroll spacetime, for which the gauge-fixed action is given by
\begin{equation*}
  S[X^i,\tilde P_i,X^0] = \int_\Sigma \dd t\dd x \left[ \tfrac12 \tau^2 \left(  \D_t X^i \tilde P_i +   \D_x X^i \D_x X^i \right) + \tfrac12 \D_t X^0\D_t X^0 \right],
\end{equation*}
where $\tau$ is the string tension, $X^0, X^i$ are the embedding
fields and $\tilde P_i$ are Lagrange multipliers imposing the
constraint $\D_t X^i = 0$.  The infinitesimal generators of residual
diffeomorphisms satisfy the (extended) BMS$_3$ algebra. In
Section~\ref{sec:mode-exps-and-constraints}, we derive the on-shell
mode expansions, along with the Poisson brackets between the currents
that originate from the constraints.  The modes satisfy the BMS$_3$
algebra in equation~\eqref{eq:BMS-intro}.  We introduce an auxiliary
complex variable (not to be interpreted as a worldsheet coordinate) to
allow us to rephrase the mode algebra in the language of operator
product expansions (OPEs).  We introduce $\Pi_i = \tfrac12
\tau^2\tilde P_i$ and $\Pi_0 = \D_t X^0$, satisfying the OPE
\begin{equation*}
  \Pi_\mu(z) X^\nu(w) = \frac{\delta_\mu^\nu \1}{z-w} + \reg,
\end{equation*}
with $\mu = 0,1,\dots,D$, in terms of which the BMS$_3$ generators take the form
\begin{equation*}
  T = \d X^\mu \Pi_\mu \qquad\text{and}\qquad M = \tfrac12 \Pi_0 \Pi_0 - \tfrac12 \tau^2 \delta_{ij} \d X^i \d X^j.
\end{equation*}
These fields satisfy the following OPEs:
\begin{equation*}
  \begin{aligned}
    T(z) T(w) &= \frac{\tfrac12 c_L \1}{(z-w)^4} + \frac{2 T(w)}{(z-w)^2} + \frac{\d T(w)}{z-w} + \reg,\\
    T(z) M(w) &= \frac{\tfrac12 c_M \1}{(z-w)^4} + \frac{2 M(w)}{(z-w)^2} + \frac{\d M(w)}{z-w} + \reg,\\
    M(z) M(w) &= \reg,
  \end{aligned}
\end{equation*}
with $c_L = 2(D+1)$ and $c_M = 0$.

Having investigated classical aspects of the carrollian string, we
then turn our attention to its BRST quantisation in
Section~\ref{sec:brst-quantisation}.  To set the stage, we first
provide a brief recap of BRST quantisation in
Section~\ref{sec:brief-recap-brst}.  For the BMS$_3$ algebra with
generating fields $T(z)$ and $M(z)$, the BRST operator is the zero
mode of the current
\begin{equation*}
  J = c T + C M + \tfrac12 c \Tgh + \tfrac12 C \Mgh,
\end{equation*}
where we have introduced fermionic ghost systems $(b,c)$ and $(B,C)$ of
weights $(2,-1)$, in terms of which
\begin{equation*}
  \Tgh = -2 b \d c - \d b c - 2 B \d C - \d B C
  \qquad\text{and}\qquad \Mgh = - 2 B \d c - \d B c.
\end{equation*}
The BRST operator squares to zero when $c_M = 0$ and $c_L = 52$, which
corresponds to a $26$-dimensional Carroll spacetime target.  In
Section~\ref{sec:brst-complex} we construct the BRST complex in
momentum space for the quantum carrollian string.  In contrast with
the case of the standard bosonic strings, which have a Kaluza--Klein
tower of massive states in the spectrum, the BRST complex of the
carrollian string is finite-dimensional, leading to a
finite-dimensional spectrum.  To compute the cohomology it is
convenient to first consider the relative BRST subcomplex (the kernel
of the Virasoro antighost zero mode $b_0$) and we discuss this
subcomplex explicitly in Section~\ref{sec:relative-subcomplex}, where
we also show that the absolute and relative complexes are related by
the short exact sequence~\eqref{eq:ses-brst} of differential
complexes, inducing a long exact sequence~\eqref{eq:les} in cohomology
which will be exploited later, once the relative cohomology has been
calculated.

The relative BRST cohomology is worked out in
Section~\ref{sec:relat-brst-cohom}, with each ghost number having its
own dedicated story in Sections~\ref{sec:hrel0}--\ref{sec:hrel5}, with
a summary provided in Section~\ref{sec:summary}. We find that there is
no cohomology unless the ``energy'' $p_0=0$. We therefore must
distinguish two cases: the generic case $p = (0, \p)$ with
$\p \neq \bzero$ and the degenerate case where $p= 0$. We expect from
both critical and noncritical bosonic strings that there should be an
enhancement of the cohomology at $p=0$ and this is indeed what
happens. The calculations are summarised in Table~\ref{tab:Hrel0p} for
$\p \neq \bzero$ and in Table~\ref{tab:Hrelpeqzero} for $p=0$, which
list the isomorphism class of the cochains, cocycles, coboundaries and
cohomologies as representations of the stabiliser of the momentum $p$.

In Section~\ref{sec:from-relat-absol} we exploit the long exact
sequence~\eqref{eq:les} relating the relative and absolute BRST
cohomologies to calculate the absolute BRST cohomology.  Once more
each ghost number has its own section, from Section~\ref{sec:H1p} to
Section~\ref{sec:H5p}, and a summary in
Section~\ref{sec:summary-absolute-cohomology}. The calculations are
summarised in Table~\ref{tab:Habs0p} for $\p \neq \bzero$ and in
Table~\ref{tab:Habspeqzero} for $p=0$.

Having derived the BRST cohomology of the quantum carrollian string,
in Section~\ref{sec:interpr-terms-carr} we interpret the cohomology in
terms of unitary irreducible representations (UIRs) of the
$26$-dimensional Carroll group. We find that the cohomology at $p=0$
gives rise to finite dimensional UIRs where the boosts and
translations act trivially, whereas the cohomology at $\p \neq \bzero$
gives infinite-dimensional UIRs which are carried by square-integrable
sections of homogeneous vector bundles over the $24$-sphere,
associated to the $\SO(24)$-representations given by the cohomology.
In Section~\ref{sec:interpr-terms-carr-1} we study the geometric
interpretation of the cohomology for the case $\p \neq \bzero$.  We
find that most of the cohomology classes can be accounted for as
deformations of the carrollian structure of Carroll spacetime,
augmented by a Kalb--Ramond field, which although vanishing in Carroll
spacetime, can in principle be turned on and to which the string may
couple.  There is the perhaps puzzling absence of a cohomology class
for the first-order deformation of the carrollian vector field along
itself. This geometric cohomology appears with four-fold multiplicity:
the different ``pictures'' corresponding to the ghost zero modes.  We
conclude the paper with Section~\ref{sec:conclusions}, where we make
some conclusions and point to future work we find interesting. The
paper has one appendix (Appendix~\ref{sec:acti-brst-diff}) where we
record some useful formulae for the action of the BRST differential on
fields that were used in the calculation of the relative BRST
cohomology in Section~\ref{sec:relat-brst-cohom}.

\section{The carrollian string}
\label{sec:carroll-string}

In this section, we discuss the carrollian string, which was
discovered in the context of decoupling limits of string theory
in~\cite{Blair:2023noj,Gomis:2023eav}, using a series of duality
transformations, and subsequently
in~\cite{Bagchi:2023cfp,Bagchi:2024rje} by looking at strings near
black holes,\footnote{In~\cite{Bagchi:2023cfp,Bagchi:2024rje}, the
  carrollian string we consider here was called the ``electric
  carrollian string''.} where the distance to the horizon provides an
effective speed of light, allowing the application of the expansion
techniques for string theory developed
in~\cite{Hartong:2021ekg,Hartong:2022dsx,Hartong:2024ydv}.  Before
that we review the basic notions of carrollian geometry which will
help us understand the carrollian string sigma model.

\subsection{Brief review of carrollian geometry}
\label{sec:car-geom}

A carrollian geometry on a $(D+1)$-dimensional manifold $M$ consists
of a nowhere-vanishing vector field $\kappa\in \mathscr{X}(M)$, called
the \textbf{carrollian vector field}, and a degenerate
positive-semidefinite tensor $h\in \Gamma(\odot^2 T^*M)$ that, roughly
speaking, measures spatial distances and is therefore known as the
\textbf{carrollian ruler}. The carrollian vector field lies in the
radical of $\kappa$ --- i.e., $h(\kappa,-) = 0$ --- and moreover spans
it. Together, the pair $(\kappa,h)$ make up a \textbf{carrollian
  structure} on $M$.

Alternatively, a carrollian structure on $M$ is a reduction of the
structure group of the frame bundle.  To see how this arises, we can
always choose local frames $(e_0,e_1,\dots,e_D)$ where $e_0 = \kappa$
(restricted to the domain of the local frame) and $e_1,\dots,e_D$ is
an orthonormal frame for $h$: $h(e_i, e_j) = \delta_{ij}$.  On
overlaps, two such frames are related by a local transformation taking
values in the subgroup $G \subset \GL(D+1,\RR)$ which preserves the
vector $(1,0,\dots,0)^t \in \RR^{D+1}$ and the symmetric bilinear form
with matrix
\begin{equation}
  \begin{pmatrix}
    0 & 0 \\
    0 & I_D
  \end{pmatrix},
\end{equation}
where $I_D$ is the $D\times D$ identity matrix.  The subgroup $G$ is
the \textbf{homogeneous Carroll group} and is isomorphic to the
euclidean group $\Ort(D) \ltimes \RR^D$, where $\RR^D$ acts as
carrollian boosts.  As a subgroup of $\GL(D+1,\RR)$, the structure
group $G$ is given by matrices of the form\footnote{The sign of
  $\Lambda^t$ in~\eqref{eq:first-matrix} is chosen such that we get a
  plus sign in the transformation of the local one-form dual to the
  carrollian vector field (cf.~\eqref{eq:carroll-boosts}).}
\begin{equation}
\label{eq:first-matrix}
  \begin{pmatrix}
    1 & -\Lambda^t \\
    0 & A
  \end{pmatrix}
\end{equation}
where $\Lambda^t = (\Lambda^1,\dots,\Lambda^D)$ and $A^t A = I_D$.  Its Lie algebra $\g$
is given by matrices of the form
\begin{equation}
  \begin{pmatrix}
    0 & -\lambda^t \\
    0 & X
  \end{pmatrix}
\end{equation}
where again $\lambda^t = (\lambda^1,\dots,\lambda^D)$ and now $X^t = -
X$.

Choose a basis $J_{ij},B_i$ for $\g$ such that the above matrix has the
expansion
\begin{equation}
  \tfrac12 X^{ij} J_{ij} - \lambda^i B_i.
\end{equation}
Relative to this basis, the Lie bracket of $\g$ is given by
\begin{equation}
  \begin{aligned}\relax
    [J_{ij},J_{k\ell}] &= \delta_{jk} J_{i\ell} - \delta_{ik} J_{j\ell} - \delta_{j\ell} J_{ik} + \delta_{i\ell} J_{jk}\\
    [J_{ij}, B_k] &= \delta_{jk} B_i - \delta_{ik} B_j\\
    [B_i, B_j] &= 0.
  \end{aligned}
\end{equation}
The action of $\g$ on the local frame is given by
\begin{equation}
  \begin{aligned}
    J_{ij} \cdot e_0 &= 0\\
    J_{ij} \cdot e_k &= \delta_{jk} e_i - \delta_{ik} e_j\\
  \end{aligned}
  \qquad\qquad
  \begin{aligned}
    B_i\cdot e_0 &= 0\\
    B_i \cdot e_j &= \delta_{ij} e_0\\
  \end{aligned}
\end{equation}
from where we deduce the action of $\g$ on the canonically dual
coframe $(\theta^0,\theta^i)$:
\begin{equation}
  \begin{aligned}
    J_{ij} \cdot \theta^0 &= 0\\
    J_{ij} \cdot \theta^k &= \left( \delta_i^k \delta_{j\ell}  - \delta_j^k \theta_{i\ell} \right)\theta^\ell\\
  \end{aligned}
  \qquad\qquad
  \begin{aligned}
    B_i \cdot \theta^0 &= - \delta_{ij} \theta^j\\
    B_i \cdot \theta^j &= 0.\\
  \end{aligned}
\end{equation}

We introduce the local one-form $\xi = \theta^0$ and the local
symmetric $(2,0)$-tensor field $\gamma = \delta^{ij} e_i \otimes e_j$.
On overlaps, both are invariant under rotations, but they transform
nontrivially under the carrollian boosts:
\begin{equation}  
  B_i \cdot \xi = - \delta_{ij} \theta^j \qquad\text{and}\qquad B_i
  \cdot\gamma = \kappa \otimes e_i + e_i \otimes \kappa.
\end{equation}

It is often desirable to describe these objects relative to a local
chart $x^\mu$, relative to which
\begin{equation}
  \kappa = \kappa^\mu \d_\mu, \quad h = h_{\mu\nu} dx^\mu dx^\nu,
  \quad \xi = \xi_\mu dx^\mu, \quad\text{and}\quad \gamma = \tfrac12
  \gamma^{\mu\nu} (\d_\mu \otimes \d_\nu + \d_\nu \otimes \d_\mu).
\end{equation}
These local functions $\kappa^\mu = e_0^\mu$, $\xi_\mu =
\theta^0_\mu$, $h_{\mu\nu} = \delta_{ij} \theta^i_\mu \theta^j_\nu$ and
$\gamma^{\mu\nu} = \delta^{ij} e_i^\mu e_j^\nu$ are subject to the
following equations, restating the fact that $(e_0,e_i)$ is a frame
in our $G$-structure and $(\theta^0,\theta^i)$ is the canonically dual
coframe:
\begin{equation}
  \xi_\mu \kappa^\mu = 1, \quad h_{\mu\nu} \kappa^\nu = 0, \quad
  \gamma^{\mu\nu} \xi_\nu = 0, \quad\text{and}\quad h_{\mu\rho}
  \gamma^{\rho\nu} + \xi_\mu \kappa^\nu = \delta_\mu^\nu.
\end{equation}

On overlaps, $\kappa^\mu$ and $h_{\mu\nu}$ are of course invariant,
but not so $\xi_\mu$ and $\gamma^{\mu\nu}$, which transform under
infinitesimal Carroll boosts as follows:
\begin{equation}
  \delta_C \xi_\mu = -\lambda^i B_i \cdot \theta^0_\mu = \lambda^i
  \delta_{ij} \theta^j_\mu = \lambda_\mu,
\end{equation}
which defines the local one-form $\lambda_\mu := \lambda^i \delta_{ij}
\theta^j_\mu$, and
\begin{equation}
  \delta_C \gamma^{\mu\nu} = -\lambda^k B_k \cdot (\delta^{ij} e_i^\mu
  e_j^\nu) = -\lambda^k \delta^{ij} (\delta_{ki} e_0^\mu e_j^\nu +
  \delta_{kj} e_i^\mu e_0^\nu ) = -2 \lambda^{(\mu} \kappa^{\nu)},
\end{equation}
where we have defined the local vector field $\lambda^\mu := \lambda^i
e_i^\mu$.  Since $\lambda^\mu = \gamma^{\mu\nu}\lambda_\nu$, we may
rewrite the previous equation also as
\begin{equation}
  \delta_C \gamma^{\mu\nu} =-2 \lambda_\rho \gamma^{\rho(\mu} \kappa^{\nu)}.
\end{equation}
In summary, in overlaps the local data $(\kappa,h,\xi,\gamma)$
transforms under local homogeneous Carroll transformations in such a
way that they are all rotationally invariant and under infinitesimal
Carroll boosts,
\begin{equation}
  \label{eq:carroll-boosts}
  \delta_C \kappa^\mu = 0, \quad \delta_C h_{\mu\nu} = 0, \quad
  \delta_C \xi_\mu =  \lambda_\mu \quad\text{and}\quad \delta_C
  \gamma^{\mu\nu} = -2 \lambda_\rho \gamma^{\rho(\mu} \kappa^{\nu)}.
\end{equation}

The Carroll $G$-structure is \textbf{integrable} if the adapted frames
can be chosen to be coordinate frames.  Equivalently, if we can find
local coordinates $x^\mu$ where
\begin{equation}
\label{eq:flat-carroll}
    \kappa^\mu = \delta^\mu_0,\quad \xi_\mu = \delta^0_\mu, \quad
    h_{\mu\nu} = \delta_\mu^i\delta_\nu^j\delta_{ij}
    \quad\text{and}\quad \gamma^{\mu\nu} =
    \delta^\mu_i\delta^{\nu}_j\delta^{ij}.
\end{equation}
This is roughly equivalent to the existence of a flat Cartan
connection for the $G$-structure, so that the carrollian manifold is
locally isomorphic to the Klein model, which here is Carroll
spacetime: namely, the homogeneous space $C/G$, where $C$ is the
Carroll group and $G$ the homogeneous Carroll group.

Carroll $G$-structures admit intrinsic torsion: a feature absent from
lorentzian geometry due to the existence and uniqueness of the
Levi-Civita connection.  The intrinsic torsion of a carrollian
structure $(\kappa, h)$ is given by $\eL_\kappa h$, which is analogous
to the extrinsic curvature of a non-degenerate hypersurface in a
(pseudo-)riemannian manifold.  Using that analogy, we may classify
\cite{Figueroa-OFarrill:2020gpr} carrollian $G$-structures as totally
geodesic ($\eL_\kappa h= 0$), totally umbilical ($\eL_\kappa h = f h$),
minimal ($\Tr \eL_\kappa h= 0$) or generic.  In general dimension,
these four classes are distinct, but in two dimensions they collapse
to only two, depending on whether or not $\eL_\kappa h = 0$.

\subsection{Action and worldsheet symmetries}
\label{sec:action-and-syms}

The action of the Carroll string on a carrollian background is
\cite{Gomis:2023eav,Bagchi:2024rje}
\begin{align}
\label{eq:carroll-action}
\begin{aligned}
    S &= \int_\Sigma \,\dvol(\Sigma)\left[  \tfrac12\tau^2 \left( \vv^\alpha \D_\alpha X^\mu \tilde P_\mu + h_{\mu\nu} \ee^\alpha\ee^\beta \D_\alpha X^\mu \D_\beta X^\nu \right) + \tfrac12 \xi_\mu \xi_\nu \vv^\alpha \vv^\beta \D_\alpha X^\mu\D_\beta X^\nu \right]+\int_\Sigma X^*B,
\end{aligned}
\end{align}
where $\tau$ is the string tension, $X^\mu$ is the embedding field and
$\tilde P_\mu$ is a spatial (i.e., $\kappa^\mu \tilde P_\mu = 0$)
Lagrange multiplier, and where we included a Wess--Zumino term involving
the pullback of the two-form Kalb--Ramond field $B_{\mu\nu}$ to the worldsheet. The 
pair $(\vv^\alpha,\ee^\alpha)$ with
$\alpha,\beta = 0,1$ is a carrollian frame on the worldsheet $\Sigma$
with coordinates $\sigma^\alpha$.  More precisely, $\vv^\alpha$ is the
carrollian vector field of the carrollian worldsheet, while
$\ee^\alpha$ is the completion of $\vv^\alpha$ to a frame on the
worldsheet. The dual coframe is given by
\begin{equation}
    (\tt_\alpha,\qq_\alpha),
\end{equation}
where
\begin{equation}
  \tt_\alpha \vv^\alpha = \qq_\alpha \ee^\alpha = 1,\qquad \qq_\alpha \vv^\alpha = \ee^\alpha \tt_\alpha = 0,\qquad \delta^\alpha_\beta = \vv^\alpha\tt_\beta + \ee^\alpha\qq_\beta.
\end{equation}
The worldsheet carrollian structure is therefore $(\vv, \hh = \qq^
2)$.  The volume form $\dvol(\Sigma)$ on the worldsheet is given by
\begin{equation}
  \label{eq:volume-form}
  \dvol(\Sigma) = \tt\wedge\qq = (\tt_0 \qq_1 - \tt_1 \qq_0) d\sigma^0\wedge d\sigma^1 =: \mathbb{E}\,d^2\sigma,
\end{equation}
where we have introduced $\mathbb{E}= \epsilon^{\alpha\beta}
\tt_\alpha \qq_\beta$.  The Carroll string action~\eqref{eq:carroll-action}
is invariant under target spacetime Carroll boosts, which act as
(cf.~\eqref{eq:carroll-boosts}) 
\begin{equation}
\label{eq:target-space-syms}
  \delta_C \tilde P_\mu = -2\tau^{-2}\vv^\alpha\D_\alpha X^\nu\xi_\nu \lambda_\mu,\qquad \delta_C \xi_\mu = \lambda_\mu,\qquad \delta_C X^\mu = 0.
\end{equation}
In addition to this, the worldsheet possesses $2$-dimensional
conformal carrollian (or BMS$_3$) symmetry. The worldsheet Carroll
boosts, parameterised by $\lambda_\Sigma$, act as
\begin{equation}
  \label{eq:WS-carroll}
  \delta_{C_\Sigma} \tt_\alpha = \lambda_\Sigma\qq_\alpha,\qquad
  \delta_{C_\Sigma}\qq_\alpha = \delta_{C_\Sigma} \vv^\alpha = 0,
  \qquad \delta_{C_\Sigma} \ee^\alpha = -\lambda_\Sigma
  \vv^\alpha,\qquad \delta_{C_\Sigma} \tilde P_\mu = 2\lambda_\Sigma
  \ee^\alpha \D_\alpha X^\nu h_{\nu\mu},
\end{equation}
where we wrote ``${C_\Sigma}$'' to emphasise that these are worldsheet
Carroll boost transformations in contrast to target spacetime Carroll
boosts $C$ that appear in~\eqref{eq:carroll-boosts} and above. The 
embeddings scalars $X^\mu$ are invariant under worldsheet Carroll 
boosts, i.e., $\delta_{C_\Sigma} X^\mu = 0$.
The worldsheet Weyl symmetry $W_\Sigma$ with parameter $\omega_\Sigma$
acts as 
\begin{align}
\label{eq:WS-Weyl}
\begin{aligned}
    \delta_{W_\Sigma} \tt_\alpha &= \omega_\Sigma \tt_\alpha,& \delta_{W_\Sigma} \qq_\alpha &= \omega_\Sigma \qq_\alpha,& \delta_{W_\Sigma} \vv^\alpha &= -\omega_\Sigma \vv^\alpha,\\
    \delta_{W_\Sigma} \ee^\alpha &= -\omega_\Sigma \ee^\alpha,& \delta_{W_\Sigma} \tilde P_\mu &= -\omega_\Sigma \tilde P_\mu, &\delta_{W_\Sigma} X^ \mu &= 0.
\end{aligned}
\end{align}
The determinant $\mathbb{E}$ defined in~\eqref{eq:volume-form} transforms under worldsheet Weyl symmetry as
\begin{equation}
    \delta_{W_\Sigma} \mathbb{E} = 2\omega_\Sigma \mathbb{E}.
\end{equation}
In the language of~\cite{deBoer:2021jej}, the Carroll
string~\eqref{eq:carroll-action} is made up of an electric
Carroll scalar field theory in the timelike direction and a magnetic
Carroll scalar field theory in the spacelike directions.

Taking the target to be the Carroll spacetime -- so that
\eqref{eq:flat-carroll} holds -- and taking the background Kalb--Ramond field
to vanish, i.e.,
\begin{equation}
    B_{\mu\nu} = 0,
\end{equation}
the Carroll string action is given by
\begin{equation}
  \label{eq:flat-string}
    S[\vv^\alpha,\ee^\alpha,X^i,\tilde P_i,X^0] = \int_\Sigma
    \dd^2\sigma\,\mathbb{E}\left[ \tfrac12 \tau^2 \left( \vv^\alpha \D_\alpha
        X^i \tilde P_i +  \ee^\alpha\ee^\beta \D_\alpha X^i \D_\beta
        X^i \right) + \tfrac12 \vv^\alpha \vv^\beta \D_\alpha X^0 \D_\beta X^0 \right],
\end{equation}
where $\sigma^\alpha = (\sigma^0,\sigma^1)$ are the worldsheet
coordinates. These are local coordinates on the cylinder $\RR \times
S^1$, where $\sigma^0 \in \RR$ and $\sigma^1 \in
\RR/2\pi\ZZ$. We will assume closed string boundary conditions for the
embedding fields of the strings, i.e.,
\begin{equation}
\label{eq:boundary-conds-embedding-fields}
  X^\mu(\sigma^0,\sigma^1+2\pi) = X^\mu(\sigma^0,\sigma^1),
\end{equation}
where $\mu= (i,0)$.  The constraint imposed by $\tilde P_i$ tells us
that
\begin{equation}
\label{eq:constraint}
   \chi^i := \tfrac12\tau^2\vv^\alpha \D_\alpha X^i = 0.
\end{equation}
The variation of the action~\eqref{eq:flat-string} gives
\begin{equation}
    \delta S = \int_\Sigma \dd^2\sigma\,\mathbb{E}\left[ J_\alpha \delta \vv^\alpha + Q_\alpha \delta \ee^\alpha + \chi^i\delta \tilde P_i + \mathcal{E}_i \delta X^i + \mathscr{e}\delta X^0 \right],
\end{equation}
where, since we are dealing with closed strings, we discarded boundary
terms. The currents obtained by varying the carrollian data on the
worldsheet are given by
\begin{equation}
\label{eq:currents}
\begin{aligned}
    J_\alpha &= -\tfrac12\tau^2\tt_\alpha \ee^\beta \ee^\gamma \D_\beta X^i \D_\gamma X^i + \tfrac12\tau^2 \qq_\alpha \ee^\beta \D_\beta X^i \tilde P_i - \tfrac{1}{2} \tt_\alpha \vv^\beta \vv^\gamma \D_\beta X^0 \D_\gamma X^0 + \D_\alpha X^0 \vv^\beta \D_\beta X^0 ,\\
    Q_{\alpha} &= \tfrac12\tau^2( 2\tt_\alpha\vv^\gamma + \qq_\alpha \ee^\gamma )\ee^\beta \D_\beta X^i \D_\gamma X^i - \tfrac12\tau^2 \qq_\alpha \vv^\beta \D_\beta X^i \tilde P_i - \tfrac{1}{2} \qq_\alpha \vv^\beta \vv^\gamma \D_\beta X^0 \D_\gamma X^0 ,
\end{aligned}
\end{equation}
where we used that
\begin{equation}\label{eq:delta-E}
  \delta \mathbb{E} = - \mathbb{E}\big( \tt_\alpha\delta \vv^\alpha + \qq_{\alpha}\delta\ee^{\alpha} \big),
\end{equation}
as well as
\begin{equation}
  \begin{aligned}
    \delta \tt_\alpha &= -\tt_\alpha \tt_\beta \delta \vv^\beta -  \qq_\alpha \tt_\beta \delta \ee^\alpha\\
    \delta \qq_\alpha &= -\qq_\alpha \qq_\beta \delta \ee^\beta -  \tt_\alpha \qq_\beta \delta \vv^\alpha.
  \end{aligned}
\end{equation}
In particular, these allow us to derive~\eqref{eq:delta-E} by using $\tt \wedge \qq = \mathbb{E} d^2\sigma$. The equations of motion for the embedding fields are
\begin{equation}
  \label{eq:eoms}
  \mathcal{E}_i = -\tfrac12\tau^2 \mathbb{E}^{-1} \D_\alpha (\mathbb{E}\vv^\alpha \tilde P_i) -\tau^2 \mathbb{E}^{-1} \D_\alpha (\mathbb{E}\ee^\alpha \ee^\beta \D_\beta X_i) ,\qquad \mathscr{e} = -\mathbb{E}^{-1}\D_\alpha(\mathbb{E}\vv^\alpha \vv^\beta \D_\beta X^0).
\end{equation}
On-shell configurations of~\eqref{eq:flat-string} satisfy the equations of motion for the fields, i.e.,
\begin{equation}
\label{eq:on-shell-conf}
    \begin{aligned}
        \chi^i &= \mathcal{E}_i = \mathscr{e} = 0,\\
        J_\alpha &= Q_\alpha = 0.
    \end{aligned}
\end{equation}
Later, when we fix the gauge on the worldsheet, the conditions on the second line will turn into constraints. 

The currents $J_\alpha$ and $Q_\alpha$ are constrained by worldsheet Carroll boost invariance~\eqref{eq:WS-carroll} and Weyl invariance~\eqref{eq:WS-Weyl}. The Ward identities for these symmetries imply that
\begin{equation}
\begin{aligned}
    C_\Sigma\text{-symmetry:}&\qquad  \vv^\alpha Q_\alpha = 2\chi^i \ee^\alpha\D_\alpha X^i ,\\
    W_\Sigma\text{-symmetry:}&\qquad  \vv^\alpha J_\alpha + \ee^\alpha Q_\alpha =  -\chi^i\tilde P_i.
\end{aligned}
\end{equation}
These Ward identities are always satisfied, and, in particular, automatically hold when using the explicit expressions for the currents we found above. The energy-momentum tensor $T^\alpha{_\beta}$ of a two-dimensional conformal carrollian field theory is defined as
\begin{equation}
\label{eq:EMT-def}
    T^\alpha{_\beta} = T^\alpha \tt_\beta + \mathcal{T}^\alpha \qq_\beta,
\end{equation}
in terms of the responses to varying the action with respect to $(\tt_\alpha,\qq_\alpha)$ as
\begin{equation}
\label{eq:current-var}
    \delta_{\tt,\qq} S = \int \dd^2\sigma \mathbb{E}\left( T^\alpha\delta\tt_\alpha + \mathcal{T}^\alpha \delta \qq_\alpha \right).
\end{equation}
These currents are subject to the Ward identities
\begin{equation}
\label{eq:WIs-for-Ts}
\begin{aligned}
    C_\Sigma\text{-symmetry:}&\qquad  T^\alpha\qq_\alpha = -2\chi^i \ee^\alpha\D_\alpha X^i ,\\
    W_\Sigma\text{-symmetry:}&\qquad  T^\alpha \tt_\alpha + \mathcal{T}^\alpha \qq_\alpha =  \chi^i\tilde P_i.
\end{aligned}
\end{equation}
Using the relations 
\begin{equation}
\begin{aligned}
    \delta \vv^\alpha &= -\vv^\alpha \vv^\beta\delta\tt_\beta - \ee^\alpha \vv^\beta\delta\qq_\beta,\\
    \delta \ee^\alpha &= -\vv^\alpha \ee^\beta \delta\tt_\beta - \ee^\alpha \ee^\beta \delta\qq_\beta,
\end{aligned}
\end{equation}
we find that
\begin{equation}
\begin{aligned}
    T^\alpha &= -\vv^\alpha \vv^\beta J_\beta - \ee^\alpha \vv^\beta Q_\beta,\\
    \mathcal{T}^\alpha &= -\vv^\alpha \ee^\beta J_\beta - \ee^\alpha \ee^\beta Q_\beta,
\end{aligned}
\end{equation}
and hence the energy-momentum tensor written in terms of $J_\alpha$ and $Q_\alpha$ is
\begin{equation}
\label{eq:EMT-relation}
    T^\alpha{_\beta} = -\tt_\beta(\vv^\alpha \vv^\gamma J_\gamma + \ee^\alpha \vv^\gamma Q_\gamma) - \qq_\beta(\vv^\alpha \ee^\gamma J_\gamma +\ee^\alpha \ee^\gamma Q_\gamma).
\end{equation}
Since we will not need the explicit expression, we refrain from writing it.

\subsection{Fixing the gauge and residual gauge symmetries}
\label{sec:gauge-fixing-WS}

The carrollian structure on the worldsheet is given by $(\vv,\hh)$, where
\begin{equation}
    \hh_{\alpha\beta} = \qq_\alpha\qq_\beta,
\end{equation}
is the carrollian ruler on the worldsheet. As mentioned above, in two
dimensions there are only two intrinsic-torsion classes of carrollian
geometries~\cite{Figueroa-OFarrill:2020gpr}:
\begin{enumerate}
    \item[{(i)}] either the intrinsic torsion vanishes, $\eL_\vv \hh= 0$,
    \item[{(ii)}] or the intrinsic torsion is generic: $\eL_\vv \hh \neq 0$.
\end{enumerate}
We will see below that even if $(\vv,\hh)$ has non-vanishing intrinsic
torsion, we may find a Weyl-related carrollian structure
$(e^{-\omega_\Sigma}\vv, e^{2\omega_\Sigma}\hh)$ that has vanishing
intrinsic torsion.  We will then show that if $(\vv,\hh)$ has
vanishing intrinsic torsion there is a carrollian analogue of the
conformal gauge whose residual gauge symmetries are determined to be
isomorphic to the BMS$_3$ algebra.

We first notice that
\begin{align*}
  \eL_{\vv} \hh &= 2 \qq \eL_{\vv} \qq \\
                &= 2 \qq \iota_\vv d\qq & \tag{since $\iota_\vv \qq = 0$}\\
                &= 2 \qq (f \qq + g \tt) & \tag{since $(\tt,\qq)$ is a coframe}\\
                &= 2 f \qq^2. & \tag{since $g = \iota_\vv (f \qq + g \tt) = \iota_\vv \iota_\vv d\qq = 0$}
\end{align*}

Now let $\vv' = e^{-\omega_\Sigma} \vv$ and $\hh' = e^{2\omega_\Sigma}
\hh$ and let us calculate the intrinsic torsion of $(\vv', \hh')$:
\begin{align*}
  \eL_{\vv'} \hh' &= \eL_{\vv'} \left( e^{2\omega_\Sigma} \qq^2 \right)\\
                  &= 2 e^{2\omega_\Sigma} (\eL_{\vv'} \omega_\Sigma \qq^2 + \qq \eL_{\vv'} \qq)\\
                  &= 2 e^{\omega_\Sigma} \qq^2 \eL_{\vv} \omega_\Sigma + 2 e^{2\omega_\Sigma} \qq \eL_{\vv'} \qq & \tag{since $\eL_{\vv'} \omega_\Sigma = e^{-\omega_\Sigma} \eL_\vv \omega_\Sigma$}\\
  &= 2 e^{\omega_\Sigma} \qq^2 \eL_{\vv} \omega_\Sigma + 2 e^{\omega_\Sigma} \qq \eL_{\vv} \qq & \tag{since $\eL_{\vv'} \qq = \iota_{\vv'} d\qq = e^{-\omega_\Sigma} \iota_\vv d\qq$}\\
  &= 2 e^{\omega_\Sigma} \left(\eL_{\vv}\omega_\Sigma + f \right) \hh. & \tag{since $\iota_\vv d\qq = f \qq$}
\end{align*}
Therefore if we choose $\omega_\Sigma$ such that $\eL_{\vv}
\omega_\Sigma = -f$, we see that $\eL_{\vv'} \hh' = 0$.

Since the action~\eqref{eq:carroll-action} is invariant under
Weyl rescalings, we may assume without loss of generality that
$(\vv,\hh)$ has vanishing intrinsic torsion.

That being the case, we first use the flow-box theorem to choose
local coordinates $(t,y)$ adapted to the carrollian vector:
\begin{equation}
\label{eq:car-vec-field-gauge-fixed}
    \vv = \D_t.
\end{equation}
Since $\hh(\vv,-) = 0$, the carrollian ruler is
\begin{equation}
    \hh = F(t,y) dy^2.
\end{equation}
Since $F > 0$, we may assume that $F(t,y) = f(t,y)^2$ for a
nowhere-vanishing function $f$.  Since $\hh =
\qq^2$, it follows that $\qq = f(t,y) dy$.  Since the intrinsic
torsion vanishes, $\d_t f(t,y) = 0$ and hence $\qq = f(y) dy$, so that
$d\qq = 0$. Therefore $\qq$ is locally exact: $\qq = dx$ for some
$x(y)$.  In the new coordinates $(t,x)$ the carrollian structure is
given by $(\d_t, dx^2)$.

Any other adapted frame $(\d_t, \ee)$ will be of the form $\ee = \d_x
+ g(t,x) \d_t$ with canonically dual coframe $(\tt, dx)$ with $\tt =
dt - g(t,x) dx$.  Under \textit{finite} worldsheet Carroll boosts,
$\tt$ and $\ee$ transform according to
\begin{equation}
    \begin{aligned}
        \tt \to \tt + \Lambda_\Sigma\qq,\\
        \ee \to \ee - \Lambda_\Sigma \vv,
    \end{aligned}
\end{equation}
where $\Lambda_\Sigma(t,x)$ is the parameter of a finite worldsheet
Carroll boost. Choosing $\Lambda_\Sigma(t,x) = g(t,x)$ leads to
\begin{equation}
\label{eq:inv-gauge-fixed}
    \begin{aligned}
        \tt &= dt,\\
        \ee &= \d_x,
    \end{aligned}
\end{equation}
which completes the gauge fixing procedure for the
action~\eqref{eq:carroll-action}. Similar to the discussion around
Eq.~\eqref{eq:boundary-conds-embedding-fields}, the fact that we are
considering closed strings means that we take all functions to be
periodic in $x$ with period $2\pi$.

Going to conformal gauge only partially fixes the gauge: there remains
those diffeomorphisms that can be undone by Carroll boosts and Weyl
transformations. We may express the vector field $\zeta$ that
generates these residual diffeomorphisms as
\begin{equation}
  \zeta = \zeta^t(t,x) \D_t + \zeta^x(t,x) \D_x,
\end{equation}
where $\zeta^t$ and $\zeta^x$ are periodic in $x$ with period
$2\pi$. The explicit expression for the vector $\zeta$ generating the
residual diffeomorphisms are determined by solving the equations
\begin{equation}
\label{eq:residual-conditions}
    \begin{aligned}
        0 &= \eL_\zeta \vv - \omega^R_\Sigma \vv,\\
        0 &= \eL_\zeta \ee - \omega^R_\Sigma \ee - \lambda^R_\Sigma \vv,\\
        0 &= \eL_\zeta\qq + \omega^R_\Sigma \qq,\\
        0 &= \eL_\zeta \tt + \omega^R_\Sigma \tt + \lambda_\Sigma^R\qq,
    \end{aligned}
\end{equation}
where $\omega^R_\Sigma$ and $\lambda_\Sigma^R$ are the (periodic) Weyl
and boost parameters in terms of which we will express the residual
diffeomorphisms. Using the expressions $(\vv,\ee) = (\d_t,\d_x)$ and
$(\tt,\qq) = (dt, dx)$ for the gauge-fixed carrollian data, the
conditions in~\eqref{eq:residual-conditions} become
\begin{equation}
    \begin{aligned}
        \D_t\zeta^t &=  -\omega^R_\Sigma,\\
        \D_t\zeta^x &= 0 ,\\
        \D_x \zeta^t &= -\lambda_\Sigma^R,\\
        \D_x\zeta^x &= -\omega^R_\Sigma.
    \end{aligned}
\end{equation}
The second equation tells us that $\zeta^x$ is independent of $t$, and so we may write
\begin{equation}
    \zeta^x = \sum_{n\in \ZZ} a_n i e^{inx},
\end{equation}
while the fourth equation now restricts the form of the possible
residual Weyl transformations $\omega^R_\Sigma$ and relates it to
$\zeta^x$ as $\D_x\zeta^x(x) = -\omega^R_\Sigma$. The fact that the
parameters $\omega_\Sigma^R$ and $\lambda^R_\Sigma$ are restricted is
not surprising and also happens in the case of a lorentzian worldsheet.
The first equation then tells us that
\begin{equation}
    \zeta^t = \sum_{n\in \ZZ} (i n a_n) it e^{inx} + \sum_{n\in\ZZ}b_n i e^{inx},
\end{equation}
while the third equation relates the modes $b_n$ to
$\lambda^R_\Sigma$. Hence, we can write the generator of the residual
diffeomorphisms as
\begin{equation}
    \zeta = \sum_{n\in \ZZ} a_n L_n + \sum_{n\in \ZZ} b_n M_n,
\end{equation}
where we defined
\begin{equation}
    L_n := i e^{inx}(\D_x + int \D_t ),\qquad M_n := i e^{inx}\D_t.
\end{equation}
These satisfy the brackets
\begin{equation}
\label{eq:bms3-alg}
    \begin{aligned}\relax
        [L_n,L_m] &= (n-m)L_{m+n},\\
        [L_n,M_m] &= (n-m) M_{m+n},\\
        [M_n,M_m] &= 0.
    \end{aligned}
\end{equation}
This is the (centreless, extended) BMS$_3$ algebra. 

\subsection{On-shell mode expansions and constraints}
\label{sec:mode-exps-and-constraints}

In conformal gauge, the action for the carrollian string~\eqref{eq:flat-string} is
\begin{equation}
\label{eq:flat-string-gauge-fixed}
    S[X^i,\tilde P_i,X^0] = \int_\Sigma \dd t\dd x \left[ \tfrac12\tau^2 \left(  \D_t X^i \tilde P_i +   \D_x X^i \D_x X^i \right) + \tfrac12 \D_t X^0\D_t X^0 \right].
\end{equation}
The equations of motion for the fields that appear in the action above are (cf.~\eqref{eq:constraint} and~\eqref{eq:eoms})
\begin{equation}
\label{eq:eoms-gauge-fixed}
    \begin{aligned}
        \D_t X^i &= 0,\\
        \D_t^2 X^0 &= 0,\\
        \D_x^2 X^i &= -\tfrac{1}{2} \D_t \tilde P_i.
    \end{aligned}
\end{equation}
The canonical momenta conjugate to $X^\mu$ are given by
\begin{equation}
    \begin{aligned}
        \Pi_i = \pder{\mathcal{L}}{\D_t X^i} = \tfrac12\tau^2\tilde P_i,\qquad  
        \Pi_0 = \pder{\mathcal{L}}{\D_t X^0} = \D_t X^0 ,
    \end{aligned}
\end{equation}
where $\mathcal{L}$ is the lagrangian density of the
action~\eqref{eq:flat-string-gauge-fixed}, i.e., $S = \int_\Sigma  \dd
t\dd x \mathcal{L}$. The first equation implies that $X^i = X^i(x)$,
while the second tells us that
\begin{equation}
    X^0 = x^0(x) + t\Pi_0(x).
\end{equation}
The $t$-derivative of the third equation gives $\D_t^2 \Pi^i = 0$, and
the last equation in~\eqref{eq:eoms-gauge-fixed} now implies that
$\D_x^2 X^i = -\tfrac{1}{\tau^2} \D_t \Pi^i$, which we may combine to find
\begin{equation}
\label{eq:transverse-momentum}
    \Pi_i = \pi_i(x) - {t\tau^2}\D_x^2 X^i(x).
\end{equation}
In addition to the equations of motion in~\eqref{eq:eoms-gauge-fixed}, we must impose the constraints 
\begin{equation}
    J_\alpha = Q_\alpha = 0,
\end{equation}
where $J_\alpha$ and $Q_\alpha$ are given in~\eqref{eq:currents}. In conformal gauge, the components of these currents are
\begin{equation}
\label{eq:current-fields}
    \begin{aligned}
        J_t(x) &= -\tfrac12\tau^2 \D_x X^i(x) \D_x X^i(x) + \tfrac{1}{2}\Pi_0(x)^2,\\
        J_x(t,x) &= \D_x X^\mu(x) \Pi_\mu(t,x),\\
        Q_t &= 0,\qquad Q_x(x) = -J_t(x),
    \end{aligned}
\end{equation}
where we notice that $J_t(x)$ is time-independent. Furthermore, we note that the linear $t$-dependence in $J_x(t,x)$ is related to the $x$-derivative of $J_t$, i.e.,
\begin{equation}
\label{eq:Jx-Jt-rel}
    J_x(t,x) = J_x(0,x) + t \D_x J_t(x).
\end{equation}
The relation between the components of the currents $J_\alpha$ and $Q_\alpha$ and the components of the energy-momentum tensor is (cf.~\eqref{eq:EMT-relation})
\begin{equation}
    T^\alpha{_\beta} = \begin{pmatrix}
        T^0{_0} & T^0{_1}\\
        T^1{_0} & T^1{_1}
    \end{pmatrix} = -\begin{pmatrix}
        J_t & J_x\\
        0 & -J_t
    \end{pmatrix}.
\end{equation} 
The on-shell Ward identity for Weyl symmetry is given in~\eqref{eq:WIs-for-Ts} with $\chi^i = 0$ and reads
\begin{equation}
    T^\alpha \tt_\alpha + \mathcal{T}^\alpha \qq_\alpha =  0,
\end{equation}
and, using~\eqref{eq:EMT-def}, this implies that
\begin{equation}
    T^\alpha{_\beta} = 0.
\end{equation}

The nonzero equal-time Poisson brackets between $X^i$ and $Y$ and their respective conjugate momenta $\Pi_i$ and $\Pi_0$ are
\begin{equation}
\label{eq:canonical-brackets}
    \{X^\mu(t,x),\Pi_\nu(t,y)\} = \delta(x-y)\delta^\mu_\nu.
\end{equation}
Using~\eqref{eq:canonical-brackets}, we find that the brackets between the currents take the form
\begin{equation}
    \begin{aligned}
        \{J_x(t,x),J_x(t,y)\} &= 2J_x(t,x)\delta'(x-y) + 2\D_x J_x(t,x) \delta(x-y),\\
        \{J_x(t,x),J_t(x,y)\} &= 2J_t(t,x)\delta'(x-y) + 2\D_x J_t(t,x) \delta(x-y),\\
        \{J_t(t,x),J_t(t,x)\} &= 0.
    \end{aligned}
\end{equation}
where the prime denotes differentiation with respect to the argument, and where we used the identity
\begin{equation}
    F(t,y)\delta'(x-y) = \D_xF(t,x)\delta(x-y) + F(t,x)\delta'(x-y).
\end{equation}
We write the mode expansions of the currents in~\eqref{eq:current-fields} as
\begin{equation}
    \begin{aligned}
        J_t &= \sum_{n\in \ZZ} \MM_n e^{inx} ,\\
        J_x &= \sum_{n\in\ZZ}\LL_n(t) e^{inx}.
    \end{aligned}
\end{equation}
Now, given a Fourier series 
\begin{equation}
    f(x) = \sum_{n\in \ZZ} f_n e^{inx},
\end{equation}
the coefficient $f_m$ in front of $e^{imx}$ is given by
\begin{equation}
    f_m = \frac{1}{2\pi}\int_0^{2\pi} \dd x\, f(x) e^{-imx},
\end{equation}
which follows from the identity
\begin{equation}
\label{eq:mode-extraction}
    \frac{1}{2\pi}\int_0^{2\pi} \dd x\,e^{i(n-m)x} = \delta_{nm}.
\end{equation}
Using this, we find that the modes satisfy the algebra 
\begin{equation}
    \begin{aligned}
        \{ \LL_n(t),\LL_m(t) \} &= \frac{i}{2\pi}(n-m)\LL_{m+n}(t),\\
        \{ \LL_n(t),\MM_m \} &= \frac{i}{2\pi}(n-m)\MM_{m+n},\\
        \{ \MM_n,\MM_m \} &= 0.
    \end{aligned}
\end{equation}
which we recognise as the BMS$_3$ algebra~\eqref{eq:bms3-alg} with the identifications 
\begin{equation}
\label{eq:bms-modes}
    \LL_n(t) = \tfrac{i}{2\pi}L_n\,,\qquad \MM_n = M_n.
\end{equation}
We can express the embedding fields and their associated momenta as Fourier series by writing
\begin{subequations}
\label{eq:mode-exps}
    \begin{align}
        X^i &= \sum_{n\in \ZZ}x^i_n e^{inx},\\
        X^0 &= \sum_{n\in\ZZ}\big(x^0_n + t(\pi_0)_n\big)e^{inx},\\
        \Pi_i &=\sum_{n\in\ZZ}\big((\pi_i)_n + tn^2\tau^2\delta_{ij}x^j_n\big)e^{inx},\label{eq:Pi-i}\\
        \Pi_0 &=\sum_{n\in\ZZ} (\pi_0)_ne^{inx},
    \end{align}
\end{subequations}
where we used~\eqref{eq:transverse-momentum} in~\eqref{eq:Pi-i}. For the modes, which may be extracted from~\eqref{eq:mode-exps} using~\eqref{eq:mode-extraction}, the Poisson brackets~\eqref{eq:canonical-brackets} lead to
\begin{equation}
    \label{eq:mode-brackets}
    \{ x^\mu_n,(\pi_\nu)_m \} = \tfrac{1}{2\pi}\delta^\mu_\nu\delta_{m+n,0}.
\end{equation}
Next, we express the modes $L_n$ and $M_n$ defined in~\eqref{eq:bms-modes} in terms of the modes that appear in~\eqref{eq:mode-exps}. Due to~\eqref{eq:Jx-Jt-rel}, we only need $L_n(0)$, leading to
\begin{equation}
\begin{aligned}
    L_n(0) &= 2\pi \sum_{k\in\ZZ} k x^\mu_k(\pi_\mu)_{n-k},\\
    M_n &= -\tau^2 \sum_{k\in \ZZ} k(n-k) x^i_k x^i_{n-k} + \frac{1}{2}\sum_{k\in \ZZ} (\pi_0)_k (\pi_0)_{n-k},\\
    L_n(t) &= L_n(0) + 2\pi t n M_n.
\end{aligned}
\end{equation}
We then find that
\begin{equation}
\begin{aligned}
    \{L_n(t),M_m \} &= \{L_n(0),M_m\} = (n-m)M_{m+n},\\
    \{L_n(0),L_m(0) \} &= (n-m) L_{m+n}(0),\\
    \{ L_n(t),L_m(t) \} &= \{L_n(0),L_m(0) \} + 2\pi t\big( n\{ M_n,L_m(0) \} + m \{L_n(0),M_n\} \big)\\
    &= (n-m) L_{m+n}(t),
\end{aligned}
\end{equation}
once more recovering the (centreless, extended) BMS$_3$ algebra~\eqref{eq:bms3-alg}.

It is well known (see, e.g., \cite[Theorem~2.3]{GaoJiangPei}) that
this algebra admits a two-parameter family of central extensions (up
to equivalence). Indeed, equivalence classes of central extensions of
a Lie algebra $\g$ are classified by the second Chevalley--Eilenberg
cohomology $\sH^2(\g)$ with values in the ground field thought of as a
trivial representation. For $\g$ the centreless extended BMS$_3$
algebra, one has $\dim\sH^2(\g) = 2$ with representative $2$-cocycles
$\gamma_1,\gamma_2$ with nonzero components \begin{equation}
  \gamma_1(L_n,L_m) = \tfrac1{12} n(n^2-1) \delta_{n+m,0}
  \qquad\text{and}\qquad
  \gamma_2(L_n,M_m) = \tfrac1{12} n(n^2-1) \delta_{n+m,0}.
\end{equation}

The maximally centrally extended BMS$_3$ algebra may be expressed in
terms of operator product expansions.  To this end we introduce a
formal complex variable $z$ and write down the following generating
functions for the modes $L_n(0)$, $M_n$:
\begin{equation}
  T(z) = \sum_{n \in \ZZ} L_n(0) z^{-n-2} \qquad\text{and}\qquad M(z)
  = \sum_{n \in \ZZ} M_n z^{-n-2}.
\end{equation}
The mode algebra is then equivalent to the following operator product
expansions
\begin{equation}\label{eq:bms-as-opes}
  \begin{aligned}
    T(z) T(w) &= \frac{\tfrac12 c_L \1}{(z-w)^4} + \frac{2 T(w)}{(z-w)^2} + \frac{\d T(w)}{z-w} + \reg,\\
    T(z) M(w) &= \frac{\tfrac12 c_M \1}{(z-w)^4} + \frac{2 M(w)}{(z-w)^2} + \frac{\d M(w)}{z-w} + \reg,\\
    M(z) M(w) &= \reg,
  \end{aligned}
\end{equation}
where $\1$ is the identity in the resulting vertex operator algebra.
In choosing to work with operator products instead of modes we are
actually tacitly making a choice of vacuum and hence of a normal
ordering prescription.  Here we are working with the ``flipped''
vacuum in the language of~\cite{Bagchi:2020fpr}.  When BMS$_3$ is
defined as a limit of two copies of the Virasoro algebra, there is not
a unique choice of vacuum (see, e.g., \cite{Bagchi:2020fpr}) and it
would be interesting to see how the spectrum of the Carroll string
depends on this choice.

We remark again that $z$ is a formal variable and not a coordinate on the
worldsheet.  Formally we can think of restricting the fields $J_t$ and
$J_x$ to the circle $t=0$ on the cylinder and then mapping it to the
unit circle on the punctured complex plane.  The Fourier series
of $J_x$ and $J_t$ extend meromorphically from the unit circle to the
fields $T(z)$ and $M(z)$ on the punctured complex plane.

What about the $t$-dependence of the modes $L_n(t)$?  This can be
understood as the result of a one-parameter family of automorphisms of
the operator product algebra: namely,
\begin{equation}
  T(z) \mapsto T(z) - 2 \pi t (z \d M(z) + 2 M(z)) \qquad\text{and}\qquad
  M(z) \mapsto M(z).
\end{equation}

The Carroll string gives a realisation of this algebra in terms of
$D+1$ bosonic BC systems of weights $(0,1)$: $(X^\mu,\Pi_\mu)$, whose
defining operator product expansions are
\begin{equation}
  \Pi_\mu(z) X^\nu(w) = \frac{\delta^\mu_\nu \1}{z-w} + \reg.
\end{equation}
These correspond to the mode algebra
\begin{equation}\label{eq:mode-algebra-can}
  [(\pi_\mu)_m, x^\nu_n] = \delta^\nu_\mu \delta_{n+m,0},
\end{equation}
where the modes are defined by
\begin{equation}
    X^\mu(z) = \sum_{n\in\ZZ} x^\mu_n z^{-n} \qquad\text{and}\qquad \Pi_\mu(z) = \sum_{n\in\ZZ} (\pi_\mu)_n z^{-n-1}.
\end{equation}
We then have that
\begin{equation}
  T(z) = \d X^\mu \Pi_\mu \qquad\text{and}\qquad M(z) =
  \tfrac12 \Pi_0 \Pi_0 - \tfrac12 \tau^2 \delta_{ij} \d X^i \d X^j,
\end{equation}
where all products are normal-ordered: $(AB)(w)$ is the limit as $z\to
w$ of the \emph{regular} part of the operator product expansion
$A(z)B(w)$.  The fields $T(z)$ and $M(z)$ just defined give a
realisation of the centrally extended BMS$_3$
algebra~\eqref{eq:bms-as-opes} with $c_L = 2(D+1)$ and $c_M=0$. In
particular, for the $D+1$ bosonic BC systems of weights $0$ and $1$,
which are, respectively, $X^\mu = (X^0,X^i)$ and $\Pi_\mu = (\Pi_0,\Pi_i)$, we have the following OPEs
\begin{equation}
    \begin{aligned}
        T(z)X^\mu(w) &= \frac{\D X^\mu(w)}{z-w} + \reg,\qquad T(z)\Pi_\mu(w) = \frac{\Pi_\mu(w)}{(z-w)^2} + \frac{\D \Pi_\mu(w)}{z-w} + \reg,\\
        M(z)X^0(w) &= \frac{\Pi_0(w)}{z-w} + \reg,\qquad M(z)X^i(w) = \reg,\\
        M(z)\Pi_0(w) &= \reg,\qquad M(z)\Pi_i(w) = - \frac{\tau^2 \D X^i(w)}{(z-w)^2} - \frac{\tau^2 \D^2 X^i(w)}{z-w} + \reg,
    \end{aligned}
\end{equation}
where the first line just says that $X^\mu$ are $\Pi_0$ are conformal
primaries with conformal weights $(0,1)$.  We will assume that $\tau
\neq 0$ and therefore we may set $\tau = 1$ via an automorphism of the
operator product expansion: $X^i \mapsto \frac1\tau X^i$ and $\Pi_i
\mapsto \tau \Pi_i$.  We choose to keep $\tau$ explicit for
pedagogical reasons.

The above realisation is reminiscent of the flat space ambitwistor
string\cite{Mason:2013sva}, which also realises the BMS$_3$ algebra
with $c_L = 2(D+1)$ and $c_M=0$. Using similar notation, the
ambitwistor string realisation consists of $D+1$ bosonic BC systems
$(X^\mu,\Pi_\mu)$ of weights $(0,1)$ in terms of which
\begin{equation}
\label{eq:ambi-twistor}
  T(z) = \d X^\mu \Pi_\mu \qquad\text{and}\qquad M(z) = -\tfrac12
  \eta^{\mu\nu} \Pi_\mu \Pi_\nu,
\end{equation}
where $\eta^{\mu\nu}$ is a lorentzian inner product.  The theories are
different: they have different global symmetries: Lorentz in the case
of the ambitwistor string and transverse rotations in the present
case.  In fact, the physical state condition in the ambitwistor string
gives a relativistic equation, for the Carroll string, we expect
Carroll-invariant field equations.

\section{BRST quantisation}
\label{sec:brst-quantisation}

In this section we set up the BRST quantisation of the carrollian
string.

\subsection{Brief recap of BRST quantisation}
\label{sec:brief-recap-brst}

BRST quantisation is a covariant quantisation scheme for theories
whose hamiltonian description has first-class constraints, such as
gauge theories and, especially, string theories.  In symplectic
geometry, first-class constraints are those defining coisotropic
submanifolds.

Notwithstanding the fact that the true phase space of such a theory is
the coisotropic reduction of the original phase space by the
first-class constraints, it is typically difficult to solve explicitly
the constraints and, even if one were able to, this often breaks
global symmetries which one would prefer to keep manifest. The idea
behind BRST quantisation is to implement the coisotropic reduction
\emph{after} quantisation, albeit of a slightly larger theory.  That
theory is obtained by the addition of ghost fields, say $(b_i, c^i)$,
with Poisson bracket $\{b_i,c^j\} = \delta_i^j$.  Before quantisation,
this large theory possesses a classical BRST differential: a function
$Q = c^i \varphi_i + \cdots$ in the larger phase space, where
$\varphi_i$ are the first-class constraints.  The associated hamiltonian
vector field to $Q$ is homological; that is,
$\{Q,\{Q,-\}\} = 0$ where $\{-,-\}$ is the Poisson bracket in the
larger theory.  This typically determines the form of the omitted
terms in the above expression for $Q$: e.g., if the constraints are
the components of the moment map due to a hamiltonian action of a Lie
group, so that $\{\varphi_i, \varphi_j\} = f_{ij}{}^k \varphi_k$, with
$f_{ij}{}^k$ the structure constants of the Lie algebra relative to an
inconsequential choice of basis, then
\begin{equation}
  Q = c^i \varphi_i - \tfrac12 f_{ij}{}^k c^i c^j b_k,
\end{equation}
which obeys $\{Q,Q\} = 0$.  This fact is simply a reflection that on
the larger phase space we again have a hamiltonian action of the Lie
group with the ghosts transforming under the adjoint ($c^i$) and
coadjoint ($b_i$) representations.  The hamiltonian action of the Lie
algebra is given by the Poisson bracket with $\{Q,b_i\} = \varphi_i +
f_{ij}{}^k b_k c^j$.

BRST quantisation consists of quantising the larger phase space
(including the ghosts) and constructing a quantum BRST operator $\hat
Q$, whose classical limit is $Q$, and which now obeys $\hat Q^2 = 0$.
The existence of $\hat Q$ is not guaranteed and may constrain further
some of the parameters of the theory.  Assuming that $\hat Q$ exists, the
physical states of the theory are to be identified with its
cohomology: the kernel (cocycles) modulo the image (coboundaries).
The cohomology is typically graded by ghost number, with $b_i, c^i,
\varphi_i$ having ghost numbers $-1,1,0$, respectively.

In string theory, the gauge symmetries depend on the geometry of the
worldsheet: for example, they are given by the Virasoro algebra (for
open bosonic strings) or two copies of the Virasoro algebra (for
closed bosonic strings). For the Carroll string in this paper, the Lie
algebra of the gauge symmetry is the BMS$_3$ algebra, as is the case
also for the ambitwistor string. These are infinite-dimensional Lie
algebras which are $\ZZ$-graded, yet are finite-dimensional in each
degree. For such Lie algebras, the quantum BRST cohomology can be
re-interpreted \cite{MR865483} as the semi-infinite cohomology (in the
sense of Feigin \cite{MR740035}) relative to the centre. This latter
condition constrains the modules in which the cohomology takes values
and, in particular, determines the values of the central charges
which, in the context of string theory, often determines the critical
dimension; although the precise correspondence depends on the choices
of matter modules.

\subsection{The BRST complex}
\label{sec:brst-complex}

As we saw in Section~\ref{sec:carroll-string}, the quantum carrollian
string can be described by a meromorphic conformal field theory with
fields $X^\mu(z)$ and $\Pi_\mu(z)$, with $\mu \in \{0,\ldots,D\}$, of
conformal weights $0$ and $1$ respectively. The gauge symmetry algebra
is the extended BMS${}_3$ algebra with generators $T = \d X^\mu
\Pi_\mu$ and $M = \frac12 (\Pi_0)^2 - \tfrac12 \tau^2\delta_{ij} \d
X^i \d X^j$. These give a realisation of BMS${}_3$ with central
charges $c_L = 2 (D+1)$ and $c_M =0$. As shown in
\cite{Figueroa-OFarrill:2024wgs}, the BRST differential squares to
zero for the critical central charge $c_L = 52$, so that $D=25$ as for
the critical (relativistic) bosonic string. We shall assume that we
are at the critical central charge, so that the index $\mu$ runs from
$0$ to $25$, and the indices $i,j$ run from $1$ to $25$. The BRST
operator is the zero mode of the BRST current \cite{Figueroa-OFarrill:2024wgs}
\begin{equation}
\label{eq:brst-current}
  J = c T + C M + \tfrac12 c \Tgh + \tfrac12 C \Mgh,
\end{equation}
where we have introduced fermionic ghost systems $(b,c)$ and $(B,C)$ of
weights $(2,-1)$ with the operator product expansions
\begin{equation}
    b(z)c(w) = \frac{\1}{z-w} + \reg \qquad\text{and}\qquad B(z)C(w) = \frac{\1}{z-w} + \reg,
\end{equation}
in terms of which
\begin{equation}
  \Tgh = -2 b \d c - \d b c - 2 B \d C - \d B C
  \qquad\text{and}\qquad \Mgh = - 2 B \d c - \d B c,
\end{equation}
where again all products are normally ordered and associate to the
left, so that $ABC := A(BC)$, et cetera. 

Let us expand our fields as follows:
\begin{equation}\label{eq:mode-expansions}
  \begin{aligned}
    X^\mu(z) &= \sum_{n\in\ZZ} x^\mu_n z^{-n}\\
    b(z) &= \sum_{n\in\ZZ} b_n z^{-n-2}\\
    B(z) &= \sum_{n\in\ZZ} B_n z^{-n-2}
  \end{aligned}
  \qquad\qquad
  \begin{aligned}
    \Pi_\mu(z) &= \sum_{n\in\ZZ} (\pi_\mu)_n z^{-n-1}\\
    c(z) &= \sum_{n\in\ZZ} c_n z^{-n+1}\\
    C(z) &= \sum_{n\in\ZZ} C_n z^{-n+1}.
  \end{aligned}
\end{equation}
The above mode expansions allow us to recover the algebra of modes from the
operator product expansion via the formula
\begin{equation}\label{eq:modes-from-opes}
  [A_n,B_m] =\sum_{\ell \geq 1} \binom{n+h_A-1}{\ell -1} \left(
    [A,B]_\ell \right)_{m+n}, 
\end{equation}
with $\binom{n}{k} = \frac{n!}{k! (n-k)!}$ the usual binomial
coefficient and $A$ and $B$ are fields and $A$ has conformal weight
$h_A$. In writing the above, we used that the operator product expansion
of two fields $A(z),B(z)$ with conformal weights $h_A,h_B$ defines a
family of bilinear brackets $[-,-]_\ell$ via
\begin{equation}
    A(z)B(w) = \sum_{\ell\leq N}\frac{[A,B]_\ell(\omega)}{(z-w)^\ell},
\end{equation}
where $N\in \NN$ is a finite number usually given by the sum of the conformal 
weights of $A$ and $B$, while $A(z) = \sum_{n\in \ZZ} A_n z^{-n-h_A}$ and 
$B(z) = \sum_{n\in \ZZ} B_n z^{-n-h_B}$. In this way we arrive at the 
following nonzero (anti)commutators of modes: 
\begin{equation}
\label{eq:mode-brackets-opes}
    [(\pi_\mu)_m, x^\nu_n] = \delta_\mu^\nu \delta_{n+m,0}, \qquad
    [b_m, c_n] = \delta_{m+n,0} \qquad\text{and}\qquad [B_m, C_n] = \delta_{m+n,0}.
\end{equation}
The state-field correspondence says that a field $\phi(z)$ creates a
state by acting on the projectively invariant vacuum in the limit as
$z \to 0$. This vacuum is sometimes known as the ``flipped''
vacuum~\cite{Bagchi:2020fpr}.  The mode expansions then reveal which
modes annihilate the vacuum.  The modes which do not are the creation
modes.  For the fields in the theory, the creation modes are the
following:
\begin{equation}\label{eq:creation-modes}
  x^\mu_{n\leq 0}, \qquad (\pi_\mu)_{n\leq -1}, \qquad b_{n\leq -2},
  \qquad B_{n\leq -2}, \qquad c_{n\leq 1} \qquad\text{and}\qquad C_{n\leq 1 }.
\end{equation}
For example, letting $\ket{0}$ denote the vacuum, the creation modes of, say,
$B(z)$ are obtained in the following manner: all the terms in 
$B(z)\ket{0} = \sum_{n\in \ZZ}b_n z^{-n-2}\ket{0}$ with $-n-2 < 0 \iff
n\geq -1$ diverge as $z\to 0$, revealing those modes to be the
annihilation modes.  The remaining modes $b_{n\leq -2}$ are the
creation modes.

Let now $d$ denote the BRST differential: it is the zero mode $J_0$ of the
BRST current $J$ in equation~\eqref{eq:brst-current} or, equivalently,
the endomorphism resulting from the first order pole with the BRST
current $d=[J,-]_1$.  It follows that~\cite{Figueroa-OFarrill:2024wgs}
\begin{equation}
    db = \Ttot = T + \Tgh ,\qquad  dB = \Mtot = M + \Mgh.
\end{equation}

The zero mode $\Ltot_0$ of the total Virasoro element acts diagonally
on the modes of the fields in the theory:
$[\Ltot_0,\phi_n] = - n \phi_n$ for any field $\phi(z)$.  Since
$\Ltot_0 = [d,b_0]$ is BRST exact, the BRST complex decomposes into a
direct sum of subcomplexes corresponding to the eigenspaces of
$\Ltot_0$ with different eigenvalues.  A standard argument shows that
any BRST cocycle of non-zero conformal weight relative to $\Ltot_0$ is
also a coboundary, so that there is no cohomology except in the
subcomplex $\ker \Ltot_0$ corresponding to zero total conformal
weight.\footnote{This argument does not work with $\Mtot_0$ since
  $\Mtot_0$ is nilpotent and its action filters the complex instead of
  grading it.  Alas the ensuing filtration is increasing and it does
  not seem to lead to a useful spectral sequence.}  Indeed, if $\psi$
is a cocycle satisfying $\Ltot_0 \psi = h \psi$ with $h\neq 0$, then
$\psi = d\left( \frac1h b_0 \psi \right)$ is also a coboundary.
Furthermore, as we have learned from the BRST quantisation of
(particularly, noncritical) strings, it is actually convenient to
restrict further to the relative subcomplex
$\ker b_0 \cap \ker \Ltot_0$ and then to bootstrap the full cohomology
from that of this subcomplex.

Just as the original BRST complex computes the semi-infinite
cohomology of the BMS algebra relative its centre, the subcomplex
$\ker b_0 \cap \ker \Ltot_0$ computes the semi-infinite cohomology of
the BMS algebra relative to the subalgebra spanned by the centre and
the Virasoro zero mode.  The reason we restrict to this relative
subcomplex is that it is a natural way to break up the calculation.
Indeed, it is easier to calculate the relative cohomology and then by
a mixture of homological algebra and some more detailed calculation we
may extract the absolute cohomology (see
Section~\ref{sec:from-relat-absol}).

Let $\Crel^\bullet = \ker b_0 \cap \ker \Ltot_0$ be the relative
subcomplex, graded by ghost number, with $c, C$ having ghost number
$+1$, $b, B$ ghost number $-1$ and $X^\mu,\Pi_\mu$ ghost number
$0$.  Since in the BRST current the fields $X^\mu$ only appear
differentiated (i.e., $\d X^\mu$) the zero modes $x^\mu_0$ do not
enter in the calculations.  This suggests working in momentum space,
resulting in a differential complex involving no derivatives.  To that
end, we define
\begin{equation}
  \vac := \lim_{z\to 0} \exp(i p_\mu X^\mu(z)) \ket{0} = \exp(i p_\mu x^\mu_0) \ket{0},
\end{equation}
where now $(\pi_\mu)_0 \vac = i p_\mu \vac$, and we denote by
$\sC^\bullet(p)$ and $\Crel^\bullet(p)$ the absolute and relative
complexes at momentum $p$.  More concretely, $\sC^\bullet(p)$ is the
kernel of $\Ltot_0$ acting on the states obtained from $\vac$ by the
action of the following creation operators:
\begin{equation}
  \label{eq:creators}
  x^\mu_{n\leq -1}, \quad (\pi_\mu)_{n\leq -1}, \quad b_{n\leq -2},
  \quad B_{n\leq -2}, \quad c_{n\leq 1} \quad\text{and}\quad C_{n\leq 1},
\end{equation}
where we have omitted $(\pi_\mu)_0$ since it acts by scalar
multiplication by $i p_\mu$. Since $d$ has no $x^\mu_0$, it follows
that $[d,(\pi_\mu)_0] = 0$ and hence $\sC^\bullet(p)$ is indeed a
subcomplex. The relative subcomplex $\Crel^\bullet(p)$ is the graded
subspace $\ker b_0 \subset \sC^\bullet(p)$, which is a subcomplex
since $[d,b_0] = \Ltot_0$ and this acts trivially on $\sC^\bullet(p)$.
It consists of those states in the kernel of $\Ltot_0$ acting on the
states obtained from $\vac$ by the action of the creation modes in
equation~\eqref{eq:creators} \emph{with the exception of} $c_0$.

The relative and absolute complexes are related by a short exact
sequence of differential complexes.  Indeed, $\Crel^n(p) \subset
\sC^n(p)$ as the kernel of $b_0$ and the image of $b_0 : \sC^n(p) \to
\sC^{n-1}(p)$ actually lands in $\Crel^{n-1}(p)$ since $b_0^2 = 0$,
resulting in
\begin{equation}
  \label{eq:ses-brst}
  \begin{tikzcd}
    0 \arrow[r] & \Crel^n(p) \arrow[r] & \sC^n(p) \arrow[r,"b_0"] &
    \Crel^{n-1}(p) \arrow[r] \arrow[l, bend left=45, "c_0" pos=0.45] & 0,
  \end{tikzcd}
\end{equation}
where we have indicated by $c_0 : \Crel^{n-1}(p) \to \sC^n(p)$ the
splitting of the sequence. Indeed for every $\Psi \in \Crel^{n-1}(p)$,
$b_0 c_0 \Psi = [b_0,c_0]\Psi = \Psi$. By the Snake Lemma (see, e.g.,
\cite[Ch.III, §9]{MR1878556}) of homological algebra, the short exact
sequence~\eqref{eq:ses-brst} induces a long exact sequence relating
the relative and absolute cohomologies, which we will study further in
Section~\ref{sec:from-relat-absol}.

\subsection{The relative subcomplex}
\label{sec:relative-subcomplex}

It is convenient to decompose the relative subcomplex as
\begin{equation}
  \Crel^n(p) = \fM^n(p) \oplus C_0 \fM^{n-1}(p),
\end{equation}
where $\fM^\bullet(p)$ is the subspace obtained from $\vac$ by the
action of the creation operators in equation~\eqref{eq:creators} with
the exception of both $c_0$ and $C_0$.  Since those creation operators
all have positive conformal weight, except for $c_1$ and $C_1$, we see
that $\fM^n(p) \neq 0$ only for $n=0,1,2,3,4$.  It is easy to write
down a basis for $\fM^\bullet(p)$, where we list only those operators
which are meant to act on $\vac$:
\begin{equation}
\label{eq:M-modes}
  \begin{aligned}
    \fM^0 &: \1\\
    \fM^1 &: c_1 C_1 b_{-2}, c_1 C_1 B_{-2}, c_1 x^\mu_{-1}, C_1 x^\mu_{-1}, c_1 (\pi_\mu)_{-1}, C_1 (\pi_\mu)_{-1}\\
    \fM^2 &: c_1 c_{-1}, c_1 C_{-1}, C_1 c_{-1}, C_1 C_{-1}, c_1 C_1 x^\mu_{-2}, c_1 C_1 (\pi_\mu)_{-2}, c_1 C_1 x^\mu_{-1} (\pi_\nu)_{-1},\\
    & \quad  c_1 C_1 x^\mu_{-1} x^\nu_{-1}, c_1 C_1 (\pi_\mu)_{-1}(\pi_\nu)_{-1}\\
    \fM^3 &: c_1 C_1 c_{-2}, c_1 C_1 C_{-2}, c_1 C_1 c_{-1} x^\mu_{-1}, c_1 C_1 C_{-1} x^\mu_{-1}, c_1 C_1 c_{-1}(\pi_\mu)_{-1}, c_1 C_1 C_{-1} (\pi_\mu)_{-1} \\ 
    \fM^4 &: c_1 C_1 c_{-1} C_{-1}.
  \end{aligned}
\end{equation}

Every $\Psi \in \fM^\bullet(p)$ is of the form $\Psi = \exp(i p_\mu
x^\mu_0) \psi$, with $\psi \in \fM^\bullet(0)$.  Therefore to
calculate the action of the differential on the relative subcomplex,
it is enough to calculate the action on $\fM^\bullet(0)$ and then use
the following two lemmas.

\begin{lemma}\label{lem:dC0}
  Acting on the relative subcomplex,
  \begin{equation}
  \label{eq:[d,C_0]}
    [d,C_0] = 4 c_2 C_{-2} + 2 c_1 C_{-1} + 2 C_{1} c_{-1} + 4 C_2 c_{-2}.
  \end{equation}
\end{lemma}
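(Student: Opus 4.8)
The plan is to obtain $[d,C_0]$, a graded commutator (an anticommutator, since $d=J_0$ and $C_0$ are both odd), from a single operator product and then read off its zero mode. First I would apply the mode formula~\eqref{eq:modes-from-opes} with $A=J$ (the BRST current of~\eqref{eq:brst-current}, of weight $h_A=1$) and $B=C$, both taken at mode number zero. Since $\binom{n+h_A-1}{\ell-1}=\binom{0}{\ell-1}=\delta_{\ell,1}$, the sum collapses to a single term and
\begin{equation*}
  [d,C_0]=\{J_0,C_0\}=\bigl([J,C]_1\bigr)_0,
\end{equation*}
i.e.\ the zero mode of the simple pole of $J(z)C(w)$. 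Thus the whole problem reduces to identifying the field $dC:=[J,C]_1$ and extracting its zero mode.

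Next I would compute $J(z)C(w)$. Because $C$ has a singular product only with $B$ (the systems $(b,c)$ and $(B,C)$ are independent and $C$ is regular against itself and against the matter fields), the only contributions come from the $B$-dependent pieces of $J$, namely $-cB\d C-\tfrac12 c\,\d B\,C$ inside $\tfrac12 c\Tgh$ and $-CB\d c-\tfrac12 C\,\d B\,c$ inside $\tfrac12 C\Mgh$. Using $B(z)C(w)=\tfrac{1}{z-w}+\reg$, I would Wick-contract the single $B$ (or $\d B$) in each term against the external $C(w)$, Taylor-expand the surviving normal-ordered ghost bilinear from $z$ to $w$, and collect the coefficient of $\tfrac{1}{z-w}$. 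The four residues combine to
\begin{equation*}
  dC=[J,C]_1=c\,\d C-(\d c)\,C,
\end{equation*}
which is exactly the action one expects on the weight $-1$ ghost $C$ dictated by the $[L_n,M_m]\propto M_{m+n}$ structure of BMS$_3$ (there is no matter piece, since the matter sits in $dB=\Mtot$).

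Having the field, I would extract its zero mode. With $c(z)=\sum_n c_n z^{-n+1}$ and $C(z)=\sum_n C_n z^{-n+1}$, the weight $-1$ field $c\,\d C-(\d c)C$ has zero mode (products normal-ordered)
\begin{equation*}
  (dC)_0=\sum_{n\in\ZZ} 2n\, c_n C_{-n}.
\end{equation*}
It then remains to restrict this operator to the relative subcomplex. Here I would use that, by~\eqref{eq:M-modes}, every state of $\fM^\bullet(p)$ — and hence of $\Crel^\bullet(p)=\fM^\bullet(p)\oplus C_0\fM^{\bullet-1}(p)$ — is built from $\vac$ using ghost modes no more negative than $c_{-2},C_{-2},b_{-2},B_{-2}$, while $c_0$ never occurs. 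Consequently $c_n$ with $n\ge 3$ and $C_n$ with $n\ge 3$ act as annihilation operators with no available partner ($b_{-n}$, respectively $B_{-n}$, is absent) and so kill every such state; thus all terms with $|n|\ge 3$ drop out, and the $n=0$ term vanishes through its coefficient. Only $n=\pm1,\pm2$ survive, and after reordering the modes — using that $c$ and $C$ anticommute — these four terms give precisely $2c_1C_{-1}+2C_1c_{-1}+4c_2C_{-2}+4C_2c_{-2}$, as claimed.

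I expect the main obstacle to be the sign and normal-ordering bookkeeping in the Wick contraction of the second step: each contraction of a $B$ buried in a triple normal-ordered product drags the external fermion $C(w)$ past the intervening ghost to its right (producing a sign), and the $\d B$ terms additionally generate a double pole whose linear Taylor tail feeds back into the simple pole. Keeping these fermionic signs straight, and then matching the reordered surviving modes to the stated right-hand side, is the only delicate part; the collapse of the mode sum in the first step, the zero-mode extraction in the third, and the finiteness/truncation argument in the last are routine.
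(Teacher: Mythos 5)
Your proposal is correct and follows essentially the same route as the paper's proof: reduce $[d,C_0]$ to $([J,C]_1)_0$ via~\eqref{eq:modes-from-opes}, identify $[J,C]_1 = c\,\d C - \d c\, C$, extract the zero mode, and truncate to $-2 \leq n \leq 2$ on the relative subcomplex (your closed form $\sum_n 2n\, c_n C_{-n}$ is a correct repackaging of the paper's two separate sums, with the $c_0 C_0$ cancellation built in, and your reason for retaining $n=\pm 2$ matches the paper's remark about states containing $b_{-2}$, $B_{-2}$). The only inessential difference is that you compute the OPE by hand Wick contractions where the paper invokes \texttt{OPEdefs}.
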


\begin{proof}
 Since $d = J_0$, we may use equation~\eqref{eq:modes-from-opes} in
 order to reduce $[d,C_0]$ to operator product expansions:
  \begin{equation*}
    [d,C_0] = [J_0, C_0] = \sum_{\ell \geq 1} \binom{0}{\ell -1} \left(
      [J,C]_\ell \right)_0 = ([J,C]_1)_0.
  \end{equation*}
  A calculation (using the Mathematica package \texttt{OPEdefs}
  written by Kris Thielemans
  \cite{Thielemans:1991uw,Thielemans:1992mu,Thielemans:1994er}) gives
  \begin{equation*}
    [J,C]_1 = c \d C - \d c C
  \end{equation*}
 and hence
  \begin{equation*}
    [d,C_0]=\left( c \d C - \d c C \right)_0.
  \end{equation*}
  We now use the mode expansions~\eqref{eq:mode-expansions}, from
  where we see that
  \begin{equation*}
    \d c = \sum_{n\in\ZZ} (1-n) c_n z^{-n} \qquad\text{and}\qquad
    \d C = \sum_{n\in\ZZ} (1-n) C_n z^{-n}.
  \end{equation*}
  We now calculate, using that the modes of $c$ and $C$ anticommute,
  so that normal ordering is automatic:
  \begin{equation*}
    c\d C = \sum_{n,m\in\ZZ} (1-m) c_n C_m z^{-n-m-1} = \sum_{\ell \in \ZZ} \left(  \sum_{m\in\ZZ} (1-m) c_{\ell -m} C_m \right) z^{-\ell +1},
  \end{equation*}
  whose zero mode is
  \begin{equation*}
    \left( c\d C \right)_0 = \sum_{m\in\ZZ} (1-m) c_{-m} C_m.
  \end{equation*}
  Acting on the relative subcomplex, we need only keep $-2\leq m \leq
  2$ in the sum since any other mode will act trivially. Even though $c_2$ and $C_2$ are annihilation operators, we must keep them around since some states in $\fM^1$ involve $b_{-2}$ and $B_{-2}$ and thus give nontrivial contributions in~\eqref{eq:big-Psi}.  Therefore,
  on the relative subcomplex,
  \begin{equation*}
    \left( c\d C \right)_0 =3 c_2 C_{-2} + 2 c_1 C_{-1} + c_0 C_0 +
    C_2 c_{-2},
  \end{equation*}
  and hence, exchanging $c \leftrightarrow C$,
  \begin{equation*}
    \left( C \d c \right)_0 = 3 C_2 c_{-2} + 2 C_1 c_{-1} + C_0 c_0 +
    c_2 C_{-2}.
  \end{equation*}
  Finally,
  \begin{equation*}
    \begin{aligned}
      \left( c \d C - \d c C \right)_0 &= \left( c \d C + C \d c \right)_0\\
      &= 3 c_2 C_{-2} + 2 c_1 C_{-1} + c_0 C_0 + C_2 c_{-2} + 3 C_2  c_{-2} + 2 C_1 c_{-1} + C_0 c_0 + c_2  C_{-2}\\
      &= 4 c_2 C_{-2} + 2 c_1 C_{-1} + 2 C_1 c_{-1} + 4 C_2 c_{-2}.
    \end{aligned}
  \end{equation*}
  (We observe that, as expected, the $c_0$-terms are absent, which
  provides a check on the calculation.)
\end{proof}

\begin{lemma}\label{lem:df}
  Let $\fe := \exp(i p_\mu x^\mu_0)$.  Then acting on the relative subcomplex
  \begin{equation}
  \label{eq:desired-expression}
    \begin{aligned}\relax
      [d,\fe] &= i p_\mu \fe \left( - 2 c_{-2} x^\mu_2 - c_{-1} x^\mu_1 +
        c_1 x^\mu_{-1} +  2 c_2 x^\mu_{-2} \right) - \tfrac12 p_0^2 \fe C_0\\
      & \qquad {} + i p_0 \fe \left( C_{-2} (\pi_0)_2 + C_{-1} (\pi_0)_1
        + C_0 (\pi_0)_0 + C_1 (\pi_0)_{-1} + C_2 (\pi_0)_{-2} \right).
      \end{aligned}
  \end{equation}
\end{lemma}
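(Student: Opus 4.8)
The plan is to compute $[d,\fe]$ by the same strategy used for Lemma~\ref{lem:dC0}: reduce the commutator of the zero mode $d=J_0$ with $\fe$ to operator product expansions, extract the relevant modes, and then truncate to the modes that act nontrivially on the relative subcomplex. The crucial observation is that $\fe = \exp(i p_\mu x^\mu_0)$ is built entirely from the \emph{zero modes} $x^\mu_0$, which commute among themselves, so the only nontrivial input is how the BRST current interacts with $X^\mu$. Since $d=[J,-]_1$, I would first note that $[d,\fe]$ is governed by the single contraction of $J$ with the exponential, and because $X^\mu$ appears in $J$ only through the combinations $\d X^\mu\,\Pi_\mu$ (inside $cT$) and $\d X^i\,\d X^j$ (inside $CM$), the commutator will be linear and quadratic in $p_\mu$ respectively.

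Concretely, I would proceed in three steps. \textbf{First}, I would compute the relevant OPEs of $J$ with $X^\mu$. Using the defining OPE $\Pi_\mu(z)X^\nu(w)=\tfrac{\delta^\nu_\mu\1}{z-w}+\reg$ together with the OPEs $T(z)X^\mu(w)=\tfrac{\d X^\mu(w)}{z-w}+\reg$ and $M(z)X^0(w)=\tfrac{\Pi_0(w)}{z-w}+\reg$, $M(z)X^i(w)=\reg$ recorded earlier in the excerpt, the single contractions of the BRST current $J = cT + CM + \tfrac12 c\Tgh + \tfrac12 C\Mgh$ with $X^\mu$ are controlled by $c\,\d X^\mu$ (from $cT$) and by $C\,\Pi_0\,\delta^\mu_0$ (from $CM$). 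The $\tfrac12 p_0^2\,C_0$ term arises from the double contraction of $CM$ with two factors of $X^0$ in the exponential, using that $M(z)X^0$ produces $\Pi_0$ which then contracts again with a second $X^0$. \textbf{Second}, I would apply the mode-extraction machinery of equation~\eqref{eq:modes-from-opes} to convert these OPE data into mode expressions, paying attention to the conformal weights (recall $c$ and $C$ have weight $-1$, while $X^\mu$ has weight $0$ and $\Pi_0$ weight $1$), so that the $z^{-n+1}$ expansion of the ghosts correctly pairs with the $z^{-n}$ expansion of $X^\mu$. \textbf{Third}, I would truncate to the relative subcomplex, keeping only those modes $c_n, C_n, x^\mu_n, (\pi_0)_n$ that act nontrivially, which restricts the mode indices to a small window (as in Lemma~\ref{lem:dC0}, essentially $-2\le n\le 2$).

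The main obstacle will be bookkeeping the combinatorial factors in the double-contraction term that produces $-\tfrac12 p_0^2\fe\,C_0$. Because $\fe$ is an exponential, $d$ acting on it via the Leibniz-type rule for the OPE can contract the current with \emph{two} separate factors of $i p_0 X^0$ in the expansion of the exponential; tracking the symmetry factor $\tfrac12$ and confirming that the purely linear-in-$p$ terms reproduce the displayed $c$-mode and $C$-mode structure requires care. In particular I would verify that the $\Pi_0$ modes appearing in the last line assemble into the full range $(\pi_0)_{-2},\dots,(\pi_0)_2$ multiplying the corresponding $C$-modes, and that the coefficient of the $C_0(\pi_0)_0$ term is correctly normalised, since $(\pi_0)_0$ acts as $ip_0$ on $\vac$ and thus contributes to the quadratic term as well. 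A useful internal consistency check is that the $c_0$-dependent pieces must cancel, exactly as they did in the proof of Lemma~\ref{lem:dC0}, since $[d,\fe]$ should act within the relative subcomplex $\ker b_0$; confirming this cancellation both validates the computation and mirrors the check noted parenthetically at the end of that proof.
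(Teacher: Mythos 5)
Your core strategy is sound and, once unwound, is essentially the paper's: the only elementary commutator that matters is $[(\pi_\mu)_0,\fe] = i p_\mu \fe$, so the linear-in-$p$ terms come from single contractions of the $\Pi$'s in $J$ against $\fe$, the $-\tfrac12 p_0^2\, \fe\, C_0$ term comes from the double contraction of $\tfrac12 C\,\Pi_0\Pi_0$, and the truncation to $-2 \leq n \leq 2$ is justified exactly as in Lemma~\ref{lem:dC0}. The paper, however, bypasses OPEs for this lemma and works directly with mode commutators: it computes $[M_n,\fe]$ (splitting the $n=0$ case, which yields $-\tfrac12 p_0^2 \fe + i p_0 \fe (\pi_0)_0$, from $n \neq 0$, which yields $i p_0 \fe (\pi_0)_n$) and $[L_n,\fe] = -i n p_\mu x^\mu_n \fe$, then assembles $[d,\fe] = \sum_{-2\leq n\leq 2}[c_{-n}L_n + C_{-n}M_n,\fe]$, the ghost parts of $J$ commuting with $\fe$ trivially. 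Note that in the mode picture your $c_0$-cancellation check is automatic: the coefficient of $c_0$ is $[L_0,\fe]$, which vanishes because of the explicit factor of $n$.

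Two concrete issues in your write-up. First, a slip: you attribute the quadratic $p$-dependence to the $\d X^i \d X^j$ term inside $CM$, but that term contains no $\Pi$'s, commutes with $\fe$, and contributes nothing; the quadratic term comes entirely from $\tfrac12\Pi_0\Pi_0$, as your own second paragraph (correctly) uses. Second, and more substantively, your step invoking the mode-extraction formula~\eqref{eq:modes-from-opes} does not apply as stated: that formula computes $[A_n,B_m]$ for \emph{local fields} $A,B$ with definite conformal weights, whereas $\fe = \exp(i p_\mu x^\mu_0)$ is the zero-mode exponential only --- it is not a mode of a field and has no weight, and it is not the vertex operator $V_p(z) = \exp(i p_\mu X^\mu(z))$, whose OPE with $J$ would only capture $d$ acting on $\vac$ and its descendants, not the operator identity $[d,\fe]$ on the whole relative subcomplex (in particular it would miss the annihilation-mode terms such as $-2c_{-2}x^\mu_2$ and $C_{-2}(\pi_0)_2$, which are needed because states in $\fM^1$ contain $b_{-2}$, $B_{-2}$, $(\pi_\mu)_{-2}$, $x^\mu_{-2}$). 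The repair is exactly the paper's route: replace the OPE/mode-extraction step by the direct observation that $(\pi_\mu)_0$ is the only mode failing to commute with $\fe$, and keep only those terms in $L_n$ and $M_n$. With that substitution your plan goes through verbatim.
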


\begin{proof}
  We start by calculating $[L_n,\fe]$ and $[M_n,\fe]$.  The only
  modes which do not commute with $\fe$ are $(\pi_\mu)_0$ and from
  the mode algebra \eqref{eq:mode-algebra-can} it follows that these
  act on $\fe$ by scalar multiplication with $ip_\mu$.  Hence we need
  only keep those terms involving $(\pi_\mu)_0$ in $L_n$ and $M_n$.
  It then follows that
  \begin{equation*}
    [M_n, \fe] = [\tfrac12 (\Pi_0^2)_n, \fe] = \tfrac12 \sum_{m \in \ZZ} [(\pi_0)_{n-m} (\pi_0)_m, \fe].
  \end{equation*}
  This breaks up into two cases, depending on whether $n=0$ or $n\neq
  0$:
  \begin{equation*}
    [M_0, \fe] = \tfrac12 [(\pi_0)_0^2, \fe] = - \tfrac12 p^2_0 \fe + i p_0 \fe (\pi_0)_0
  \end{equation*}
  and if $n\neq 0$,
  \begin{equation*}
    [M_n, \fe] =  [(\pi_0)_0 (\pi_0)_n, \fe] = ip_0 \fe (\pi_0)_n.
  \end{equation*}
  Let us now calculate $[L_n,\fe]$.  From the formula for the
  normal-ordered product,
  \begin{equation*}
    L_n = (\d X^\mu \Pi_\mu)_n = \sum_{\ell \leq -1} (\d X^\mu)_\ell
    (\pi_\mu)_{n-\ell} + \sum_{\ell \geq 0} (\pi_\mu)_{n-\ell} (\d X^\mu)_\ell.
  \end{equation*}
  Again we need only keep the terms involving $(\pi_\mu)_0$ in order
  to calculate $[L_n, \fe]$.  Doing so, we find
  \begin{equation*}
    [L_n, \fe] = - i n p_\mu x^\mu_n \fe.
  \end{equation*}

  It follows that
  \begin{equation*}
    [d,\fe] = [J_0, \fe] = [(cT)_0 + (CM)_0, \fe] =
    \sum_{n\in\ZZ} [c_{-n} L_n + C_{-n} M_n, \fe].
  \end{equation*}
  By the same reasoning as above, when acting on the relative subcomplex we need only keep $-2\leq n \leq
  2$ in the sum, and using the above expressions for $[L_n, \fe]$
  and $[M_n, \fe]$ we find
  \begin{equation*}
    [d,\fe] = - \sum_{-2 \leq n \leq 2} i n p_\mu \fe c_{-n} x^\mu_n -
    \tfrac12 p^2_0 \fe C_0 + \sum_{-2\leq n \leq 2} i p_0 \fe  C_{-n} (\pi_0)_n ,
  \end{equation*}
  which when expanded gives the desired expression~\eqref{eq:desired-expression}.
\end{proof}

For calculations it is also convenient to exploit the state-field
correspondence and write the states in $\fM^\bullet(0)$ in terms of
the fields which produce them acting on the vacuum $\ket{0}$,
resulting in the following field basis for $\fM^\bullet(0)$:
\begin{equation}
\label{eq:M-fields}
  \begin{aligned}
    \fM^0 &: \1\\
    \fM^1 &: c C b , c C B, c \d X^\mu, C \d X^\mu, c \Pi_\mu , C \Pi_\mu\\
    \fM^2 &: c \d^2 c, c \d^2 C, C \d^2 c, C \d^2 C, c C \d^2 X^\mu, c C \d \Pi_\mu, c C \d X^\mu \Pi_\nu, c C \d X^\mu \d X^\nu , c C \Pi_\mu \Pi_\nu \\
    \fM^3 &: c C \d^3 c, c C \d^3 C, c C \d^2 c \d X^\mu, c C \d^2 C \d X^\mu, c C \d^2 c \Pi_\mu, c C \d^2 C \Pi_\mu \\
    \fM^4 &: c C \d^2 c \d^2 C.
  \end{aligned}
\end{equation}
More precisely, the state-field correspondence says that in a reduced
normal-ordered expression we may make the following replacements:
\begin{equation}
  \label{eq:state-field-corr}
  \begin{aligned}
    c &\leftrightarrow c_1\\
    \d c &\leftrightarrow c_0\\
    \d^2 c &\leftrightarrow 2 c_{-1}\\
    \d^3 c &\leftrightarrow 6 c_{-2}\\
    b &\leftrightarrow b_{-2}
  \end{aligned}
  \qquad\qquad
  \begin{aligned}
    C &\leftrightarrow C_1\\
    \d C &\leftrightarrow C_0\\
    \d^2 C &\leftrightarrow 2 C_{-1}\\
    \d^3 C &\leftrightarrow 6 C_{-2}\\
    B &\leftrightarrow B_{-2}
  \end{aligned}
  \qquad\qquad
  \begin{aligned}
    \d X^\mu &\leftrightarrow x^\mu_{-1}\\
    \d^2 X^\mu &\leftrightarrow 2 x^\mu_{-2}\\
    \Pi_\mu &\leftrightarrow (\pi_\mu)_{-1}\\
    \d \Pi_\mu &\leftrightarrow (\pi_\mu)_{-2}.\\
  \end{aligned}
\end{equation}

Using these substitutions in the field basis~\eqref{eq:M-fields} leads
to the state basis~\eqref{eq:M-modes} and viceversa. The
relative subcomplex $\Crel^\bullet(p) = \bigoplus_n \Crel^n(p)$ is
such that $\Crel^n(p) \neq 0$ only for $0 \leq n \leq 5$.  Indeed, any
state in $\Crel^n(p)$ can be written as a linear combination of terms of the form
\begin{equation}
\label{eq:big-Psi}
  \Psi = \fe \psi_1 + \fe C_0 \psi_2,
\end{equation}
where $\psi_1 \in \fM^n(0)$ and $\psi_2\in \fM^{n-1}(0)$.  The action
of the BRST differential on such a state is given by
\begin{equation}
  d\Psi = [d,\fe] \psi_1 + \fe d\psi_1 + [d,\fe] C_0 \psi_2 +
  \fe [d,C_0] \psi_2 - \fe C_0 d \psi_2,
\end{equation}
which can be calculated using Lemmas~\ref{lem:dC0} and \ref{lem:df}
together with the action of $d$ on $\fM^\bullet(0)$.  This is easier
to do in the field basis, since we can then use the operator product
expansion with the BRST current and that can be simplified using the
Mathematica package \texttt{OPEdefs} written by Kris Thielemans
\cite{Thielemans:1991uw,Thielemans:1992mu,Thielemans:1994er}.  The
resulting expressions are recorded in
Appendix~\ref{sec:acti-brst-diff}.

\section{The relative BRST cohomology}
\label{sec:relat-brst-cohom}

Now we are poised to start computing the relative BRST cohomology
$\Hrel^\bullet(p)$.  We will use the notation $\Crel^n(p)$, $\Zrel^n(p)$ and
$\Brel^n(p)$ for the relative cochains, cocycles and coboundaries,
respectively, at momentum $p$.

In analysing the cocycle equations $d\Psi = 0$, we need to be careful
with the value of $p$.  To this end we exploit the covariance under
the group $\SO(25)$ of transverse rotations.  The action of $R \in
\SO(25)$ on $p = (p_0, \p)$ is given by $p \mapsto (p_0, R \p)$.  The
invariants of that action are $p_0$ and $\|\p\|$.   We therefore have
four cases to consider:
\begin{enumerate}
\item $p=0$
\item $p_0 \neq 0$, $\p = \bzero$
\item $p_0 = 0$, $\p \neq \bzero$
\item $p_0 \neq 0$, $\p \neq \bzero$, which is the generic case.
\end{enumerate}

Our first result is the following.

\begin{proposition}
  Let $p = (p_0,\p)$.  Then $\Hrel^\bullet(p) = 0$ unless $p_0 = 0$.
\end{proposition}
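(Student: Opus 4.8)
The plan is to build a contracting homotopy for the relative differential $d$ whenever $p_0 \neq 0$, so that $\Crel^\bullet(p)$ becomes acyclic. The starting point is the identity $dB = \Mtot$, whose zero modes give $\{d, B_0\} = \Mtot_0$, with $\Mtot_0 = M_0 + \Mgh_0$. Since $b_0$ and $B_0$ belong to different ghost systems they anticommute, so $B_0$ preserves the relative subcomplex $\Crel^\bullet(p) = \ker b_0$; and $\Mtot_0$, being $\{d,B_0\}$, commutes with $d$, commutes with $(\pi_\mu)_0$, and has zero conformal weight, so it preserves $\Crel^\bullet(p)$ at fixed momentum. If $\Mtot_0$ acts invertibly on $\Crel^\bullet(p)$ when $p_0\neq 0$, then $h := (\Mtot_0)^{-1}B_0$ obeys $\{d,h\} = \id$ (because $(\Mtot_0)^{-1}$ commutes with $d$), and every cocycle $\Psi = \{d,h\}\Psi = d(h\Psi)$ is a coboundary; hence $\Hrel^\bullet(p) = 0$.

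Everything thus reduces to showing that $\Mtot_0$ is invertible for $p_0 \neq 0$. I would first extract its diagonal part: on $\vac$ the piece $\tfrac12 (\pi_0)_0^2$ of $M_0$ contributes the scalar $-\tfrac12 p_0^2$, and this is the \emph{only} term proportional to the identity. In particular $\Mgh_0$ carries no normal-ordering constant, since $\Mgh = -2B\d c - \d B c$ is built from the fields $B$ and $c$, which come from different ghost systems and have no contraction. Writing $\Mtot_0 = -\tfrac12 p_0^2\,\id + N$, it then remains to prove that $N$ is nilpotent.

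To see this I would introduce the grading on $\Crel^\bullet(p)$ given by the total number of creation modes of the fields $X^0$, $\Pi_i$, $b$ and $C$ (the creation modes of $X^i$, $\Pi_0$, $B$ and $c$ being assigned degree zero); as $\Crel^\bullet(p)$ is finite-dimensional this grading is bounded. Each of the three pieces of $N$ strictly lowers it: the remainder $\sum_{n\geq 1}(\pi_0)_{-n}(\pi_0)_n$ of $\tfrac12 (\Pi_0^2)_0$ replaces an $X^0$-excitation by a $\Pi_0$-excitation (and annihilates any state with no $X^0$-excitation); the transverse term $-\tfrac12\tau^2\delta_{ij}(\d X^i\d X^j)_0 = \tau^2\sum_{m\geq1}m^2\, x^i_{-m}x^i_{m}$ replaces a $\Pi_i$-excitation by an $X^i$-excitation; and $\Mgh_0$ replaces a $b$-excitation by a $B$-excitation and a $C$-excitation by a $c$-excitation, as one reads off from the operator products of $\Mgh$ with the ghost fields (it commutes with $c_n$ and $B_n$). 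Thus $N$ is strictly lower-triangular for this grading, hence nilpotent, and $\Mtot_0 = -\tfrac12 p_0^2\big(\id - \tfrac{2}{p_0^2}N\big)$ is invertible exactly when $p_0\neq 0$. The breakdown of this argument at $p_0 = 0$, where $\Mtot_0 = N$ becomes nilpotent, is consistent with the cohomology being concentrated there.

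The main obstacle is the explicit bookkeeping in the last paragraph: one must verify, piece by piece and including the zero modes $c_0, C_0, B_0$ that enter through the decomposition $\Crel^n(p) = \fM^n(p)\oplus C_0\fM^{n-1}(p)$, that every term of $N$ maps the relative subcomplex to itself and strictly decreases the grading, with no overlooked diagonal contribution. Once strict lowering is confirmed, the nilpotency of $N$, the invertibility of $\Mtot_0$, and the acyclicity of $\Crel^\bullet(p)$ for $p_0\neq 0$ all follow formally.
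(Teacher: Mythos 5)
Your proof is correct, but it takes a genuinely different---and computer-free---route from the paper's. The paper establishes this proposition by brute force: it solves the cocycle equations in Mathematica and checks that the dimension of the space of cocycles equals that of the coboundaries whenever $p_0 \neq 0$. You instead build a contracting homotopy $h = (\Mtot_0)^{-1} B_0$ from $\{d, B_0\} = \Mtot_0$ together with the decomposition $\Mtot_0 = -\tfrac12 p_0^2\,\id + N$ at fixed momentum, with $N$ nilpotent. Your bookkeeping checks out: $\tfrac12 (\Pi_0^2)_0 = \tfrac12 (\pi_0)_0^2 + \sum_{n\geq 1}(\pi_0)_{-n}(\pi_0)_n$, the transverse term is $\tau^2 \sum_{m\geq 1} m^2\, x^i_{-m} x^i_m$, and $\Mgh_0 = \sum_m m\, B_{-m} c_m$ (up to ordering signs), each term of which trades an $x^0$-, $\Pi_i$-, $b$- or $C$-creator for a $\Pi_0$-, $X^i$-, $B$- or $c$-creator, so $N$ strictly lowers your grading and is nilpotent on the finite-dimensional complex; and indeed no normal-ordering constant arises, since none of the contracted pairs $(\Pi_0,\Pi_0)$, $(\d X^i, \d X^j)$, $(B,c)$ has a singular OPE. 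The one step you assert rather than verify is that $\Mtot_0$ preserves $\ker b_0$: ``commutes with $d$, zero weight'' is not by itself sufficient, but $[\Mtot_0, b_0] = 0$ does hold, either by the graded Jacobi identity using $\{d,b_0\} = \Ltot_0$, $\{B_0,b_0\} = 0$ and $[\Ltot_0, B_0] = 0$, or directly from the mode formula for $\Mgh_0$, whose $m=0$ coefficient vanishes; similarly $[\Mgh_0, C_0] = 0$, so the summand $C_0\,\fM^{n-1}(p)$ of $\Crel^n(p)$ causes no trouble. Note also that your argument is consistent with the paper's footnote stating that the $\Ltot_0$-style argument fails for $\Mtot_0$: that footnote rules out grading the complex by $\Mtot_0$-eigenvalues (it is not diagonalisable), whereas your refinement needs only invertibility of a scalar-plus-nilpotent operator at fixed momentum, which holds exactly when $p_0 \neq 0$. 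As for what each approach buys: yours explains conceptually why the cohomology localises at $p_0 = 0$ (the unique generalized eigenvalue of $\Mtot_0$ there) and would apply verbatim to other matter realisations of the BMS$_3$ algebra with the same structure; the paper's computation, though unilluminating, produces along the way the explicit cocycle spaces and dimension counts that are needed anyway for the $p_0 = 0$ analysis carried out in the remainder of Section~\ref{sec:relat-brst-cohom}.
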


\begin{proof}
  It is a computation.\cite{Serre} Indeed, this is simply done by
  solving the linear equations with Mathematica and showing that the
  dimension of the space of cocycles equals that of the space of
  coboundaries.
\end{proof}

As a result we will put $p_0=0$ from now on and concentrate on two
cases: $p =0$ and $p=(0,\p)$ with $\p \neq \bzero$.

Our calculation of the relative cohomology is somewhat pedestrian. At
every ghost number we determine the space of cocycles as a
representation of the stabiliser subgroup of the momentum $p$ in
$\SO(25)$. When $p=0$, the stabiliser is all of $\SO(25)$ and if
$p=(0,\p)$ with $\p \neq \bzero$, the stabiliser is isomorphic to
$\SO(24)$, the rotations on the subspace $V^\perp$ perpendicular to
$\p$. We have seen from the explicit tabulation of the basis for
$\fM^\bullet$, that there are four isotypical (complex)
representations of $\SO(25)$: the scalars $\CC$, the vectors $V$, the
symmetric traceless tensors $\odot^2_0 V$ and the skewsymmetric
tensors $\ext{2} V$. These representations decompose as follows under
the $\SO(24)$ which stabilises $\p \neq \bzero$:
\begin{equation}
  \label{eq:branching}
  \begin{aligned}
    \CC &= \CC\\
    V &= \CC\p \oplus V^\perp \cong \CC \oplus V^\perp\\
    \odot^2_0 V &\cong \odot^2_0 V^\perp \oplus V^\perp \oplus \CC\\
    \ext{2} V &\cong \ext{2} V^\perp \oplus V^\perp.
  \end{aligned}
\end{equation}
Therefore if $p=0$, the relative cocycles at ghost number $n$ will be written
as
\begin{equation}
  \Zrel^n(0) \cong z_1 \CC \oplus z_2 V \oplus z_3 \odot^2_0 V \oplus z_4
  \ext{2} V,
\end{equation}
where $z_1,\dots,z_4$ are the multiplicities of the corresponding
representations.

Similarly, if $\p \neq \bzero$, the relative cocycles at ghost number
$n$ will be written as
\begin{equation}
  \Zrel^n(0,\p) \cong z_1 \CC \oplus z_2 V^\perp \oplus z_3 \odot^2_0 V^\perp \oplus z_4
  \ext{2} V^\perp,
\end{equation}
where $z_1,\dots,z_4$ are again the multiplicities of the
corresponding representations.

Since $\Crel^n(0)$ and $\Crel^n(0,\p)$ can be similarly decomposed,
determining $\Zrel^n(0)$ and $\Zrel^n(0,\p)$ automatically allows us
to determine $\Brel^{n+1}(0)$ and $\Brel^{n+1}(0,\p)$ via the Rank
Theorem.  In turn this allows us to determine the relative cohomology
as representations of $\SO(25)$ or $\SO(24)$.  Such representations
are to be interpreted as inducing representations for UIRs of the
Carroll group.  We will review this in
Section~\ref{sec:interpr-terms-carr}.

\subsection{Calculating $\Hrel^0(p)$}
\label{sec:hrel0}

The relative $0$-cochains $\Crel^0(p)$ are spanned by $\vac$ and from
Lemma~\ref{lem:df}, we see that
\begin{equation}
  d \vac = i p_i c_1 x^i_{-1} \vac,
\end{equation}
where we have used that $p_0=0$.  Hence there are no $0$-cocycles
unless $p=0$.  Since there are no $0$-coboundaries, we find that
\begin{equation}
  \Hrel^0(p) =
  \begin{cases}
    \CC\ket{0} & p = 0\\
    0 & p \neq 0.
  \end{cases}
\end{equation}

Also it follows that the $1$-coboundaries are
\begin{equation}
  \label{eq:1-coboundaries}
  \Brel^1(p) =\CC \left<i  p_i c_1 x^i_{-1} \vac \right>,
\end{equation}
where $\CC\left<\ldots\right>$ denotes the complex span.  Summarising,
\begin{equation}
  \Zrel^0(0) \cong \CC, \quad \Brel^1(0) = 0, \quad \Zrel^0(0,\p)
  = 0, \quad \Brel^1(0,\p) \cong \CC.
\end{equation}

\subsection{Calculating $\Hrel^1(p)$}
\label{sec:hrel1}

Using the state-field correspondence~\eqref{eq:state-field-corr}, we
first collect some formulae for the action of the BRST differential $d
\colon \Crel^1(0) \to \Crel^2(0)$ in terms of modes:
\begin{equation}
  \label{eq:d-on-basis-gh-number-1}
  \begin{aligned}
    dC_0 \ket{0} &= 2 (c_1 C_{-1} + C_1 c_{-1})\ket{0}\\
    dc_1 x^0_{-1}\ket{0} &= C_0 c_1 (\pi_0)_{-1} \ket{0} - c_1 C_1 (\pi_0)_{-2}\ket{0}\\
    dc_1 x^i_{-1} \ket{0} &= 0\\
    dC_1 x^0_{-1}\ket{0} &= 2 c_1 C_1 x^0_{-2}\ket{0} - C_0 c_1 x^0_{-1} \ket{0} + C_0 C_1 (\pi_0)_{-1} \ket{0}\\
    dC_1 x^i_{-1} \ket{0} &= - C_0 c_1 x^i_{-1} \ket{0} + 2 c_1 C_1 x^i_{-2}\ket{0}\\
    dc_1 (\pi_0)_{-1}\ket{0} &= 0\\
    dc_1 (\pi_i)_{-1}\ket{0} &= -\tau^2 C_0 c_1 x^i_{-1} \ket{0} + 2\tau^2 c_1 C_1 x^i_{-2} \ket{0}\\
    dC_1 (\pi_0)_{-1}\ket{0} &= - C_0 c_1 (\pi_0)_{-1} \ket{0} + c_1 C_1 (\pi_0)_{-2} \ket{0}\\    
    dC_1 (\pi_i)_{-1}\ket{0} &= - C_0 c_1 (\pi_i)_{-1} \ket{0} + c_1 C_1  (\pi_i)_{-2} \ket{0} - \tau^2 C_0 C_1 x^i_{-1} \ket{0}\\
    dc_1 C_1 b_{-2}\ket{0} &= c_1 C_1 x^\mu_{-1} (\pi_\mu)_{-1} \ket{0} + 2 C_0 c_1 C_1 B_{-2} \ket{0} + 3 c_1 C_{-1} \ket{0} + 3 c_{-1} C_1\ket{0}\\
    dc_1 C_1 B_{-2}\ket{0} &= -\tfrac12 \tau^2 \delta_{ij} c_1 C_1 x^i_{-1} x^j_{-1}\ket{0} + \tfrac12 c_1 C_1 (\pi_0)_{-1}^2 \ket{0} - 3 c_{-1} c_1 \ket{0}.
  \end{aligned}
\end{equation}
The first of these follows from Lemma~\ref{lem:dC0} when acting on the
vacuum with the expression for $[d,C_0]$ in~\eqref{eq:[d,C_0]}; the
others can be read from Appendix~\ref{sec:acti-brst-diff} and the
state-field correspondence \eqref{eq:state-field-corr}.  This will be
the approach to the calculations below, but we will not mention it
explicitly.

\subsubsection{The action of the differential}
\label{sec:action-differential-1}

It is then simply a matter of calculation to work out the action of
$d$ on $\Crel^1(p)$, with $p_0 = 0$:
\begin{equation}
  \label{eq:d-on-Crel1}
  \begin{aligned}
    dC_0 \vac &= - i p_i C_0 c_1 x^i_{-1} \vac + 2 c_1 C_{-1} \vac - 2 c_{-1} C_1 \vac\\
    dc_1 x^0_{-1}\vac &= C_0 c_1 (\pi_0)_{-1} \vac - c_1 C_1 (\pi_0)_{-2} \vac\\
    dc_1 x^i_{-1}\vac &=  0\\
    dC_1 x^0_{-1}\vac &= i p_i c_1 C_1 x^i_{-1}x^0_{-1}\vac + 2 c_1 C_1 x^0_{-2}\vac - C_0 c_1 x^0_{-1} \vac + C_0 C_1 (\pi_0)_{-1} \vac\\
    dC_1 x^i_{-1}\vac &= i p_j c_1 C_1 x^i_{-1}x^j_{-1}\vac - C_0 c_1 x^i_{-1} \vac + 2 c_1 C_1 x^i_{-2} \vac\\
    dc_1 (\pi_0)_{-1}\vac &= 0\\
    dc_1 (\pi_i)_{-1}\vac &= - i p_i c_1 c_{-1} \vac - \tau^2 C_0 c_1 x^i_{-1} \vac + 2 \tau^2 c_1 C_1 x^i_{-2} \vac\\
    dC_1 (\pi_0)_{-1}\vac &= i p_i  c_1 C_1 x^i_{-1} (\pi_0)_{-1}\vac +  c_1 C_1 (\pi_0)_{-2} \vac -  C_0 c_1 (\pi_0)_{-1}  \vac\\
    dC_1 (\pi_i)_{-1}\vac &= i p_j c_1 C_1 x^j_{-1}(\pi_i)_{-1}\vac +  i p_i  c_{-1} C_1 \vac -  C_0 c_1 (\pi_i)_{-1} \vac +  c_1 C_1  (\pi_i)_{-2} \vac\\
    & \qquad {} - \tau^2  C_0 C_1 x^i_{-1} \vac\\
    dc_1 C_1 b_{-2} \vac &=  2 i p_i  c_1 C_1 x^i_{-2} \vac + 2  C_0  c_1 C_1 B_{-2} \vac +  c_1 C_1 x^0_{-1} (\pi_0)_{-1}\vac +   c_1 C_1 x^i_{-1} (\pi_i)_{-1}\vac\\
    & \qquad {} + 3  c_1 C_{-1} \vac + 3  c_{-1} C_1 \vac\\
    dc_1 C_1 B_{-2} \vac &= -\tfrac12 \tau^2  c_1 C_1 x^i_{-1} x^i_{-1} \vac + \tfrac12  c_1 C_1 (\pi_0)^2_{-1} \vac + 3  c_1 c_{-1} \vac,
  \end{aligned}
\end{equation}
in which, having identified the vector and covector representations of
$\SO(25)$, we raise and lower vector indices with impunity.  We will
continue to do this tacitly below.

\subsubsection{Cocycles}
\label{sec:cocycles-1}

The general cochain in $\Crel^1(p)$ is given by the following expression:
\begin{multline}
  \label{eq:Crel1}
  \Psi = \phi^{(1)} C_0 \vac + \phi^{(2)} c_1 x^0_{-1} \vac + A^{(3)}_i c_1 x^i_{-1} \vac + \phi^{(4)} C_1 x^0_{-1}\vac\\
  + A^{(5)}_i C_1 x^i_{-1}\vac + \phi^{(6)} c_1 (\pi_0)_{-1}\vac + A^{(7)}_i c_1 (\pi_i)_{-1} \vac + \phi^{(8)} C_1 (\pi_0)_{-1}\vac \\
+ A^{(9)}_i C_1 (\pi_i)_{-1} \vac + \phi^{(10)} c_1 C_1 b_{-2} \vac + \phi^{(11)} c_1 C_1 B_{-2} \vac,
\end{multline}
where the $\phi^{(i)}, A_i^{(j)}$ are constants, which we can think of
values at $p$ of fields in momentum space. 
Using the formulae in equation~\eqref{eq:d-on-Crel1} we may compute
$d\Psi$.  We may then determine the cocycle conditions, which give
algebraic equations for the scalars $\phi^{(a)}$, for $a \in
\{1,2,4,6,8,10,11\}$ and the vectors $A_i^{(a)}$ for $a \in
\{3,5,7,9\}$ depending on $p$.  Imposing that the different terms in
$\Crel^2(p)$ vanish, we find that $d\Psi = 0$ is equivalent to the
following equations, once we impose $p_0=0$:
\begin{align*}
    - \phi^{(4)} &= 0 & \tag{$C_0 c_1 x^0_{-1}$}\\
  -i p_i \phi^{(1)}  - A^{(5)}_i - \tau^2 A^{(7)}_i &= 0 & \tag{$C_0 c_1 x^i_{-1}$}\\
  -\tau^2 A^{(9)}_i &= 0 & \tag{$C_0 C_1 x^i_{-1}$}\\
  \phi^{(2)} - \phi^{(8)} &= 0 & \tag{$C_0 c_1 (\pi_0)_{-1}$}\\
  - A^{(9)}_i &= 0 & \tag{$C_0 c_1 (\pi_i)_{-1}$}\\
  \phi^{(4)} &= 0 & \tag{$C_0 C_1 (\pi_0)_{-1}$}\\
  2 \phi^{(10)} &= 0 & \tag{$C_0 c_1 C_1 B_{-2}$}\\
- i p_i A^{(7)}_i + 3 \phi^{(11)} &= 0 & \tag{$c_1 c_{-1}$}\\
  2 \phi^{(1)} + 3 \phi^{(10)} &= 0 & \tag{$c_1 C_{-1}$}\\
  -2 \phi^{(1)} + i p_i A^{(9)}_i + 3 \phi^{(10)}  &= 0 & \tag{$c_{-1} C_1$}\\
   2 \phi^{(4)}  &= 0 & \tag{$c_1 C_1 x^0_{-2}$}\\
   2 A^{(5)}_i + 2 \tau^2 A^{(7)}_i + 2 i p_i \phi^{(10)}  &= 0 & \tag{$c_1 C_1 x^i_{-2}$}\\
   -\phi^{(2)} + \phi^{(8)} &= 0 & \tag{$c_1 C_1 (\pi_0)_{-2}$}\\
    A^{(9)}_i &= 0 & \tag{$c_1 C_1 (\pi_i)_{-2}$}\\
    \phi^{(10)} &= 0 & \tag{$c_1 C_1 x^0_{-1} (\pi_0)_{-1}$}\\
     i p_i \phi^{(8)}  &= 0 & \tag{$c_1 C_1 x^i_{-1} (\pi_0)_{-1}$}\\
  i p_i A^{(9)}_j + \delta_{ij} \phi^{(10)} &= 0 & \tag{$c_1 C_1 x^i_{-1} (\pi_j)_{-1}$}\\
  i p_i \phi^{(4)} &= 0 & \tag{$c_1 C_1 x^i_{-1} x^0_{-1}$}\\
  i p_{(i} A^{(5)}_{j)}  - \tfrac12 \delta_{ij} \tau^2 \phi^{(11)}  &= 0 & \tag{$c_1 C_1 x^i_{-1} x^j_{-1}$}\\  
   \tfrac12 \phi^{(11)}  &= 0 & \tag{$c_1 C_1 (\pi_0)_{-1}^2$}.
\end{align*}

\subsubsection{$\Hrel^1(0)$}
\label{sec:hrel10}

If $p = 0$, then it follows that $\phi^{(1)} = \phi^{(4)} =
\phi^{(10)}  = \phi^{(11)} = A_i^{(9)} = 0$, whereas $A^{(7)}_i =
-\frac1{\tau^2}A^{(5)}_i$ and $\phi^{(8)} = \phi^{(2)}$.  Finally,
$A^{(3)}_i$ and $\phi^{(6)}$ are unconstrained.  Therefore,
\begin{equation}
  \label{eq:Zrel10}
  \Zrel^1(0) = \CC \left< \left(c_1 x^0_{-1} + C_1  (\pi_0)_{-1}\right)\ket{0}, c_1 x^i_{-1}\ket{0}, c_1 (\pi_0)_{-1}\ket{0}, \left(C_1 x^i_{-1} - \tfrac1{\tau^2}
    c_1 (\pi_i)_{-1}\right)\ket{0} \right>.
\end{equation}
As a representation of $\SO(25)$, we thus have
\begin{equation}
  \Crel^1(0) \cong 7 \CC \oplus 4 V \qquad\text{and}\qquad \Zrel^1(0) \cong 2 \CC \oplus 2 V\\
\end{equation}
and hence, since $\Brel^1(0) =0$,
\begin{equation}
  \Hrel^1(0) \cong 2 \CC \oplus 2 V \qquad\text{and}\qquad \Brel^2(0) \cong 5 \CC \oplus 2 V.
\end{equation}

\subsubsection{$\Hrel^1(0,\p)$}
\label{sec:hrel1zero-p}

If $p = (0,\p)$ with $\p \neq \bzero$, then $\phi^{(a)} = 0$ for all
$a \in \{1,2,4,8,10,11\}$ and $A_i^{(9)} = 0$.  In addition,
$A_i^{(7)} = -\tfrac1{\tau^2} A_i^{(5)}$, where
\begin{equation}
  p_i A_j^{(5)} +   p_j A_i^{(5)}  = 0.
\end{equation}
Taking the trace, we see that $p_i A_i^{(5)}=0$.  Now, contracting with $p_j$ and using that $p_j
A_j^{(5)} = 0$, we see that $\|\p\|^2 A_i^{(5)} = 0$, which implies
that $A_i^{(5)} = 0$ and hence that $A_i^{(7)} = 0$.  In summary, only
$\phi^{(6)}$ and $A_i^{(3)}$ remain unconstrained and hence the
1-cocycles are
\begin{equation}
  \Zrel^1(0,\p) = \CC \left< c_1 x^i_{-1} \ket{0,\p}, c_1 (\pi_0)_{-1}\ket{0,\p} \right> \cong 2 \CC \oplus V^\perp.
\end{equation}
Since the 1-cochains are
\begin{equation}
  \Crel^1(0,\p) \cong 11 \CC \oplus 4 V^\perp,
\end{equation}
we see that
\begin{equation}
  \Brel^2(0,\p) \cong 9 \CC \oplus 3 V^\perp.
\end{equation}
Since the 1-coboundaries are $\Brel^1(0,\p) \cong \CC$, we see that
\begin{equation}
  \label{eq:H1rel0p}
  \Hrel^1(0,\p) \cong \CC \oplus V^\perp.
\end{equation}

\subsection{Calculating $\Hrel^2(p)$}
\label{sec:hrel2}

The relative subcomplex at ghost number $2$ and momentum $p$ is
$\Crel^2(p) = \fM^2(p) \oplus C_0 \fM^1(p)$, where $\fM^\bullet(p)$ is
obtained from $\fM^\bullet = \fM^\bullet(0)$ replacing $\ket{0}$ with
$\vac$.  Using Lemmas~\ref{lem:dC0} and \ref{lem:df}, we may work out
the action of the BRST differential in terms of its action on $\fM^1$
and $\fM^2$.  In terms of modes, the former is given by
equation~\eqref{eq:d-on-basis-gh-number-1}, so it remains to collect
the expressions of $d$ on $\fM^2$ in terms of modes, which we do using
the dictionary between fields and modes in~\eqref{eq:state-field-corr}
and the expressions for the action of $d$ on fields of ghost number $2$
in~\eqref{eq:d-on-M-scalars-gh-no-2}--\eqref{eq:d-on-M-tensors-gh-no-2}:
\begin{equation}
  \label{eq:d-on-basis-gh-number-2}
  \begin{aligned}
    d c_1 c_{-1}\ket{0} &= 0\\
    d c_1 C_{-1}\ket{0} &= C_0 c_1 c_{-1}\ket{0} - 3 c_1 C_1 c_{-2} \ket{0}\\
    d c_{-1} C_1 \ket{0} &= C_0 c_1 c_{-1}\ket{0} - 3 c_1 C_1 c_{-2} \ket{0}\\
    d C_1 C_{-1}\ket{0} &= - C_0 c_1 C_{-1} \ket{0} - C_0 c_{-1} C_1 \ket{0} + 3 c_1 C_1 C_{-2}\ket{0}\\
    d c_1 C_1 (\pi_0)_{-2}\ket{0} &= 2 c_1 C_1 c_{-1} (\pi_0)_{-1}\ket{0}\\
    d c_1 C_1 x^0_{-2}\ket{0} &= c_1 C_1 c_{-1} x^0_{-1}\ket{0} + c_1 C_1 C_{-1} (\pi_0)_{-1}\ket{0} + C_0 c_1 C_1 (\pi_0)_{-2}\ket{0}\\
    d c_1 C_1 (\pi_0)_{-1}^2\ket{0} &= 0\\
    d c_1 C_1 x^0_{-1} (\pi_0)_{-1}\ket{0} &=c_1 C_1 c_{-2} \ket{0} + C_0 c_1 C_1 (\pi_0)^2_{-1}\ket{0} \\
    d c_1 C_1 (x^0_{-1})^2 \ket{0} &= c_1 C_1 C_{-2} \ket{0} + 2 C_0 c_1 C_1 x^0_{-1} (\pi_0)_{-1}\ket{0}\\
    d c_1 C_1 x^i_{-2}\ket{0} &= c_1 C_1 c_{-1} x^i_{-1}\ket{0}\\
    d c_1 C_1 (\pi_i)_{-2}\ket{0} &= 2 c_1 C_1 c_{-1} (\pi_i)_{-1}\ket{0} - 2 \tau^2 c_1 C_1 C_{-1} x^i_{-1}\ket{0} - 4 \tau^2 C_0 c_1 C_1 x^i_{-2}\ket{0}\\
    d c_1 C_1 (\pi_0)_{-1} x^i_{-1}\ket{0} &= 0\\
    d c_1 C_1 (\pi_i)_{-1} x^0_{-1}\ket{0} &= C_0 c_1 C_1 (\pi_i)_{-1} (\pi_0)_{-1} \ket{0} - \tau^2 C_0 c_1 C_1 x^i_{-1} x^0_{-1}\ket{0}\\
    d c_1 C_1 (\pi_i)_{-1} (\pi_0)_{-1}\ket{0} &= -\tau^2 C_0 c_1 C_1 x^i_{-1} (\pi_0)_{-1}\ket{0}\\    
    d c_1 C_1 x^0_{-1} x^i_{-1}\ket{0} &= C_0 c_1 C_1 x^i_{-1} (\pi_0)_{-1}\ket{0}\\
    d c_1 C_1 x^i_{-1} (\pi_j)_{-1} \ket{0} &= \delta^i_j c_1 C_1 c_{-2}\ket{0} - \tau^2 C_0 c_1 C_1 x^i_{-1} x^j_{-1}\ket{0}\\
    d c_1 C_1 x^i_{-1} x^j_{-1} \ket{0} &= 0\\
    d c_1 C_1 (\pi_i)_{-1} (\pi_j)_{-1} \ket{0} &= - \tau^2 \delta_{ij} c_1 C_1 C_{-2}\ket{0} - \tau^2 C_0 c_1 C_1 (\pi_i)_{-1} x^j_{-1} \ket{0} - \tau^2 C_0 c_1 C_1 (\pi_j)_{-1} x^i_{-1}\ket{0}.
  \end{aligned}
\end{equation}

\subsubsection{The action of the differential}
\label{sec:action-differential-2}

We are now ready to calculate $d: \Crel^2(p) \to \Crel^3(p)$.  We
start with the action of $d$ on $\fM^2(p)$, which,
using~\eqref{eq:desired-expression} and setting $p_0 = 0$, is given by
\begin{align*}
    d c_1 c_{-1}\vac &= 0\\
    d c_1 C_{-1}\vac &= C_0 c_1 c_{-1} \vac - 3  c_1 C_1 c_{-2}\vac\\
    d c_{-1} C_1\vac &= - i p_i  c_1 C_1 c_{-1} x^i_{-1}\vac + C_0 c_1 c_{-1} \vac - 3  c_1 C_1 c_{-2}\vac\\
    d C_1 C_{-1}\vac &= i p_i  c_1 C_1 C_{-1} x^i_{-1}\vac -  C_0 c_1 C_{-1}\vac -  C_0 c_{-1} C_1 \vac + 3  c_1 C_1 C_{-2}\vac\\
    d c_1 C_1 (\pi_0)_{-2}\vac &= 2  c_1 C_1 c_{-1} (\pi_0)_{-1} \vac\\
    d c_1 C_1 x^0_{-2}\vac &= c_1 C_1 c_{-1} x^0_{-1}\vac +  c_1 C_1 C_{-1} (\pi_0)_{-1} \vac +  C_0 c_1 C_1 (\pi_0)_{-2} \vac\\
    d c_1 C_1 (\pi_0)_{-1}^2\vac &= 0\\
    d c_1 C_1 x^0_{-1} (\pi_0)_{-1}\vac &= c_1 C_1 c_{-2} \vac +  C_0 c_1 C_1 (\pi_0)_{-1}^2\vac \\
    d c_1 C_1 (x^0_{-1})^2 \vac &= c_1 C_1 C_{-2} \vac + 2  C_0 c_1 C_1 x^0_{-1} (\pi_0)_{-1}\vac\\
    d c_1 C_1 x^i_{-2}\vac &= c_1 C_1 c_{-1} x^i_{-1}\vac\\
    d c_1 C_1 (\pi_i)_{-2}\vac &= 2 i p_i  c_1 C_1 c_{-2} \vac  + 2 c_1 C_1 c_{-1} (\pi_i)_{-1}\vac - 2  \tau^2  c_1 C_1 C_{-1} x^i_{-1}\vac\\
     & \qquad {} - 4\tau^2  C_0 c_1 C_1 x^i_{-2}\vac\\
    d c_1 C_1 (\pi_0)_{-1} x^i_{-1}\vac &= 0\\
    d c_1 C_1 x^0_{-1}(\pi_i)_{-1} \vac &= i p_i  c_1 C_1 c_{-1} x^0_{-1}\vac +  C_0 c_1 C_1 (\pi_i)_{-1} (\pi_0)_{-1} \vac - \tau^2  C_0 c_1 C_1 x^i_{-1} x^0_{-1}\vac\\
    d c_1 C_1 (\pi_i)_{-1} (\pi_0)_{-1}\vac &= i p_i  c_1 C_1 c_{-1}  (\pi_0)_{-1} \vac - \tau^2  C_0 c_1 C_1 x^i_{-1} (\pi_0)_{-1}\vac\\
    d c_1 C_1 x^0_{-1} x^i_{-1}\vac &= C_0 c_1 C_1 x^i_{-1} (\pi_0)_{-1}\vac\\
    d c_1 C_1 x^i_{-1} (\pi_j)_{-1}\vac &= i p_j  c_1 C_1 c_{-1} x^i_{-1} \vac +  \delta^i_j c_1 C_1 c_{-2}\vac - \tau^2  C_0 c_1 C_1 x^i_{-1} x^j_{-1}\vac\\
    d c_1 C_1 x^i_{-1} x^j_{-1} \vac &= 0\\
    d c_1 C_1 (\pi_i)_{-1} (\pi_j)_{-1} \vac &= i p_i  c_1 C_1 c_{-1} (\pi_j)_{-1} \vac + i p_j  c_1 C_1 c_{-1} (\pi_i)_{-1} \vac - \tau^2  \delta_{ij} c_1 C_1 C_{-2}\vac \\
    & \qquad {} - \tau^2  C_0 c_1 C_1 (\pi_i)_{-1} x^j_{-1} \vac - \tau^2  C_0 c_1 C_1 (\pi_j)_{-1} x^i_{-1}\vac .
\end{align*}

It remains now to calculate the action of $d$ on $C_0 \fM^1(p)$, which
is given by
\begin{equation}
    d C_0 \fM^1(p) = [d,C_0]\fM^1(p) -  C_0 d \fM^1(p),
\end{equation}
where $[d,C_0]$ is given in~\eqref{eq:[d,C_0]} and the action of $d$
on $\fM^1(p)$ is given in~\eqref{eq:d-on-Crel1}.  With this, we find
\begin{equation}
  \label{eq:d-on-C0-M1}
  \begin{aligned}
    d C_0 c_1 C_1 b_{-2}\vac &=-2 i p_i  C_0 c_1 C_1 x^i_{-2}\vac - 4  c_1 C_1 C_{-2} \vac -  C_0 c_1 C_1 x^\mu_{-1} (\pi_\mu)_{-1} \vac\\
    & \qquad {} - 3  C_0 c_1 C_{-1}\vac + 3  C_0 C_1 c_{-1} \vac\\
    d  C_0 c_1 C_1 B_{-2}\vac &= - 4  c_1 C_1 c_{-2}\vac - \tfrac12  C_0 c_1 C_1 (\pi_0)_{-1}^2 \vac + \tfrac12 \tau^2   C_0 c_1 C_1 x^i_{-1} x^i_{-1}\vac\\
    & \qquad {} - 3   C_0 c_1 c_{-1} \vac\\
    d  C_0 c_1 x^0_{-1}\vac &= 2  c_1 C_1 c_{-1}x^0_{-1}\vac +  C_0 c_1 C_1 (\pi_0)_{-2}\vac\\
    d  C_0 C_1 x^0_{-1}\vac &= - i p_i  C_0 c_1 C_1 x^i_{-1} x^0_{-1}\vac - 2  c_1 C_1 C_{-1} x^0_{-1}\vac - 2  C_0 c_1 C_1 x^0_{-2}\vac\\
    d  C_0 c_1 (\pi_0)_{-1}\vac &= 2  c_1 C_1 c_{-1} (\pi_0)_{-1}\vac\\
    d  C_0 C_1 (\pi_0)_{-1}\vac &= - i p_i  C_0 c_1 C_1 x^i_{-1} (\pi_0)_{-1}\vac - 2  c_1 C_1 C_{-1} (\pi_0)_{-1} \vac -  C_0 c_1 C_1 (\pi_0)_{-2} \vac\\
    d  C_0 c_1 x^i_{-1}\vac &= 2  c_1 C_1 c_{-1} x^i_{-1} \vac\\
    d  C_0 C_1 x^i_{-1}\vac &= -i p_j  C_0 c_1 C_1 x^j_{-1} x^i_{-1} \vac - 2  c_1 C_1 C_{-1} x^i_{-1} \vac - 2  C_0 c_1 C_1 x^i_{-2}\vac\\
    d  C_0 c_1 (\pi_i)_{-1}\vac &= i p_i  C_0 c_1 c_{-1}\vac + 2  c_1 C_1 c_{-1} (\pi_i)_{-1}\vac - 2 \tau^2  C_0 c_1 C_1 x^i_{-2}\vac\\
    d  C_0 C_1 (\pi_i)_{-1}\vac &= - i p_i  C_0 c_{-1} C_1\vac - i p_j  C_0 c_1 C_1 x^j_{-1} (\pi_i)_{-1}\vac - 2  c_1 C_1 C_{-1} (\pi_i)_{-1}\vac\\
        & \qquad {} -  C_0 c_1 C_1 (\pi_i)_{-2}\vac.
  \end{aligned}
\end{equation}

\subsubsection{Cocycles}
\label{sec:cocycles-2}

We parametrise the general cochain $\Psi \in \Crel^2(p)$ as follows:
\begin{multline}
  \label{eq:Crel2}
  \Psi = \phi^{(1)} C_0 c_1 C_1 b_{-2} \vac + \phi^{(2)} C_0 c_1 C_1 B_{-2} \vac + \phi^{(3)} C_0 c_1 x^0_{-1} \vac + A_i^{(4)} C_0 c_1 x^i_{-1}\vac + \phi^{(5)} C_0 C_1 x^0_{-1}\vac\\
  + A^{(6)}_i C_0 C_1 x^i_{-1} \vac + \phi^{(7)} C_0 c_1 (\pi_0)_{-1} \vac + A^{(8)}_i C_0 c_1 (\pi_i)_{-1} \vac + \phi^{(9)} C_0 C_1 (\pi_0)_{-1}\vac\\ + A^{(10)}_i C_0 C_1 (\pi_i)_{-1}\vac + \phi^{(11)} c_1 c_{-1} \vac + \phi^{(12)} c_1 C_{-1} \vac + \phi^{(13)} c_{-1} C_1 \vac + \phi^{(14)} C_1 C_{-1} \vac\\
  + \phi^{(15)} c_1 C_1 x^0_{-2} \vac + A_i^{(16)} c_1 C_1 x^i_{-2} \vac + \phi^{(17)} c_1 C_1 (\pi_0)_{-2}\vac + A_i^{(18)} c_1 C_1 (\pi_i)_{-2} \vac\\
  + \phi^{(19)} c_1 C_1 x^0_{-1} (\pi_0)_{-1} \vac + \phi^{(20)} c_1 C_1 (x^0_{-1})^2 \vac + \phi^{(21)} c_1 C_1 (\pi_0)_{-1}^2 \vac + A_i^{(22)} c_1 C_1 x^i_{-1} (\pi_0)_{-1} \vac\\
  + A_i^{(23)} c_1 C_1 x^0_{-1} (\pi_i)_{-1} \vac + A_i^{(24)} c_1 C_1 x^i_{-1} x^0_{-1}\vac + A_i^{(25)} c_1 C_1 (\pi_i)_{-1} (\pi_0)_{-1} \vac \\
  + T_{ij}^{(26)} c_1 C_1 x^i_{-1} x^j_{-1} \vac + T_{ij}^{(27)} c_1 C_1 x^i_{-1} (\pi_j)_{-1} \vac + T_{ij}^{(28)} c_1 C_1 (\pi_i)_{-1} (\pi_j)_{-1} \vac,
\end{multline}
where $T_{ij}^{(26)} = T_{ji}^{(26)}$ and $T_{ij}^{(28)} = T_{ji}^{(28)}$.

The equation $d\Psi = 0$ breaks up into the following component
equations:
\begin{align*}
  -4 \phi^{(2)} - 3 \phi^{(12)} - 3 \phi^{(13)} + \phi^{(19)} + 2 i p_i A_i^{(18)} + T^{(27)}_{ii} &= 0 & \tag{$c_1 C_1 c_{-2}$}\\
  -4 \phi^{(1)} + 3 \phi^{(14)} + \phi^{(20)} - \tau^2 T^{(28)}_{ii} &= 0 & \tag{$c_1 C_1 C_{-2}$}\\
  2 \phi^{(3)} + \phi^{(15)} + i p_i A_i^{(23)} &= 0 & \tag{$c_1 C_1 c_{-1} x^0_{-1}$}\\
  2 A_i^{(4)} - i p_i \phi^{(13)} + A_i^{(16)} + i p_j T^{(27)}_{ij} &= 0 & \tag{$c_1 C_1 c_{-1} x^i_{-1}$}\\
  - 2 \phi^{(5)} &= 0 & \tag{$c_1 C_1 C_{-1} x^0_{-1}$}\\
  -2 A_i^{(6)} + i p_i\phi^{(14)} - 2 \tau^2 A_i^{(18)} &= 0 & \tag{$c_1 C_1 C_{-1} x^i_{-1}$}\\
   2 \phi^{(7)} + 2 \phi^{(17)} + i p_i A_i^{(25)} &= 0 & \tag{$c_1 C_1 c_{-1} (\pi_0)_{-1}$}\\
  2 A_i^{(8)} + 2 A_i^{(18)} + 2 i p_j T^{(28)}_{ij}  &= 0 & \tag{$c_1 C_1 c_{-1} (\pi_i)_{-1}$}\\
- 2 \phi^{(9)} + \phi^{(15)} &= 0 & \tag{$c_1 C_1 C_{-1} (\pi_0)_{-1}$}\\
  -2 A_i^{(10)} &= 0 & \tag{$c_1 C_1 C_{-1} (\pi_i)_{-1}$}\\
  -3 \phi^{(2)} + i p_i A_i^{(8)} + \phi^{(12)} + \phi^{(13)}  &= 0 & \tag{$C_0 c_1 c_{-1}$}\\
  -3 \phi^{(1)} - \phi^{(14)}  &= 0 & \tag{$C_0 c_1 C_{-1}$}\\
  -3 \phi^{(1)} - i p_i A_i^{(10)} - \phi^{(14)} &= 0 & \tag{$C_0 c_{-1} C_1$}\\
  -2 \phi^{(5)} &= 0 & \tag{$C_0 c_1 C_1 x^0_{-2}$}\\
  -2 i p_i \phi^{(1)} - 2 A_i^{(6)} - 2 \tau^2 A_i^{(8)} - 4 \tau^2 A_i^{(18)} &= 0 & \tag{$C_0 c_1 C_1 x^i_{-2}$}\\
  \phi^{(3)} - \phi^{(9)} + \phi^{(15)} &= 0 & \tag{$C_0 c_1 C_1 (\pi_0)_{-2}$}\\
  -A_i^{(10)} &= 0 & \tag{$C_0 c_1 C_1 (\pi_i)_{-2}$}\\
  - \phi^{(1)} + 2 \phi^{(20)}  &= 0 & \tag{$C_0 c_1 C_1 x^0_{-1} (\pi_0)_{-1}$}\\
 - i p_i \phi^{(9)} - \tau^2 A_i^{(25)} + A_i^{(24)} &= 0 & \tag{$C_0 c_1 C_1 x^i_{-1} (\pi_0)_{-1}$}\\
 - \delta_{ij} \phi^{(1)} - i p_i A_j^{(10)}  - 2 \tau^2 T^{(28)}_{ij} &= 0 & \tag{$C_0 c_1 C_1 x^i_{-1} (\pi_j)_{-1}$}\\
 - i p_i \phi^{(5)} - \tau^2 A_i^{(23)} &= 0 & \tag{$C_0 c_1 C_1 x^0_{-1} x^i_{-1}$}\\
 \tfrac12 \delta_{ij} \tau^2 \phi^{(2)} - i p_{(i} A_{j)}^{(6)}  - \tau^2 T^{(27)}_{(ij)}  &= 0 & \tag{$C_0 c_1 C_1 x^i_{-1} x^j_{-1}$}\\
 - \tfrac12 \phi^{(2)} + \phi^{(19)} &= 0 & \tag{$C_0 c_1 C_1 (\pi_0)_{-1}^2$}\\
   A_i^{(23)} &= 0. & \tag{$C_0 c_1 C_1 (\pi_0)_{-1} (\pi_i)_{-1}$}
\end{align*}

\subsubsection{$\Hrel^2(0)$}
\label{sec:hrel20}
 
If $p=0$, the cocycle conditions become
\begin{align}
  \begin{aligned}
    -4 \phi^{(2)} - 3 \phi^{(12)} - 3 \phi^{(13)} +   \phi^{(19)} + T^{(27)}_{ii} &= 0 \\
    -4 \phi^{(1)} + 3 \phi^{(14)} + \phi^{(20)} - \tau^2 T^{(28)}_{ii} &= 0 \\
    2 \phi^{(3)} + \phi^{(15)} &= 0 \\
    2 A_i^{(4)} + A_i^{(16)}  &= 0 \\
    - 2 \phi^{(5)} &= 0 \\
    -2 A_i^{(6)} - 2 \tau^2 A_i^{(18)} &= 0 \\
    2 \phi^{(7)} + 2 \phi^{(17)}  &= 0 \\
    2 A_i^{(8)} + 2 A_i^{(18)}   &= 0 \\
    - 2 \phi^{(9)} + \phi^{(15)}  &= 0 \\
    -2 A_i^{(10)} &= 0 \\
  \end{aligned}\qquad
  \begin{aligned}
    -3 \phi^{(2)} + \phi^{(12)} + \phi^{(13)}  &= 0 \\
    -3 \phi^{(1)} - \phi^{(14)} &= 0 \\
    - 2 A_i^{(6)}  - 2 \tau^2 A_i^{(8)} - 4 \tau^2 A_i^{(18)} &= 0 \\
    \phi^{(3)} - \phi^{(9)} + \phi^{(15)} &= 0 \\
    - \phi^{(1)} + 2 \phi^{(20)}  &= 0 \\
    - \tau^2 A_i^{(25)} + A_i^{(24)} &= 0 \\
    - \delta_{ij} \phi^{(1)} - 2 \tau^2 T^{(28)}_{ij} &= 0 \\
    - \tau^2 A_i^{(23)}  &= 0 \\
    \tfrac12 \delta_{ij} \tau^2 \phi^{(2)} - \tau^2 T^{(27)}_{(ij)}  &= 0 \\
    - \tfrac12 \phi^{(2)} + \phi^{(19)} &= 0 \\
  \end{aligned}
\end{align}

The cocycle conditions allow us to solve for the following:
\begin{equation}
  \begin{aligned}
    \phi^{(5)} &= 0\\
    A_i^{(8)} &= \tfrac1{\tau^2} A_i^{(6)}\\
    \phi^{(9)} &= - \phi^{(3)}\\
    A_i^{(10)} &= 0\\
    \phi^{(13)} &= 3 \phi^{(2)} - \phi^{(12)}\\
    \phi^{(14)} &= -3 \phi^{(1)}\\
    \phi^{(15)} &= -2 \phi^{(3)}\\
    A_i^{(16)} &= -2 A_i^{(4)}
  \end{aligned}
  \qquad\qquad
  \begin{aligned}
    \phi^{(17)} &= -\phi^{(7)}\\
    A_i^{(18)} &= -\tfrac1{\tau^2} A_i^{(6)}\\
    \phi^{(19)} &= \tfrac12 \phi^{(2)}\\
    \phi^{(20)} &= \tfrac12 \phi^{(1)}\\
    A_i^{(23)} &=  0\\
    A_i^{(25)} &=  \tfrac1{\tau^2} A_i^{(24)}\\
    T^{(27)}_{(ij)} &= \tfrac12 \delta_{ij} \phi^{(2)}\\
    T^{(28)}_{ij} &= -\tfrac1{2\tau^2} \delta_{ij} \phi^{(1)}.
  \end{aligned}
\end{equation}
As representations of $\SO(25)$, the relative $2$-cochains are
\begin{equation}
  \Crel^2(0) \cong 18 \CC \oplus 10 V \oplus 3 \odot^2_0 V \oplus \ext{2} V
\end{equation}
and since the cocycle conditions set to zero $10$ scalars (including
the traces of $T^{(27)}$ and $T^{(28)}$), $6$ vectors and two
traceless symmetric tensors ($T^{(27)}_{\langle ij \rangle}$ and
$T^{(28)}_{\langle ij \rangle}$), we have that
\begin{equation}
  \Zrel^2(0) \cong 8 \CC \oplus 4 V \oplus \odot^2_0 V \oplus \ext{2} V
\end{equation}
and hence that
\begin{equation}
  \Brel^3(0) \cong 10 \CC \oplus 6 V \oplus 2 \odot^2_0 V.
\end{equation}
Since $\Brel^2(0) \cong 5\CC \oplus 2 V$, it follows that
\begin{equation}
  \Hrel^2(0) \cong 3 \CC \oplus 2 V \oplus \odot^2_0 V \oplus \ext{2} V.
\end{equation}

\subsubsection{$\Hrel^2(0,\p)$}
\label{sec:hrel20p}

The cocycle conditions reduce to the following relations:
\begin{equation}
  \begin{aligned}
    \phi^{(1)} = \phi^{(5)} = \phi^{(14)} = \phi^{(20)} &= 0\\
    A_i^{(10)} = A_i^{(23)} &= 0\\
    A_i^{(6)} &= \tau^2 A_i^{(8)}\\
    \phi^{(9)} &= - \phi^{(3)}\\
    \phi^{(12)} &= 3 \phi^{(2)} - i p_i A_i^{(8)} - \phi^{(13)}\\
    \phi^{(15)} &= - 2 \phi^{(3)}\\
    A_i^{(16)} &= -\tfrac{i}2 p_i \phi^{(2)} - 2 A_i^{(4)} -   \tfrac{p^2}2 A_i^{(8)} - \tfrac12 p_i p_j A_j^{(8)} - i p_i \phi^{(13)} - i p_j T^{(27)}_{[ij]}\\
    \phi^{(17)} &= - \phi^{(7)} - \tfrac{i}2 p_i A_i^{(25)}\\
    A_i^{(18)} &= - A_i^{(8)} \\
    \phi^{(19)} &= \tfrac12 \phi^{(2)}\\
    A_i^{(24)} &= \tau^2 A_i^{(25)} - i p_i \phi^{(3)}\\
    T^{(27)}_{(ij)} &= \tfrac12 \delta_{ij} \phi^{(2)} - i p_{(i} A_{j)}^{(8)}\\
    T^{(28)}_{ij} &= 0\\
  \end{aligned}
\end{equation}
leaving unconstrained the following: $\phi^{(2)}$,
$\phi^{(3)}$, $A_i^{(4)}$, $\phi^{(7)}$, $A_i^{(8)}$, $\phi^{(11)}$, $\phi^{(13)}$, $\phi^{(21)}$,
$A_i^{(22)}$, $A_i^{(25)}$, $T^{(26)}_{ij}$ and $T^{(27)}_{[ij]}$.

As an $\SO(24)$ representation, $V \cong \CC \oplus V^\perp$ and hence
\begin{equation}
  \Crel^2(0,\p) \cong 31\CC \oplus 14 V^\perp \oplus 3 \odot^2_0 V^\perp \oplus \ext{2} V^\perp.
\end{equation}
One sees from the cocycle conditions that
\begin{equation}
  \Zrel^2(0,\p) \cong 12 \CC \oplus 6 V^\perp \oplus \odot^2_0 V^\perp \oplus \ext{2} V^\perp,
\end{equation}
so that from the Rank Theorem,
\begin{equation}
  \Brel^3(0,\p) \cong 19 \CC \oplus 8 V^\perp \oplus 2 \odot^2_0 V^\perp.
\end{equation}
Since $\Brel^2(0,\p) \cong 9\CC \oplus 3 V^\perp$, it follows that
\begin{equation}
  \Hrel^2(0,\p) \cong 3 \CC \oplus 3 V^\perp  \oplus \odot^2_0 V^\perp \oplus \ext{2} V^\perp.
\end{equation}

\subsection{Calculating $\Hrel^3(p)$}
\label{sec:hrel3}

The relative subcomplex $\Crel^3(p) = \fM^3(p) \oplus C_0 \fM^2(p)$.
We can compute the action of the differential as usual by using
Lemmas~\ref{lem:dC0} and \ref{lem:df} and the formulae in
Appendix~\ref{sec:acti-brst-diff}.

\subsubsection{The action of the differential}
\label{sec:action-differential-3}

Doing so we calculate $d : \Crel^3(p) \to \Crel^4(p)$, to give the
following, where we have already set $p_0 = 0$
\begin{align}
  \label{eq:d-on-Crel-3}
  \begin{aligned}
    d c_1 C_1 c_{-2}\vac &= 0\\
    d c_1 C_1 C_{-2}\vac &= 2 C_0 c_1 C_1 c_{-2}\vac\\
    d c_1 C_1 c_{-1} x^0_{-1}\vac &= C_0 c_1 C_1 c_{-1} (\pi_0)_{-1}\vac\\
    d c_1 C_1 c_{-1} x^i_{-1}\vac &= 0\\
    d c_1 C_1 C_{-1} x^0_{-1}\vac &= C_0 c_1 C_1 c_{-1} x^0_{-1} \vac +  C_0 c_1 C_1 C_{-1} (\pi_0)_{-1}\vac\\
    d c_1 C_1 C_{-1} x^i_{-1}\vac &= C_0 c_1 C_1 c_{-1} x^i_{-1}\vac\\
    d c_1 C_1 c_{-1} (\pi_0)_{-1}\vac &= 0\\
    d c_1 C_1 c_{-1} (\pi_i)_{-1}\vac &= - \tau^2 C_0 c_1 C_1 c_{-1} x^i_{-1}\vac\\
    d c_1 C_1 C_{-1} (\pi_0)_{-1}\vac &= C_0 c_1 C_1 c_{-1} (\pi_0)_{-1}\vac\\
    d c_1 C_1 C_{-1} (\pi_i)_{-1}\vac &= i p_i c_1 C_1 c_{-1} C_{-1} \vac + C_0 c_1 C_1 c_{-1} (\pi_i)_{-1}\vac - \tau^2 C_0 c_1 C_1 C_{-1} x^i_{-1}\vac\\
    d C_0 c_1 c_{-1}\vac &= 0\\
    d C_0 c_1 C_{-1}\vac &= 2 c_1 C_1 c_{-1} C_{-1} \vac + 3 C_0 c_1 C_1 c_{-2} \vac\\
    d C_0 c_{-1} C_1\vac &= i p_i C_0 c_1 C_1 c_{-1} x^i_{-1} \vac - 2 c_1 C_1 c_{-1} C_{-1} \vac + 3 C_0 c_1 C_1 c_{-2} \vac\\
    d C_0 C_1 C_{-1}\vac &= - i p_i C_0 c_1 C_1 C_{-1} x^i_{-1} \vac - 3 C_0 c_1 C_1 C_{-2} \vac\\
    d C_0 c_1 C_1 x^0_{-2}\vac &= - C_0 c_1 C_1 C_{-1} (\pi_0)_{-1} \vac -  C_0 c_1 C_1 c_{-1} x^0_{-1} \vac\\
    d C_0 c_1 C_1 x^i_{-2}\vac &= - C_0 c_1 C_1 c_{-1} x^i_{-1} \vac\\
    d C_0 c_1 C_1 (\pi_0)_{-2}\vac &= - 2 C_0 c_1 C_1 c_{-1} (\pi_0)_{-1}\vac\\
    d C_0 c_1 C_1 (\pi_i)_{-2}\vac &= -2 i p_i C_0 c_1 C_1 c_{-2} \vac + 2 \tau^2 C_0 c_1 C_1 C_{-1} x^i_{-1}\vac - 2 C_0 c_1 C_1 c_{-1} (\pi_i)_{-1} \vac\\
    d C_0 c_1 C_1 x^0_{-1} (\pi_0)_{-1}\vac &= - C_0 c_1 C_1 c_{-2} \vac\\
    d C_0 c_1 C_1 x^i_{-1} (\pi_0)_{-1} \vac &= 0\\
    d C_0 c_1 C_1 x^0_{-1} (\pi_i)_{-1} \vac &= - i p_i C_0 c_1 C_1 c_{-1} x^0_{-1} \vac\\
    d C_0 c_1 C_1 x^i_{-1} (\pi_j)_{-1}\vac &= - i p_j C_0 c_1 C_1 c_{-1} x^i_{-1}\vac - \delta_{ij} C_0 c_1 C_1 c_{-2} \vac\\
    d C_0 c_1 C_1 (x^0_{-1})^2 \vac &= - C_0 c_1 C_1 C_{-2} \vac\\
    d C_0 c_1 C_1 x^0_{-1} x^i_{-1}\vac &= 0\\
    d C_0 c_1 C_1 x^i_{-1} x^j_{-1}\vac &= 0\\
    d C_0 c_1 C_1 (\pi_0)_{-1}^2 \vac &= 0\\
    d C_0 c_1 C_1 (\pi_0)_{-1} (\pi_i)_{-1}\vac &= - i p_i C_0 c_1 C_1 c_{-1} (\pi_0)_{-1}\vac\\
    d C_0 c_1 C_1 (\pi_i)_{-1} (\pi_j)_{-1}\vac &= - i C_0 c_1 C_1  c_{-1} \left( p_j (\pi_i)_{-1}+ p_i (\pi_j)_{-1} \right)\vac + \tau^2 \delta_{ij} C_0 c_1 C_1 C_{-2}\vac.
  \end{aligned}
\end{align}

\subsubsection{Cocycles}
\label{sec:cocycles-3}

The general cochain in $\Crel^3(p)$ is given by
\begin{multline}
  \label{eq:Crel3}
  \Psi = \phi^{(1)} c_1 C_1 c_{-2}\vac + \phi^{(2)} c_1 C_1 C_{-2}\vac + \phi^{(3)} c_1 C_1 c_{-1} x^0_{-1}\vac + A_i^{(4)} c_1 C_1 c_{-1} x^i_{-1}\vac \\
  + \phi^{(5)} c_1 C_1 C_{-1} x^0_{-1} \vac + A_i^{(6)} c_1 C_1 C_{-1} x^i_{-1}\vac + \phi^{(7)} c_1 C_1c_{-1} (\pi_0)_{-1}\vac + A_i^{(8)} c_1 C_1 c_{-1} (\pi_i)_{-1}\vac\\
  + \phi^{(9)} c_1 C_1 C_{-1} (\pi_0)_{-1}\vac  + A_i^{(10)} c_1 C_1 C_{-1} (\pi_i)_{-1}\vac + \phi^{(11)} C_0 c_1 c_{-1}\vac + \phi^{(12)} C_0 c_1 C_{-1}\vac \\
  + \phi^{(13)} C_0 c_{-1} C_1\vac + \phi^{(14)} C_0 C_1 C_{-1}\vac  + \phi^{(15)} C_0 c_1 C_1 x^0_{-2}\vac + A_i^{(16)} C_0 c_1 C_1 x^i_{-2}\vac\\
  + \phi^{(17)} C_0 c_1 C_1 (\pi_0)_{-2}\vac + A_i^{(18)} C_0 c_1 C_1 (\pi_i)_{-2}\vac + \phi^{(19)} C_0 c_1 C_1 x^0_{-1} (\pi_0)_{-1}\vac\\
   + A_i^{(20)} C_0 c_1 C_1 x^i_{-1} (\pi_0)_{-1}\vac + A_i^{(21)} C_0 c_1 C_1 x^0_{-1} (\pi_i)_{-1}\vac + T_{ij}^{(22)} C_0 c_1 C_1 x^i_{-1} (\pi_j)_{-1}\vac\\
   + \phi^{(23)} C_0 c_1 C_1 (x^0_{-1})^2\vac+ A_i^{(24)} C_0 c_1 C_1 x^0_{-1} x^i_{-1}\vac + T_{ij}^{(25)} C_0 c_1 C_1 x^i_{-1} x^j_{-1}\vac\\
   + \phi^{(26)} C_0 c_1 C_1 (\pi_0)_{-1}^2\vac + A_i^{(27)} C_0 c_1 C_1 (\pi_0)_{-1} (\pi_i)_{-1}\vac +  T^{(28)}_{ij} C_0 c_1 C_1 (\pi_i)_{-1} (\pi_j)_{-1}\vac.
\end{multline}

The cocycle equation $d\Psi=0$ breaks up into the following components: 
\begin{align*}
  i p_i A_i^{(10)} + 2 \phi^{(12)} -2 \phi^{(13)} &= 0 & \tag{$c_1 C_1 c_{-1} C_{-1}$}\\
  2 \phi^{(2)} + 3 \phi^{(12)} + 3 \phi^{(13)} - 2 i p_i A_i^{(18)} - \phi^{(19)}- T^{(22)}_{ii}  &= 0 & \tag{$C_0 c_1 C_1 c_{-2}$}\\
  - 3 \phi^{(14)} - \phi^{(23)} + \tau^2 T^{(28)}_{ii} &= 0 & \tag{$C_0 c_1 C_1 C_{-2}$}\\
  \phi^{(5)} - \phi^{(15)} - i p_i A_i^{(21)}  &= 0 & \tag{$C_0 c_1 C_1 c_{-1} x^0_{-1}$}\\ 
  A_i^{(6)} - \tau^2 A_i^{(8)} + i p_i \phi^{(13)} - A_i^{(16)} - i p_j T_{ij}^{(22)} &= 0 & \tag{$C_0 c_1 C_1 c_{-1}x^i_{-1}$}\\
  - \tau^2 A_i^{(10)} - i p_i \phi^{(14)} + 2 \tau^2 A_i^{(18)} &= 0 & \tag{$C_0 c_1 C_1 C_{-1} x^i_{-1}$}\\
  \phi^{(3)}  + \phi^{(9)} - 2 \phi^{(17)} - i p_i A_i^{(27)} &= 0 & \tag{$C_0 c_1 C_1 c_{-1} (\pi_0)_{-1}$}\\
   A_i^{(10)} - 2 A_i^{(18)} - 2 i p_j T_{ij}^{(28)} &= 0 & \tag{$C_0 c_1 C_1 c_{-1} (\pi_i)_{-1}$}\\
  \phi^{(5)} - \phi^{(15)} &= 0. & \tag{$C_0 c_1 C_1 C_{-1} (\pi_0)_{-1}$}
\end{align*}

\subsubsection{$\Hrel^3(0)$}
\label{sec:hrel30}

Setting $p=0$ in the cocycle equations, these can be solved as
follows:
\begin{equation}
  \begin{aligned}
    \phi^{(13)} &= \phi^{(12)}\\
    \phi^{(15)} &= \phi^{(5)}\\
    A_i^{(16)} &= A_i^{(6)} - \tau^2 A_i^{(8)}\\
    \phi^{(17)} &= \tfrac12 \phi^{(3)} + \tfrac12 \phi^{(9)}\\
    A_i^{(18)} &= \tfrac12 A^{(10)}\\
    \phi^{(19)} &= 2 \phi^{(2)} + 6 \phi^{(12)} - \Tr T^{(22)}\\
    \phi^{(23)} &= -3 \phi^{(14)} + \tau^2 \Tr T^{(28)}.
  \end{aligned}
\end{equation}
In terms of representations of $\SO(25)$, the relative $3$-cochains
decompose as
\begin{equation}
  \Crel^3(0) \cong 18 \CC \oplus 10 V \oplus 3 \odot^2_0 V \oplus \ext{2} V.
\end{equation}
The cocycle conditions allow us to solve for five scalars $\phi^{(a)}$
for $a = 13, 15, 17, 19, 23$ and two vectors $A_i^{(a)}$ for $a =
16,18$, hence in terms of representations of $\SO(25)$, the relative
$3$-cocycles decompose as
\begin{equation}
  \Zrel^3(0) \cong 13 \CC \oplus 8 V \oplus  3 \odot^2_0 V \oplus
  \ext{2} V
\end{equation}
so that
\begin{equation}
  \Brel^4(0) \cong 5 \CC \oplus 2 V.
\end{equation}
Since $\Brel^3(0) \cong 10 \CC \oplus 6 V \oplus 2 \odot^2_0 V$, we
find that
\begin{equation}
  \Hrel^3(0) \cong 3 \CC \oplus 2 V \oplus  \odot^2_0 V \oplus \ext{2} V.
\end{equation}

\subsubsection{$\Hrel^3(0,\p)$}
\label{sec:hrel30p}

The cocycle conditions are equivalent to
\begin{equation}
  \begin{aligned}
    \phi^{(13)} &= \phi^{(12)} + \tfrac{i}2 p_i A_i^{(10)}\\
    \phi^{(15)} &= \phi^{(5)}\\
    A_i^{(16)} &= A_i^{(6)} - \tau^2 A_i^{(8)} - \tfrac12 p_i p_j A_j^{(10)} + i p_i \phi^{(12)} - i p_j T_{ij}^{(22)}\\
    \phi^{(17)} &= \tfrac12 \phi^{(3)} + \tfrac12 \phi^{(9)} - \tfrac{i}2 p_i A_i^{(27)}\\
    A_i^{(18)} &= \tfrac12 A_i^{(10)} + \tfrac{i}{2\tau^2} p_i \phi^{(14)}\\
    \phi^{(19)} &= 2 \phi^{(2)} + 6 \phi^{(12)} + \tfrac{i}2 p_i A_i^{(10)} + \tfrac{1}{\tau^2} p^2 \phi^{(14)} - \Tr T^{(22)}\\
    p_i A_i^{(21)} &= 0\\
    \phi^{(23)} &= -3 \phi^{(14)} + \tau^2 \Tr T^{(28)}\\
    p_j T_{ij}^{(28)} &= -\tfrac1{2\tau^2} p_i \phi^{(14)},
  \end{aligned}
\end{equation}
where the last equation is imposing the vanishing of a $\CC \oplus
V^\perp$ subspace of $\odot^2 V$.  Counting the representations that
are put to zero, we see that
\begin{equation}
  \Brel^4(0,\p) \cong 9 \CC \oplus 3 V^\perp.
\end{equation}
Since
\begin{equation}
  \Crel^3(0,\p) \cong 31 \CC \oplus 14 V^\perp \oplus 3 \odot^2_0 V^\perp \oplus \ext{2} V^\perp,
\end{equation}
we see that
\begin{equation}
  \Zrel^3(0,\p) \cong 22 \CC \oplus 11 V^\perp \oplus 3 \odot^2_0 V^\perp \oplus \ext{2} V^\perp.
\end{equation}
Since $\Brel^3(0,\p) \cong 19\CC \oplus 8 V^\perp \oplus 2 \odot^2_0 V^\perp$,
we deduce that
\begin{equation}
  \Hrel^3(0,\p) \cong 3 \CC \oplus 3 V^\perp \oplus \odot^2_0 V^\perp \oplus \ext{2} V^\perp.
\end{equation}

\subsection{Calculating $\Hrel^4(p)$}
\label{sec:hrel4}

The general cochain in $\Crel^4(p)$ is given by
\begin{multline}
  \label{eq:Crel4}
  \Psi = \phi^{(1)} c_1 C_1 c_{-1} C_{-1} \vac + \phi^{(2)} C_0 c_1 C_1 c_{-2} \vac + \phi^{(3)} C_0 c_1 C_1 C_{-2} \vac + \phi^{(4)} C_0 c_1 C_1 c_{-1} x^0_{-1}\vac \\
  + A_i^{(5)} C_0 c_1 C_1 c_{-1} x^i_{-1}\vac + \phi^{(6)} C_0 c_1 C_1 C_{-1} x^0_{-1}\vac + A_i^{(7)} C_0 c_1 C_1 C_{-1} x^i_{-1} \vac\\
  + \phi^{(8)} C_0 c_1 C_1 c_{-1} (\pi_0)_{-1}\vac + A_i^{(9)} C_0 c_1 C_1 c_{-1} (\pi_i)_{-1} \vac\\
  + \phi^{(10)} C_0 c_1 C_1 C_{-1} (\pi_0)_{-1}\vac + A_i^{(11)} C_0 c_1 C_1 C_{-1} (\pi_i)_{-1}\vac
\end{multline}
and its differential is easily calculated to give (when $p_0=0$)
\begin{equation}
  d\Psi = \left( - i p_i A_i^{(11)} \right) C_0 c_1 C_1 c_{-1} C_{-1} \vac.
\end{equation}
This gives one cocycle equation:
\begin{equation}
  \label{eq:4-cocycle}
  p_i A_i^{(11)}  = 0.
\end{equation}

\subsubsection{$\Hrel^4(0)$}
\label{sec:hrel40}

In this case, the cocycle condition is trivially satisfied, so that
$\Zrel^4(0) = \Crel^4(0) \cong 7 \CC \oplus 4 V$.  This shows that
$\Brel^5(0) = 0$ and, since $\Brel^4(0) \cong 5 \CC \oplus 2 V$, it
follows that
\begin{equation}
  \Hrel^4(0) \cong 2 \CC \oplus 2 V.
\end{equation}

\subsubsection{$\Hrel^4(0,\p)$}
\label{sec:hrel4p_0p}

The cocycle condition is simply $p_i A_i^{(11)} = 0$, which says that
$A^{(11)}$ is transverse.  Since  $\Crel^4(0,\p) \cong 11 \CC \oplus 4
V^\perp$, it follows that $\Zrel^4(0,\p) \cong 10 \CC \oplus 4
V^\perp$ and hence $\Brel^5(0,\p) \cong \CC$.  Since
$\Brel^4(0,\p)\cong 9\CC \oplus 3 V^\perp$, it
follows that
\begin{equation}
  \Hrel^4(0,\p) \cong \CC \oplus V^\perp.
\end{equation}

\subsection{Calculating $\Hrel^5(p)$}
\label{sec:hrel5}

Since $\Crel^6(p) = 0$, every $5$-cochain is a cocycle.  There is a
one-dimensional space of $5$-cochains, spanned by $C_0 c_1 C_1 c_{-1}
C_{-1}\vac$.  Hence $\Zrel^5(p) \cong \Crel^5(p)$.  Since
$\Brel^5(0)=0$, but $\Brel^5(0,\p) \cong \CC$, it follows that
\begin{equation}
  \Hrel^5(p) \cong
  \begin{cases}
    \CC & p = 0\\
    0 & \text{otherwise}.
  \end{cases}
\end{equation}

\subsection{Summary}
\label{sec:summary}

The results for $p = (0,\p)$, with $\p \neq \bzero$ are summarised in
Table~\ref{tab:Hrel0p}, whereas the results for $p=0$ are summarised
in Table~\ref{tab:Hrelpeqzero}.  As a corollary of the calculation, we
have the following:

\begin{corollary}
  The relative cohomology $\Hrel^\bullet(p)$ displays Poincaré
  duality:
  \begin{equation}
    \Hrel^n(p) \cong \Hrel^{5-n}(p)
  \end{equation}
  as representations of the stabiliser of $p$.
\end{corollary}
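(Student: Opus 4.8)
The duality can be verified at once by comparing the entries of Table~\ref{tab:Hrel0p} and Table~\ref{tab:Hrelpeqzero}: for each admissible momentum the cohomology listed in degree $n$ agrees, as an $\SO(25)$- or $\SO(24)$-representation, with that in degree $5-n$. The plan, however, is to explain \emph{why} this holds structurally, since a conceptual argument exhibits the isomorphism as canonical rather than a numerical coincidence of the tabulated multiplicities.

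First I would exhibit a nondegenerate $\CC$-valued bilinear pairing
\[
  \langle -,-\rangle : \Crel^n(p) \otimes \Crel^{5-n}(-p) \to \CC,
\]
obtained by restricting to the relative complex the natural Shapovalov/BPZ bilinear form on the Fock space. This form is defined by declaring $\phi_{-n}$ to be the adjoint of $\phi_n$ for each oscillator $\phi \in \{X^\mu,\Pi_\mu,b,c,B,C\}$ and contracting against the conjugate vacuum; nondegeneracy is then immediate from the (anti)commutation relations~\eqref{eq:mode-brackets-opes}. The ghost zero-mode content forces the form to vanish except between states whose ghost numbers sum to $5$, which is precisely the top degree of the relative complex, spanned by $C_0 c_1 C_1 c_{-1} C_{-1}\vac$. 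In the matter sector the form necessarily relates momentum $p$ to $-p$, but since $\SO(25)$ (resp.\ $\SO(24)$) stabilises $\p$ and $-\p$ simultaneously and $\Hrel^\bullet(p) \cong \Hrel^\bullet(-p)$ as stabiliser-representations, this introduces no difficulty.

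Next I would verify that the BRST differential is graded self-adjoint for this pairing, $\langle d\alpha, \beta\rangle = \pm\langle \alpha, d\beta\rangle$, i.e.\ that $d^\dagger = d$ under $\phi_n \mapsto \phi_{-n}$; this is the standard self-adjointness of the critical string BRST operator and is checked directly from the explicit current~\eqref{eq:brst-current}. A nondegenerate pairing of complexes with self-adjoint differential descends to a nondegenerate pairing on cohomology, giving $\Hrel^n(p) \cong \big(\Hrel^{5-n}(-p)\big)^*$. Finally, since every representation that occurs — $\CC$, $V$, $\odot^2_0 V$, $\ext{2}V$ and their $V^\perp$-analogues — is the complexification of a real orthogonal representation of the compact stabiliser and hence self-dual, both the dual and the $p \mapsto -p$ flip may be dropped, yielding $\Hrel^n(p) \cong \Hrel^{5-n}(p)$ as representations of the stabiliser of $p$. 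The main obstacle is the careful bookkeeping of the ghost zero modes $b_0,c_0,C_0$ in passing from the full Fock-space form to its restriction to $\ker b_0 \cap \ker \Ltot_0$, and in confirming that the self-adjointness of $d$ survives this restriction; the remainder of the argument is formal.
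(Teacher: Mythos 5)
Your opening sentence is, in substance, the paper's entire proof: the corollary is stated immediately after the summary of the calculation (``As a corollary of the calculation, we have the following''), and the isomorphism $\Hrel^n(p)\cong\Hrel^{5-n}(p)$ is simply read off from Tables~\ref{tab:Hrel0p} and~\ref{tab:Hrelpeqzero}, whose entries in degrees $n$ and $5-n$ visibly coincide as representations of the stabiliser. So as far as establishing the statement goes, you and the paper agree, and your verification is complete.

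Your structural supplement is the right idea --- it is the Frenkel--Garland--Zuckerman-style duality for semi-infinite cohomology \cite{MR865483} --- but one step fails as written: the \emph{restriction} of the Fock-space BPZ form to $\Crel^\bullet(p) = \ker b_0 \cap \ker \Ltot_0$ is not nondegenerate; it is identically zero. Saturation of the ghost zero modes requires one $c_0$ and one $C_0$ between the two states being paired, and no state of the relative complex contains $c_0$. The correct pairing is the $c_0$-inserted form $\langle \alpha,\beta\rangle_{\mathrm{rel}} := \langle \alpha, c_0 \beta\rangle$, which pairs $\fM^n$ with $C_0 \fM^{4-n}$ and $C_0\fM^{n-1}$ with $\fM^{5-n}$ (recall $\fM^\bullet$ tops out at ghost number $4$); this is \emph{why} the degrees sum to $5$, rather than it being automatic from zero-mode counting as you assert. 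With the insertion, your self-adjointness step also needs repair, since $d$ does not commute with $c_0$: indeed $[d,c_0] = (c\,\d c)_0 = 4c_2c_{-2} + 2c_1c_{-1}$, which is precisely the paper's connecting homomorphism $\Delta$ of equation~\eqref{eq:conn-hom}. The saving grace is that $[d,c_0]\beta$ is again $c_0$-free, so $\langle \alpha, [d,c_0]\beta\rangle = 0$ and the compatibility $\langle d\alpha, c_0\beta\rangle = \pm \langle \alpha, c_0\, d\beta\rangle$ does survive on the relative complex --- but this must be said, not assumed. Your handling of the $p \mapsto -p$ flip and of the self-duality of the $\SO(25)$ and $\SO(24)$ representations is fine (the sign-flip automorphism $x^\mu_n \mapsto -x^\mu_n$, $(\pi_\mu)_n \mapsto -(\pi_\mu)_n$ commutes with $d$ and with the stabiliser action). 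Once repaired, your argument buys something the paper's proof does not: a canonical duality valid for any matter realisation of the BMS$_3$ algebra, whereas the paper obtains the duality only a posteriori from the explicitly computed multiplicities.
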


\begin{table}[h!]
\centering
\caption{Summary of the calculation of $\Hrel^\bullet(0,\p)$}
\label{tab:Hrel0p}
\resizebox{\linewidth}{!}{
  \begin{tabular}{>{$}c<{$}|*{4}{>{$}l<{$}}}
    \toprule\\
    n & \Crel^n(0,\p) & \Zrel^n(0,\p) & \Brel^n(0,\p) & \Hrel^n(0,\p)\\
    \midrule
    0 & \CC & 0 & 0 & 0 \\
    1 & 11\CC \oplus 4 V^\perp & 2\CC \oplus V^\perp & \CC & \CC \oplus V^\perp\\
    2 & 31\CC \oplus 14 V^\perp \oplus 3 \odot^2_0 V^\perp \oplus \ext{2} V^\perp & 12\CC \oplus 6 V^\perp \oplus \odot^2_0 V^\perp \oplus \ext{2} V^\perp  & 9\CC \oplus 3 V^\perp & 3 \CC \oplus 3 V^\perp \oplus \odot^2_0 V^\perp \oplus \ext{2} V^\perp \\
    3 & 31\CC \oplus 14 V^\perp \oplus 3 \odot^2_0 V^\perp \oplus \ext{2} V^\perp  & 22\CC \oplus 11 V^\perp \oplus 3 \odot^2_0 V^\perp \oplus \ext{2} V^\perp  & 19\CC \oplus 8 V^\perp \oplus 2 \odot^2_0 V^\perp & 3\CC \oplus 3 V^\perp \oplus \odot^2_0 V^\perp \oplus \ext{2} V^\perp \\
    4 & 11\CC \oplus 4 V^\perp & 10\CC \oplus 4 V^\perp & 9 \CC \oplus 3 V^\perp & \CC \oplus V^\perp\\
    5 &\CC & \CC & \CC & 0 \\
    \bottomrule
  \end{tabular}
}
\end{table}

\begin{table}[h!]
  \centering
  \caption{Summary of the calculation of $\Hrel^\bullet(0)$}
  \label{tab:Hrelpeqzero}
  \resizebox{\linewidth}{!}{
  \begin{tabular}{>{$}c<{$}|*{4}{>{$}l<{$}}}
    \toprule\\
    n & \Crel^n(0) & \Zrel^n(0) & \Brel^n(0) & \Hrel^n(0)\\
    \midrule
    0 & \CC & \CC & 0 & \CC \\
    1 & 7\CC \oplus 4 V & 2\CC \oplus 2 V & 0 & 2 \CC \oplus 2 V\\
    2 & 18\CC \oplus 10 V \oplus 3 \odot^2_0 V \oplus \ext{2} V & 8\CC \oplus 4 V \oplus \odot^2_0 V \oplus \ext{2} V  & 5\CC \oplus 2 V & 3\CC \oplus 2 V \oplus \odot^2_0 V \oplus \ext{2} V\\
    3 & 18\CC \oplus 10 V \oplus 3 \odot^2_0 V \oplus \ext{2} V & 13\CC \oplus 8 V \oplus 3 \odot^2_0 V \oplus \ext{2} V  & 10\CC \oplus 6 V \oplus 2 \odot^2_0 V & 3\CC \oplus 2 V \oplus \odot^2_0 V \oplus \ext{2} V\\
    4 & 7\CC \oplus 4 V& 7\CC \oplus 4 V & 5\CC \oplus 2 V & 2 \CC \oplus 2 V\\
    5 &\CC & \CC & 0 & \CC \\
    \bottomrule
  \end{tabular}
}
\end{table}

\section{From relative to absolute cohomology}
\label{sec:from-relat-absol}

As observed earlier, the complexes $\sC^\bullet(p)$ and
$\Crel^\bullet(p)$ are related by a split short exact
sequence~\eqref{eq:ses-brst}.   The celebrated Snake Lemma (see, e.g.,
\cite[Ch.III, §9]{MR1878556}) asserts that a short exact sequence of
complexes induces a long exact sequence in cohomology:
\begin{equation}
  \label{eq:les}
  \begin{tikzcd}
    \cdots \arrow[r] & \Hrel^{n-2}(p) \arrow[r] & \Hrel^n(p) \arrow[r] & \sH^n(p) \arrow[r] & \Hrel^{n-1}(p)
    \arrow[r] & \Hrel^{n+1}(p) \arrow[r] & \cdots
  \end{tikzcd}
\end{equation}
where $\sH^n(p)$ and $\Hrel^n(p)$ are the cohomologies of the absolute and
relative complexes, respectively, at momentum $p$.  The maps $\Hrel^n(p) \to \sH^n(p)$ and $\sH^n(p)
\to \Hrel^{n-1}(p)$ are induced by the inclusion and $b_0$ respectively,
whereas the connecting homomorphism $\Hrel^{n-1}(p) \to \Hrel^{n+1}(p)$
deserves further attention.  Let $\Psi \in \Crel^{n-1}(p)$ be a cocycle,
so $d\Psi = 0$.  Then $d c_0 \Psi \in \sC^{n+1}(p)$ lies in the kernel of
$b_0$ and hence lies in $\Crel^{n+1}(p)$:
\begin{align*}
  b_0 d c_0 \Psi &= [b_0,d] c_0 \Psi - d b_0 c_0 \Psi \\
                 &= \Ltot_0 c_0\Psi - d \Psi \\
                 &= 0. & \tag{since $c_0 \Psi \in \ker \Ltot_0$ and
                        $d\Psi = 0$}
\end{align*}
Moreover, since $d^2 =0$, it follows that $d c_0 \Psi$ is a cocycle of
the relative subcomplex.  Although it is clearly a coboundary in the
absolute complex, that need not be the case in the relative
subcomplex.  We can calculate the connecting homomorphism directly.
Since $d\Psi = 0$, it follows that $d c_0 \Psi = [d,c_0] \Psi$ and we
can calculate $[d,c_0]$ in a way similar to how we proved
Lemma~\ref{lem:dC0}:
\begin{equation}
  [d,c_0] = [J_0,c_0] = \left( [J,c]_1 \right)_0.
\end{equation}
It follows by calculation that $[J,c]_1 = c\d c$ and hence $[d,c_0] =
(c \d c)_0$.  As in the proof of Lemma~\ref{lem:dC0}, we find that
\begin{equation}
  (c \d c)_0 = \sum_{m \in \ZZ} (1-m) c_{-m} c_m,
\end{equation}
which, acting on the relative subcomplex, reduces to
\begin{equation}
  (c \d c)_0 = 4 c_2 c_{-2} + 2 c_1 c_{-1}
\end{equation}
and hence the connecting homomorphism is given at the level of
cocycles by
\begin{equation}
  \label{eq:conn-hom}
  \Psi \mapsto d c_0 \Psi = (4 c_2 c_{-2} + 2 c_1 c_{-1}) \Psi.
\end{equation}

From our calculations in Section~\ref{sec:relat-brst-cohom}, we can
already deduce some consequences from the long exact
sequence~\eqref{eq:les}.  Since $\Hrel^\bullet(p) =0$ for $p_0 \neq
0$, the long exact sequence decomposes into segments of the form
\begin{equation}
  \begin{tikzcd}
    0 \arrow[r] & \sH^n(p) \arrow[r] & 0
  \end{tikzcd}
\end{equation}
and hence we have
\begin{proposition}
  \begin{equation}
    \sH^\bullet(p) = 0 \quad \text{for $p_0 \neq 0$.}
  \end{equation}
\end{proposition}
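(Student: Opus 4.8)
The plan is to read the vanishing off directly from the long exact sequence~\eqref{eq:les}, using as the only substantive input the Proposition established just above, which asserts that the \emph{relative} cohomology $\Hrel^\bullet(p)$ vanishes identically whenever $p_0 \neq 0$. Once that input is in hand, the passage to the absolute cohomology is pure homological algebra.

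Concretely, I would fix $p=(p_0,\p)$ with $p_0 \neq 0$ and isolate the three consecutive terms of~\eqref{eq:les} surrounding $\sH^n(p)$, namely
\[
  \Hrel^n(p) \longrightarrow \sH^n(p) \longrightarrow \Hrel^{n-1}(p).
\]
By the preceding Proposition both outer terms vanish. Exactness at $\sH^n(p)$ says that the image of the left-hand map (which is $0$) coincides with the kernel of the right-hand map, so $\sH^n(p) \to \Hrel^{n-1}(p)$ is injective; since its target is $0$, we conclude $\sH^n(p)=0$. As $n$ is arbitrary, $\sH^\bullet(p)=0$. Equivalently, once every $\Hrel$-term in~\eqref{eq:les} is zero the long exact sequence degenerates into disconnected segments of the form $0 \to \sH^n(p) \to 0$, from which the conclusion is immediate.

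I do not expect any genuine obstacle at this stage: the entire force of the statement has been front-loaded into the computation behind the relative vanishing result (the explicit verification that the spaces of relative cocycles and coboundaries have equal dimension at each ghost number). Given that, one needs neither the explicit form of the connecting homomorphism~\eqref{eq:conn-hom} nor the splitting of the short exact sequence~\eqref{eq:ses-brst}; only the exactness of~\eqref{eq:les} and the vanishing of its relative terms are used.
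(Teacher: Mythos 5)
Your proposal is correct and follows exactly the paper's own argument: the paper likewise observes that once $\Hrel^\bullet(p)=0$ for $p_0\neq 0$, the long exact sequence~\eqref{eq:les} collapses into segments $0 \to \sH^n(p) \to 0$, forcing $\sH^\bullet(p)=0$. Your extraction of the three-term segment $\Hrel^n(p) \to \sH^n(p) \to \Hrel^{n-1}(p)$ and the remark that neither the connecting homomorphism nor the splitting of~\eqref{eq:ses-brst} is needed are both accurate.
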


Since $\Hrel^{n<0}(p)=0$, taking $n=0$ in the long exact
sequence~\eqref{eq:les} gives that $\sH^0(p) \cong \Hrel^0(p)$, so that
\begin{equation}
  \sH^0(p) \cong
  \begin{cases}
    \CC & p = 0\\
    0 & p \neq 0.
  \end{cases}
\end{equation}
Similarly, since $\Hrel^{n>5}(p)=0$, taking $n=6$ in the long exact
sequence~\eqref{eq:les} gives that $\sH^6(p) \cong \Hrel^5(p)$, so
that
\begin{equation}
  \sH^6(p) \cong
  \begin{cases}
    \CC & p = 0\\
    0 & p \neq 0,
  \end{cases}
\end{equation}
which already hints at the persistence of the Poincaré duality of the
relative cohomology in the absolute cohomology.

To make further progress in the calculation of $\sH^\bullet(p)$, we
need to look more closely at the cocycle condition.  In the sequel we
will use the notation $\sZ^\bullet(p)$, $\sB^\bullet(p)$ for the
cocycles and coboundaries of the absolute complex, respectively.  Let
$\Psi \in \sC^n(p)$ for $n=1,\dots,5$.  Then $\Psi = \psi + c_0
\zeta$, where $\psi \in \Crel^n(p)$ and $\zeta \in \Crel^{n-1}(p)$.
The cocycle condition $d\Psi = 0$ becomes
\begin{equation}
  d \Psi = d \psi + [d,c_0] \zeta - c_0 d\zeta = 0,
\end{equation}
from where it follows that the last term is the only one involving
$c_0$ and hence it has to vanish on its own.  Indeed, if $c_0 d\zeta =
0$, then also $b_0 c_0 d\zeta = 0$, which using that $b_0 d\zeta = 0$,
results in $d\zeta = [b_0,c_0]d\zeta = 0$ and hence $\zeta \in
\Zrel^{n-1}(p)$.  However, $\psi$ is not necessarily a cocycle, instead
it obeys
\begin{equation}
  d\psi = - \Delta \zeta,
\end{equation}
where $\Delta := 4c_2 c_{-2} + 2 c_1 c_{-1}$ is the connecting
homomorphism given in equation~\eqref{eq:conn-hom}.  Our strategy is
then as follows.  We work ghost number by ghost number and determine
the image of the connecting homomorphism acting on relative cocycles
at ghost number $n-1$ and explore how, if at all, this changes the
cocycle conditions at ghost number $n$.  In many cases this will turn
out to have no effect and in those cases
$\sZ^n(p) \cong \Zrel^n(p) \oplus \Zrel^{n-1}(p)$.  In a small number
of cases there will be a change to the cocycle conditions at ghost
number $n$.

\subsection{Calculating $\sH^1(p)$}
\label{sec:H1p}

We first calculate the action of the connecting homomorphism $\Delta$
on $\Crel^0(p)$:
\begin{equation}
  \Delta \vac = 2 c_1 c_{-1} \vac.
\end{equation}

\subsubsection{Calculating $\sH^1(0,\p)$}
\label{sec:H10p}

Since $\p \neq \bzero$, $\Zrel^0(0,\p) = 0$ and hence $\sZ^1(0,\p) =
\Zrel^1(0,\p)$.  Since $\sB^1(0,\p) = \Brel^1(0,\p)$,  it follows that
\begin{equation}
  \sH^1(0,\p) \cong \Hrel^1(0,\p) \cong \CC \oplus V^\perp.
\end{equation}
We record for later use that since $\sC^1(0,\p) \cong \Crel^1(0,\p)
\oplus \Crel^0(0,\p)$,
\begin{equation}
  \sC^1(0,\p) \cong 12 \CC \oplus 4 V^\perp
\end{equation}
and hence that
\begin{equation}
  \sB^2(0,\p) \cong 10 \CC \oplus 3 V^\perp.
\end{equation}

\subsubsection{Calculating $\sH^1(0)$}
\label{sec:H1pzero}

If $p=0$, $\Zrel^0(0) = \CC \ket{0}$ and $\Brel^1(0)=0$.  The cocycle
condition for $\psi + c_0 \zeta \ket{0} \in \sZ^1(0)$ gets modified.  In the
notation of equation~\eqref{eq:Crel1}, the component equation along $c_1 c_{-1}
\ket{0}$ is modified so that now $\phi^{(11)} + 2 \zeta = 0$, but the
component equation along $c_1 C_1 x^i_{-1} x^j_{-1}\ket{0}$ still says
that $\phi^{(11)} = 0$, so that $\zeta = 0$ as well.  In summary,
$\sZ^1(0) = \Zrel^1(0)$ and since $\sB^1(0) = \Brel^1(0) = 0$, it
follows that
\begin{equation}
  \sH^1(0) \cong \Hrel^1(0) \cong 2 \CC \oplus 2 V.
\end{equation}
We record for later use that since $\sC^1(0) \cong \Crel^1(0) \oplus
\Crel^0(0)$,
\begin{equation}
  \sC^1(0) \cong 8 \CC \oplus 4 V
\end{equation}
and hence
\begin{equation}
  \sB^2(0) \cong 6 \CC \oplus 2 V.
\end{equation}

\subsection{Calculating $\sH^2(p)$}
\label{sec:H2p}

We first calculate the action of the connecting homomorphism $\Delta$
on $\Crel^1(p)$ and only then specialise its action on cocycles.  The
general relative $1$-cochain was given in equation~\eqref{eq:Crel1},
but we will change the notation of the coefficients:
\begin{multline}
  \Psi = \psi^{(1)} C_0 \vac + \psi^{(2)} c_1 x^0_{-1} \vac + \chi^{(3)}_i c_1 x^i_{-1} \vac + \psi^{(4)} C_1 x^0_{-1}\vac\\
  + \chi^{(5)}_i C_1 x^i_{-1}\vac + \psi^{(6)} c_1 (\pi_0)_{-1}\vac + \chi^{(7)}_i c_1 (\pi_i)_{-1} \vac + \psi^{(8)} C_1 (\pi_0)_{-1}\vac\\
+ \chi^{(9)}_i C_1 (\pi_i)_{-1} \vac + \psi^{(10)} c_1 C_1 b_{-2} \vac + \psi^{(11)} c_1 C_1 B_{-2} \vac.
\end{multline}
We calculate
\begin{multline}
  \Delta \Psi = 2 \psi^{(1)} C_0 c_1 c_{-1} \vac - 2 \psi^{(4)} c_1 C_1  c_{-1} x^0_{-1}\vac - 2 \chi^{(5)}_i c_1 C_1 c_{-1} x^i_{-1}\vac\\
  - 2 \psi^{(8)} c_1 C_1 c_{-1} (\pi_0)_{-1}\vac - 2 \chi^{(9)}_i c_1 C_1 c_{-1} (\pi_i)_{-1} \vac - 4 \psi^{(10)} c_1 C_1 c_{-2} \vac.
\end{multline}

\subsubsection{Calculating $\sH^2(0,\p)$}
\label{sec:H20p}

The most general relative $1$-cocycle is of the form
\begin{equation}
  \zeta = \chi^{(3)}_i c_1 x^i_{-1} \ket{0,\p} + \psi^{(6)} c_1
  (\pi_0)_{-1} \ket{0,\p},
\end{equation}
and hence $\Delta \zeta = 0$.  Therefore $\sZ^2(0,\p) \cong
\Zrel^2(0,\p) \oplus \Zrel^1(0,\p)$ and hence
\begin{equation}
  \sZ^2(0,\p) \cong 14 \CC \oplus 7 V^\perp \oplus \odot^2_0 V^\perp
  \oplus \ext{2} V^\perp.
\end{equation}
Since $\sC^2(0,\p) \cong \Crel^2(0,\p) \oplus \Crel^1(0,\p)$, it
follows that
\begin{equation}
  \sC^2(0,\p) \cong 42\CC \oplus 18 V^\perp \oplus 3 \odot^2_0 V^\perp
  \oplus \ext{2} V^\perp
\end{equation}
and hence $\sB^3(0,\p) \cong 28 \CC \oplus 11 V^\perp \oplus 2
\odot^2_0 V^\perp$.  Finally, using that $\sB^2(0,\p) \cong 10 \CC
\oplus 3 V^\perp$, we find that
\begin{equation}
  \sH^2(0,\p) \cong 4\CC \oplus 4 V^\perp \oplus \odot^2_0 V^\perp
  \oplus \ext{2} V^\perp.
\end{equation}

\subsubsection{Calculating $\sH^2(0)$}
\label{sec:H2pzero}

Let $\zeta \in \Zrel^1(0)$ be a general cocycle.  Its image under the
connecting homomorphism is
\begin{equation}
  \Delta \zeta = -2 \psi^{(2)} c_1 C_1 c_{-1} (\pi_0)_{-1} \ket{0} - 2 \chi^{(25)}_i c_1 C_1 c_{-1} x^i_{-1} \ket{0}.
\end{equation}
Two of the cocycle conditions in $\Crel^2(0)$ are modified:
\begin{equation}
  \begin{aligned}
    2 \phi^{(7)} + 2 \phi^{(17)} - 2 \psi^{(2)} &= 0\\
    2 A_i^{(4)} + A_i^{(16)} - 2 \chi^{(5)} &= 0,  
  \end{aligned}
\end{equation}
but, since $\phi^{(17)}$ and $A_i^{(16)}$ do not appear in any other
cocycle condition, we may solve for them, resulting in
\begin{equation}
  \sZ^2(0) \cong \Zrel^2(0) \oplus \Zrel^1(0) \cong 10 \CC \oplus 6 V
  \oplus \odot^2_0 V \oplus \ext{2} V.
\end{equation}
Since $\sC^2(0) \cong 25\CC \oplus 14 V \oplus 3 \odot^2_0 V \oplus
\ext{2} V$, we see that
\begin{equation}
  \sB^3(0) \cong 15 \CC \oplus 8 V \oplus 2 \odot^2_0 V.
\end{equation}
Finally, since $\sB^2(0) \cong  6 \CC \oplus 2 V$, we have that
\begin{equation}
  \sH^2(0) \cong 4 \CC \oplus 4 V \oplus \odot^2_0 V \oplus \ext{2} V.
\end{equation}

\subsection{Calculating $\sH^3(p)$}
\label{sec:H3p}

We work out the action of the connecting homomorphism on $\Crel^2(p)$
and only then specialise to the relevant cocycles.  The general
cochain in $\Crel^2(p)$ is given in equation~\eqref{eq:Crel2}, but we
change the notation of the coefficients:
\begin{multline}
  \Psi = \psi^{(1)} C_0 c_1 C_1 b_{-2} \vac + \psi^{(2)} C_0 c_1 C_1 B_{-2} \vac + \psi^{(3)} C_0 c_1 x^0_{-1} \vac + \chi_i^{(4)} C_0 c_1 x^i_{-1}\vac + \psi^{(5)} C_0 C_1 x^0_{-1}\vac\\
  + \chi^{(6)}_i C_0 C_1 x^i_{-1} \vac + \psi^{(7)} C_0 c_1 (\pi_0)_{-1} \vac + \chi^{(8)}_i C_0 c_1 (\pi_i)_{-1} \vac + \psi^{(9)} C_0 C_1 (\pi_0)_{-1}\vac\\ + \chi^{(10)}_i C_0 C_1 (\pi_i)_{-1}\vac + \psi^{(11)} c_1 c_{-1} \vac + \psi^{(12)} c_1 C_{-1} \vac + \psi^{(13)} c_{-1} C_1 \vac + \psi^{(14)} C_1 C_{-1} \vac\\
  + \psi^{(15)} c_1 C_1 x^0_{-2} \vac + \chi_i^{(16)} c_1 C_1 x^i_{-2} \vac + \psi^{(17)} c_1 C_1 (\pi_0)_{-2}\vac + \chi_i^{(18)} c_1 C_1 (\pi_i)_{-2} \vac\\
  + \psi^{(19)} c_1 C_1 x^0_{-1} (\pi_0)_{-1} \vac + \psi^{(20)} c_1 C_1 (x^0_{-1})^2 \vac + \psi^{(21)} c_1 C_1 (\pi_0)_{-1}^2 \vac + \chi_i^{(22)} c_1 C_1 x^i_{-1} (\pi_0)_{-1} \vac\\
  + \chi_i^{(23)} c_1 C_1 x^0_{-1} (\pi_i)_{-1} \vac + \chi_i^{(24)} c_1 C_1 x^i_{-1} x^0_{-1}\vac + \chi_i^{(25)} c_1 C_1 (\pi_i)_{-1} (\pi_0)_{-1} \vac \\
  + \theta_{ij}^{(26)} c_1 C_1 x^i_{-1} x^j_{-1} \vac + \theta_{ij}^{(27)} c_1 C_1 x^i_{-1} (\pi_j)_{-1} \vac + \theta_{ij}^{(28)} c_1 C_1 (\pi_i)_{-1} (\pi_j)_{-1} \vac,
\end{multline}
where $\theta_{ij}^{(26)} = \theta_{ji}^{(26)}$ and
$\theta_{ij}^{(28)} = \theta_{ji}^{(28)}$.  We calculate
\begin{multline}
  \Delta\Psi = - 4 \psi^{(1)} C_0 c_1 C_1 c_{-2} \vac - 2 \psi^{(5)} C_0 c_1 C_1 c_{-1} x^0_{-1}\vac - 2 \chi^{(6)}_i C_0 c_1 C_1 c_{-1} x^i_{-1} \vac\\ - 2 \psi^{(9)} C_0 c_1 C_1 c_{-1} (\pi_0)_{-1}\vac - 2 \chi^{(10)}_i C_0 c_1 C_1 c_{-1} (\pi_i)_{-1}\vac - 2  \psi^{(14)} c_1 C_1 c_{-1} C_{-1} \vac
\end{multline}

\subsubsection{Calculating $\sH^3(0,\p)$}
\label{sec:H30p}

The action of the connecting homomorphism on $\zeta \in \Crel^2(0,\p)$ is given by
\begin{equation}
  \Delta \zeta = 2 \psi^{(3)} C_0 c_1 C_1 c_{-1} (\pi_0)_{-1}\ket{0,\p} - 2 \tau^2 \chi_i^{(8)} C_0 c_1 C_1 c_{-1} x^i_{-1} \ket{0,\p}.
\end{equation}
This modifies two of the cocycle conditions on $\Crel^3(0,\p)$:
\begin{equation}
  \begin{aligned}
    \phi^{(7)} &= \tfrac12 \phi^{(3)} + \tfrac12 \phi^{(9)} - \tfrac{i}2 p_i A_i^{(27)} + 2 \psi^{(3)}\\
    A_i^{(16)} &= A_i^{(6)} - \tau^2 A_i^{(8)} + i p_i \phi^{(13)} - i p_j T_{ij}^{(22)} - 2 \tau^2 \chi_i^{(8)}.
  \end{aligned}
\end{equation}
Neither $\phi^{(7)}$ nor $A_i^{(16)}$ appear in other cocycle
conditions and therefore it is still the case that $\sZ^3(0,\p) \cong
\Zrel^3(0,\p) \oplus \Zrel^2(0,\p)$, or explicitly,
\begin{equation}
  \sZ^3(0,\p) \cong 34 \CC \oplus 17 V^\perp \oplus 4 \odot^2_0
  V^\perp \oplus 2 \ext{2} V^\perp.
\end{equation}
Since
\begin{equation}
  \sC^3(0,\p) \cong \Crel^3(0,\p) \oplus \Crel^2(0,\p) \cong 62 \CC
  \oplus 28 V^\perp \oplus 6 \odot^2_0 V^\perp \oplus 2 \ext{2} V^\perp,
\end{equation}
it follows that
\begin{equation}
  \sB^4(0,\p) \cong 28 \CC \oplus 11 V^\perp \oplus 2 \odot^2_0 V^\perp.
\end{equation}
Since $\sB^3(0,\p) \cong \sB^2(0,\p)$, we conclude that
\begin{equation}
  \sH^3(0,\p) \cong 6 \CC \oplus 6 V^\perp \oplus 2 \odot^2_0 V^\perp
  \oplus 2 \ext{2} V^\perp.
\end{equation}

\subsubsection{Calculating $\sH^3(0)$}
\label{sec:H3pzero}

The action of the connecting homomorphism on $\zeta \in \Crel^2(0)$ is
given by
\begin{equation}
  \Delta\zeta = -4 \psi^{(1)} C_0 c_1 C_1 c_{-2} \ket{0} + 6
  \psi^{(1)} c_1 C_1 c_{-1} C_{-1} \ket{0} + 2 \psi^{(3)} C_0 c_1 C_1
  c_{-1} (\pi_0)_{-1}\ket{0},
\end{equation}
and this modifies three cocycle conditions in $\Crel^3(0)$:
\begin{equation}
  \begin{aligned}
    \phi^{(19)}&= 2 \phi^{(2)} + 6 \phi^{(12)} - \Tr T^{(22)} + 5 \psi^{(1)}\\
    \phi^{(13)}&= \phi^{(12)} + 3 \psi^{(1)}\\
    \phi^{(17)}&= \tfrac12 \phi^{(3)} + \tfrac12 \phi^{(9)} + \psi^{(3)}.
  \end{aligned}
\end{equation}
Since $\phi^{(19)}, \phi^{(13)}, \phi^{(17)}$ do not appear in any
other cocycle relation, we again have that $\sZ^3(0) \cong \Zrel^3(0)
\oplus \Zrel^2(0)$, or explicitly,
\begin{equation}
  \sZ^3(0) \cong 21 \CC \oplus 12 V \oplus 4 \odot^2_0 V \oplus
  2 \ext{2} V.
\end{equation}
Since
\begin{equation}
  \sC^3(0) \cong \Crel^3(0) \oplus \Crel^2(0) \cong 36 \CC \oplus 20 V
  \oplus 6 \odot^2_0 V \oplus 2 \ext{2} V,
\end{equation}
it follows that
\begin{equation}
  \sB^4(0) \cong 15 \CC \oplus 8 V \oplus 2 \odot^2_0 V.
\end{equation}
Since $\sB^3(0) \cong \sB^4(0)$, it follows finally that
\begin{equation}
  \sH^3(0) \cong 6 \CC \oplus 4 V \oplus 2 \odot^2_0 V \oplus 2
  \ext{2} V.
\end{equation}

\subsection{Calculating $\sH^4(p)$}
\label{sec:H4p}

We again work out the action of the connecting homomorphism on
$\Crel^3(p)$ before specialising to its action on $\Zrel^3(p)$.  The
general cochain in $\Crel^3(p)$ is given in equation~\eqref{eq:Crel3},
but we change the notation of the coefficients to give
\begin{multline}
  \Psi = \psi^{(1)} c_1 C_1 c_{-2}\vac + \psi^{(2)} c_1 C_1 C_{-2}\vac + \psi^{(3)} c_1 C_1 c_{-1} x^0_{-1}\vac + \chi_i^{(4)} c_1 C_1 c_{-1} x^i_{-1}\vac \\
  + \psi^{(5)} c_1 C_1 C_{-1} x^0_{-1} \vac + \chi_i^{(6)} c_1 C_1 C_{-1} x^i_{-1}\vac + \psi^{(7)} c_1 C_1c_{-1} (\pi_0)_{-1}\vac + \chi_i^{(8)} c_1 C_1 c_{-1} (\pi_i)_{-1}\vac\\
  + \psi^{(9)} c_1 C_1 C_{-1} (\pi_0)_{-1}\vac  + \chi_i^{(10)} c_1 C_1 C_{-1} (\pi_i)_{-1}\vac + \psi^{(11)} C_0 c_1 c_{-1}\vac + \psi^{(12)} C_0 c_1 C_{-1}\vac \\
  + \psi^{(13)} C_0 c_{-1} C_1\vac + \psi^{(14)} C_0 C_1 C_{-1}\vac  + \psi^{(15)} C_0 c_1 C_1 x^0_{-2}\vac + \chi_i^{(16)} C_0 c_1 C_1 x^i_{-2}\vac\\
  + \psi^{(17)} C_0 c_1 C_1 (\pi_0)_{-2}\vac + \chi_i^{(18)} C_0 c_1 C_1 (\pi_i)_{-2}\vac + \psi^{(19)} C_0 c_1 C_1 x^0_{-1} (\pi_0)_{-1}\vac\\
  + \chi_i^{(20)} C_0 c_1 C_1 x^i_{-1} (\pi_0)_{-1}\vac + \chi_i^{(21)} C_0 c_1 C_1 x^0_{-1} (\pi_i)_{-1}\vac + \theta_{ij}^{(22)} C_0 c_1 C_1 x^i_{-1} (\pi_j)_{-1}\vac\\
  + \psi^{(23)} C_0 c_1 C_1 (x^0_{-1})^2\vac+ \chi_i^{(24)} C_0 c_1 C_1 x^0_{-1} x^i_{-1}\vac + \theta_{ij}^{(25)} C_0 c_1 C_1 x^i_{-1} x^j_{-1}\vac\\
  + \psi^{(26)} C_0 c_1 C_1 (\pi_0)_{-1}^2\vac + \chi_i^{(27)} C_0 c_1 C_1 (\pi_0)_{-1} (\pi_i)_{-1}\vac +  \theta^{(28)}_{ij} C_0 c_1 C_1 (\pi_i)_{-1} (\pi_j)_{-1}\vac,
\end{multline}
so that
\begin{equation}
  \Delta \Psi = -2 \psi^{(14)} C_0 c_1 C_1 c_{-1} C_{-1} \vac.
\end{equation}

\subsubsection{Calculating $\sH^4(0,\p)$}
\label{sec:H40p}

Only one cocycle equation in $\Crel^4(0,\p)$ is modified: namely,
\begin{equation}
  i p_i A_i^{(11)} = 2 \psi^{(14)}.
\end{equation}
This does not modify the counting and hence
\begin{equation}
  \sZ^4(0,\p) \cong \Zrel^4(0,\p) \oplus \Zrel^3(0,\p) \cong 32 \CC
  \oplus 15 V^\perp \oplus 3 \odot^2_0 V^\perp \oplus \ext{2} V^\perp.
\end{equation}
Since
\begin{equation}
  \sC^4(0,\p) \cong \Crel^4(0,\p) \oplus \Crel^3(0,\p) \cong 42 \CC
  \oplus 18 V^\perp \oplus 3 \odot^2_0 V^\perp \oplus \ext{2} V^\perp,
\end{equation}
we deduce that
\begin{equation}
  \sB^5(0,\p) \cong 10 \CC \oplus 3 V^\perp
\end{equation}
and since $\sB^4(0,\p) \cong 28 \CC \oplus 11 V^\perp \oplus 2 \odot^2_0
V^\perp$, we finally have that
\begin{equation}
  \sH^4(0,\p) \cong 4 \CC \oplus 4  V^\perp \oplus \odot^2_0 V^\perp
  \oplus \ext{2} V^\perp.
\end{equation}

\subsubsection{Calculating $\sH^4(0)$}
\label{sec:H4pzero}

Let $\zeta \in \Zrel^3(0)$ so that
\begin{equation}
  \Delta \zeta = - 2 \psi^{(14)} C_0 c_1 C_1 c_{-1} C_{-1} \ket{0}.
\end{equation}
However for all $\psi \in \Crel^4(0)$, $d\psi = 0$, hence the cocycle
condition in $\sC^4(0)$ says that $\psi^{(14)} = 0$.  In other words,
\begin{equation}
  \sZ^4(0) \cong \Zrel^4(0) \oplus \Zrel^3(0) \ominus \CC \cong 19 \CC
  \oplus 12 V \oplus 3 \odot^2_0 V \oplus \ext{2} V.
\end{equation}
Since
\begin{equation}
  \sC^4(0) \cong \Crel^4(0) \oplus \Crel^3(0) \cong 25 \CC
  \oplus 14 V \oplus 3 \odot^2_0 V \oplus \ext{2} V,
\end{equation}
we have that
\begin{equation}
  \sB^5(0) \cong 6 \CC \oplus 2 V.
\end{equation}
Since $\sB^4(0) \cong 15 \CC \oplus 8 V \oplus 2 \odot^2_0 V$, we
finally have that
\begin{equation}
  \sH^4(0) \cong 4 \CC \oplus 4 V \oplus \odot^2_0 V \oplus \ext{2} V.
\end{equation}

\subsection{Calculating $\sH^5(p)$}
\label{sec:H5p}

It is clear from the expression for the general cochain in
$\Crel^4(p)$ given by equation~\eqref{eq:Crel4} that for all $\zeta
\in \Crel^4(p)$, $\Delta \zeta = 0$.  Therefore it is always the case
that $\sZ^5(p) \cong \Zrel^5(p) \oplus \Zrel^4(p)$.

\subsubsection{Calculating $\sH^5(0,\p)$}
\label{sec:H50p}

We see that
\begin{equation}
  \sZ^5(0,\p) \cong \Zrel^5(0,\p) \oplus \Zrel^4(0,\p) \cong 11 \CC
  \oplus 4 V^\perp.
\end{equation}
Since $\sB^5(0,\p) \cong 10 \CC \oplus 3 V^\perp$, we conclude that
\begin{equation}
  \sH^5(0,\p) \cong \CC \oplus V^\perp.
\end{equation}

\subsubsection{Calculating $\sH^5(0)$}
\label{sec:H5pzero}

Now we have that
\begin{equation}
  \sZ^5(0) \cong \Zrel^5(0) \oplus \Zrel^4(0) \cong 8 \CC \oplus 4 V.
\end{equation}
Since $\sB^5(0) \cong 6 \CC \oplus 2 V$, we conclude that
\begin{equation}
  \sH^5(0) \cong 2 \CC \oplus 2 V.
\end{equation}

\subsection{Summary}
\label{sec:summary-absolute-cohomology}

We summarise the results of the calculation of $\sH^\bullet(p)$ in two
tables.  Table~\ref{tab:Habs0p} summarises the results for $p =
(0,\p)$, with $\p \neq \bzero$, whereas Table~\ref{tab:Habspeqzero}
summarises the results for $p=0$.   The connecting homomorphisms are
trivial for $\sH^\bullet(0,\p)$.  Both $\sH^\bullet(0,\p)$  and
$\sH^\bullet(0)$ display Poincaré duality.  Moreover the Euler
characteristic of the absolute cohomology, which we define for each
isotypical representation as the alternating sum of its
multiplicities, is identically zero; to wit, for $\sH^\bullet(0,\p)$,
\begin{equation}
  \begin{aligned}
    (\CC~ \text{and}~ V^\perp) \qquad & \qquad 0 - 1 + 4 - 6 + 4 - 1 + 0 = 0\\
    (\odot^2_0V^\perp~\text{and}~ \ext{2} V^\perp) \qquad & \qquad 0 - 0 + 1 - 2 + 1 - 0 + 0 = 0,
  \end{aligned}
\end{equation}
whereas for $\sH^\bullet(0)$,
\begin{equation}
  \begin{aligned}
    (\CC) \qquad & \qquad 1 - 2 + 4 - 6 + 4 - 2 + 1 = 0\\
    (V) \qquad & \qquad 0 - 2 + 4 - 4 + 4 - 2 + 0 = 0\\
    (\odot^2_0V~\text{and}~ \ext{2} V) \qquad & \qquad 0 - 0 + 1 - 2 + 1 - 0 + 0 = 0.
  \end{aligned}
\end{equation}

\begin{table}[h!]
\centering
\caption{Summary of the calculation of $\sH^\bullet(0,\p)$}
\label{tab:Habs0p}
\resizebox{\linewidth}{!}{
  \begin{tabular}{>{$}c<{$}|*{4}{>{$}l<{$}}}
    \toprule\\
    n & \sC^n(0,\p) & \sZ^n(0,\p) & \sB^n(0,\p) & \sH^n(0,\p)\\
    \midrule
    0 & \CC & 0 & 0 & 0 \\
    1 & 12\CC \oplus 4 V^\perp & 2\CC \oplus V^\perp & \CC & \CC \oplus V^\perp\\
    2 & 42\CC \oplus 18 V^\perp \oplus 3 \odot^2_0 V^\perp \oplus \ext{2} V^\perp & 14\CC \oplus 7 V^\perp \oplus \odot^2_0 V^\perp \oplus \ext{2} V^\perp  & 10\CC \oplus 3 V^\perp & 4 \CC \oplus 4 V^\perp \oplus \odot^2_0 V^\perp \oplus \ext{2} V^\perp \\
    3 & 62\CC \oplus 28 V^\perp \oplus 6 \odot^2_0 V^\perp \oplus 2 \ext{2} V^\perp  & 34\CC \oplus 17 V^\perp \oplus 4 \odot^2_0 V^\perp \oplus 2 \ext{2} V^\perp  & 28\CC \oplus 11 V^\perp \oplus 2 \odot^2_0 V^\perp & 6\CC \oplus 6 V^\perp \oplus 2 \odot^2_0 V^\perp \oplus 2 \ext{2} V^\perp \\
    4 & 42\CC \oplus 18 V^\perp \oplus 3 \odot^2_0 V^\perp \oplus \ext{2} V^\perp & 32\CC \oplus 15 V^\perp \oplus 3 \odot^2_0 V^\perp \oplus \ext{2} V^\perp  & 28\CC \oplus 11 V^\perp \oplus 2  \odot^2_0 V^\perp & 4 \CC \oplus 4 V^\perp \oplus \odot^2_0 V^\perp \oplus \ext{2} V^\perp \\
    5 & 12\CC \oplus 4 V^\perp & 11\CC \oplus 4 V^\perp & 10 \CC \oplus 3 V^\perp & \CC \oplus V^\perp\\
    6 & \CC & \CC & \CC & 0 \\
    \bottomrule
  \end{tabular}
}
\end{table}

\begin{table}[h!]
  \centering
  \caption{Summary of the calculation of $\sH^\bullet(0)$}
  \label{tab:Habspeqzero}
  \resizebox{\linewidth}{!}{
  \begin{tabular}{>{$}c<{$}|*{4}{>{$}l<{$}}}
    \toprule\\
    n & \sC^n(0) & \sZ^n(0) & \sB^n(0) & \sH^n(0)\\
    \midrule
    0 & \CC & \CC & 0 & \CC \\
    1 & 8\CC \oplus 4 V & 2\CC \oplus 2 V & 0 & 2 \CC \oplus 2 V\\
    2 & 25\CC \oplus 14 V \oplus 3 \odot^2_0 V \oplus \ext{2} V &
                                                                   10\CC \oplus 6 V \oplus \odot^2_0 V \oplus \ext{2} V  & 6\CC \oplus 2 V & 4 \CC \oplus 4 V \oplus \odot^2_0 V \oplus \ext{2} V\\
    3 & 36\CC \oplus 20 V \oplus 6 \odot^2_0 V \oplus 2 \ext{2} V & 21\CC \oplus 12 V \oplus 4 \odot^2_0 V \oplus 2 \ext{2} V  & 15\CC \oplus 8 V \oplus 2 \odot^2_0 V & 6\CC \oplus 4 V \oplus 2 \odot^2_0 V \oplus 2 \ext{2} V\\
    4 & 25\CC \oplus 14 V \oplus 3 \odot^2_0 V \oplus \ext{2} V & 19\CC \oplus 12 V \oplus 3 \odot^2_0 V \oplus \ext{2} V & 15 \CC \oplus 8 V \oplus 2 \odot^2_0 V & 4 \CC \oplus 4 V \oplus \odot^2_0 V \oplus \ext{2} V \\
    5 &8 \CC \oplus 4 V & 8 \CC \oplus 4 V & 6 \CC \oplus 2 V & 2 \CC \oplus 2 V \\
    6 &\CC & \CC & 0 & \CC \\
    \bottomrule
  \end{tabular}
}
\end{table}

\subsection{Interpretation in terms of Carroll UIRs}
\label{sec:interpr-terms-carr}

The Carroll group $G$ (in any dimension) is a semi-direct product $G =
K \ltimes T$, where (here, in 26 dimensions) $K \cong \ISO(25)$ and
$T \cong \RR^{26}$, and hence its unitary irreducible representations
(UIRs) are constructed using the Mackey--Wigner method.  This has been
described in detail for the Carroll group (in four dimensions) in
\cite{Figueroa-OFarrill:2023qty}.  In a nutshell, UIRs are described
as square-integrable sections of homogeneous vector bundles on the
$K$-orbits $\eO_\tau$ in the unitary dual of $T$, which since $T$ is
abelian is the dual $\ft^*$ of the Lie algebra $\ft$ of $T$.  These
homogeneous vector bundles are associated to the representations of
the generic stabiliser $G_\tau$ of the $K$-orbit $\eO_\tau$.  Here,
$\tau = (p_0,\p)$ and the stabiliser is either $\SO(25) \ltimes
\RR^{25}$ for $p=0$ or $\SO(24) \ltimes \RR^{25}$ for $p = (0,\p)$.

The representations corresponding to $\sH^\bullet(0)$ are
finite-dimensional UIRs of the Carroll group where all generators
except the rotations act trivially.  The orbit $\eO_{(0,\bzero)}$ is a
point and hence the UIRs coincide with the inducing UIRs of $\SO(25)$
appearing in $\sH^\bullet(0)$.

The representations corresponding to $\sH^\bullet(0,\p)$ are
infinite-dimensional UIRs of the Carroll group induced from
finite-dimensional UIRs of $\SO(24)$, so that the boosts act
trivially.  The orbit $\eO_{(0,\p)}$ is the $24$-dimensional sphere in
$\RR^{25}$ consisting of momenta $\p$ with a fixed (positive) norm
$\|\p\|$.  The UIRs are realised as square-integrable sections of
homogeneous vector bundles $\SO(25) \times_{\SO(24)} \sH^\bullet(0,\p)
\longrightarrow \eO_{(0,\p)}$ over the $24$-dimensional sphere.  These
are the 26-dimensional analogues of the aristotelion UIRs in
\cite{Figueroa-OFarrill:2023qty}.

\subsection{Interpretation in terms of carrollian geometry}
\label{sec:interpr-terms-carr-1}

It is possible to give a geometric interpretation of most of the
cohomology in terms of deformations of the carrollian structure of the
Carroll spacetime.  Since the ghost zero modes $c_0,C_0$ are responsible for
a four-fold degeneracy in the cohomology --- the different ``pictures''
familiar from usual string theory --- it is enough to concentrate on
$\sH^2(0,\p)$ for $\p \neq \bzero$.

From the above results $\sH^2(0,\p) \cong 4 \CC \oplus 4 V^\perp
\oplus \odot^2_0 V^\perp \oplus \ext{2} V^\perp$ and we will now show
how most of this accounts for the deformations of the carrollian structure
(including the Kalb--Ramond field to which the carrollian string
couples) of the flat Carroll spacetime.  A carrollian structure with
Kalb--Ramond field $B$ is given by a triple $(\kappa^\mu, h_{\mu\nu},
B_{\mu\nu})$ subject to the condition $h_{\mu\nu} \kappa^\nu = 0$.
For Carroll spacetime we have $\kappa^\mu = \delta^\mu_0$, $h_{\mu\nu}
= \delta_{ij} \delta^i_\mu \delta^j_\nu$ and $B_{\mu\nu} = 0$.  A
first-order deformation of this structure is then
\begin{equation}
  \begin{aligned}
    \kappa^\mu &= \delta^\mu_0 + \kappa^{(1)\mu}\\
    h_{\mu\nu} &= \delta_{ij} \delta^i_\mu \delta^j_\nu +  h^{(1)}_{\mu\nu}\\
    B_{\mu\nu} &= 0 + B^{(1)}_{\mu\nu},
  \end{aligned}
\end{equation}
where the condition $h_{\mu\nu}\kappa^\nu = 0$ becomes to first order
\begin{equation}
  h^{(1)}_{00} = 0 \qquad\text{and}\qquad h^{(1)}_{i0} = - \delta_{ij} \kappa^{(1)j}.
\end{equation}
In summary, the first-order deformations (in terms of representations
of the stabiliser of $\p$) are given as
\begin{equation}
  \begin{aligned}
    \kappa^{(1)0} &:\quad \CC\\
    \kappa^{(1)i} &:\quad V \cong \CC\p \oplus V^\perp\\
    h^{(1)}_{ij} &:\quad \odot^2 V \cong \CC\tr^\perp{} \oplus \CC \p^2 \oplus \odot^2_0 V^\perp \oplus  V^\perp\\
    B^{(1)}_{ij} &:\quad \ext{2} V \cong \ext{2} V^\perp \oplus  V^\perp\\
    B^{(1)}_{0i} &:\quad V \cong \CC\p \oplus V^\perp,
  \end{aligned}
\end{equation}
for a combined total of $5 \CC \oplus 4 V^\perp \oplus \odot^2_0
V^\perp \oplus \ext{2} V^\perp$.  Not all of these first-order
deformations are physical, some may be due to diffeomorphisms or the
abelian $1$-form gauge transformations of the Kalb--Ramond field.
Let $\zeta = \zeta^\mu \d_\mu$ be a vector field and $\theta = \theta_\mu
dx^\mu$ be a one-form, which we think of as generators of
infinitesimal diffeomorphisms and abelian gauge transformations,
respectively.  Then the fields $(\kappa, h, B)$ transform as follows:
\begin{equation}
  \delta \kappa = [\zeta,\kappa],\quad \delta h = \eL_\zeta h
  \quad\text{and}\quad \delta B = \eL_\zeta B + d\theta.
\end{equation}
Working these expressions out for the $(\kappa,h, B)$ of Carroll
spacetime, we arrive at
\begin{equation}
  \delta \kappa^0 = \delta \kappa^i = \delta h_{00} = \delta h_{0i} =
  0,\quad \delta h_{ij} = i p_{(i} \zeta_{j)},\quad \delta B_{0i} =-i
  \theta_0 p_i \quad\text{and}\quad \delta B_{ij} = 2 i p_{[i} \theta_{j]},
\end{equation}
where we have Fourier-transformed to momentum space for ease of
comparison with the cohomology calculations.  We observe that only the
perpendicular component $\theta_i^\perp$ of $\theta_j$ actually
contributes.  Hence the effective gauge parameters are
$\zeta_i, \theta_0, \theta_i^\perp$.  In terms of representations of the
stabiliser of $\p$, these account for $2 \CC \oplus 2 V^\perp$.
Discounting these from the first-order deformations of the carrollian
structure
$5 \CC \oplus 4 V^\perp \oplus \odot^2_0 V^\perp \oplus \ext{2}
V^\perp$, we arrive at the subspace of $\sH^2(0,\p)$ which can be
interpreted as deformations of the carrollian structure (augmented by
the Kalb--Ramond field):
$3\CC \oplus 2 V^\perp \oplus \odot^2_0 V^\perp \oplus \ext{2}
V^\perp$.  This leaves $\CC \oplus 2 V^\perp$ which we cannot seem to
account for geometrically.

We can see more explicitly how this correspondence arises.  The (not
obviously trivial) cocycles in $\sZ^2(0,\p)$ are given explicitly as
follows (with everything acting on $\vac$ with $p = (0,\p)$)
\begin{align*}
  \begin{aligned}
    &\phi^{(36)} c_0 c_1 (\pi_0)_{-1} \\
    &A_i^{(33)} c_0 c_1 x^i_{-1} \\
    &\phi^{(2)} \left( C_0 c_1 C_1 B_{-2} + 2 c_1 C_{-1} - \tfrac{i}2 p_i c_1 C_1 x^i_{-2}  + \tfrac12 c_1 C_1 x^\mu_{-1} (\pi_\mu)_{-1}  \right)\\
    &\phi^{(3)} \left( C_0 c_1 x^0_{-1} - C_0 C_1 (\pi_0)_{-1} - 2 c_1 C_1 x^0_{-2} - i p_i c_1 C_1 x^i_{-1}x^0_{-1} \right)\\
    &A_i^{(4)} \left( C_0 c_1 x^i_{-1} - 2 c_1 C_1 x^i_{-2} \right) \\
    &\phi^{(7)} \left(  C_0 c_1 (\pi_0)_{-1} - c_1 C_1 (\pi_0)_{-2} \right)\\
    &A_i^{(8)} \left( C_0 c_1 (\pi_i)_{-1} + \tau^2 C_o C_1 x^i_{-1} - i p_i c_1 C_{-1} - \tfrac12 (p^2 \delta_{ij} + p_i p_j) c_1 C_1 x^j_{-2}\right. \\
    & \qquad {} \left. - c_1 C_1 (\pi_i)_{-2} - \tfrac{i}2 p_j c_1 C_1 x^i_{-1} (\pi_j)_{-1} - \tfrac{i}2 p_j c_1 C_1 x^j_{-1} (\pi_i)_{-1}\right) \\
    &\phi^{(11)} c_1 c_{-1}\\
    & \phi^{(13)} \left( -C_1 c_{-1} - c_1 C_{-1} - i p_i c_1 C_1 x^i_{-2} \right)\\
    & \phi^{(21)} c_1 C_1 (\pi_0)_{-1}^2\\
    & A_i^{(22)} c_1 C_1 x^i_{-1} (\pi_0)_{-1}\\
    & A_i^{(25)} \left(  c_1 C_1 (\pi_i)_{-1} (\pi_0)_{-1} + \tau^2 c_1 C_1 x^i_{-1} x^0_{-1}\right)\\
    & T^{(26)}_{ij} c_1 C_1 x^i_{-1} x^j_{-1} \\
    & T^{(27)}_{[ij]} \left( c_1 C_1 x^i_{-1} (\pi_j)_{-1} + i c_1 C_1 p^{[i} x^{j]}_{-2}\right),
  \end{aligned}
\end{align*}
where we have introduced coefficient scalars $\phi$, vectors $A$ and
$2$-tensors $T$.

The transformations $\delta \Psi = d \Lambda$ corresponding to adding
coboundaries in $\sB^2(0,\p)$, where (everything acting on $\vac$)
\begin{multline}
  \Lambda = \lambda^{(1)} C_0 + \lambda^{(2)} c_0 + \lambda^{(3)} c_1 x^0_{-1} + \chi^{(4)}_i c_1 x^i_{-1} + \lambda^{(5)} C_1 x^0_{-1} + \chi^{(6)}_i C_1 x^i_{-1} + \lambda^{(7)} c_1 (\pi_0)_{-1}\\
  + \chi^{(8)}_i c_1 (\pi_i)_{-1} + \lambda^{(9)} C_1 (\pi_0)_{-1} + \chi^{(10)}_i C_1 (\pi_i)_{-1} + \lambda^{(11)} c_1 C_1 b_{-2} + \lambda^{(12)} c_1 C_1 B_{-2},
\end{multline}
are given by
\begin{align*}
  \begin{aligned}
    \delta \phi^{(36)} &= 0\\
    \delta A_i^{(33)} &= - i p_i \lambda^{(2)}\\
    \delta\phi^{(2)} &=  2 \lambda^{(11)}\\
    \delta \phi^{(3)} &= - \lambda^{(5)}\\
    \delta A_i^{(4)} &= -i p_i \lambda^{(1)} - \chi^{(6)}_i - \tau^2 \chi^{(8)}_i\\
    \delta\phi^{(7)} &= \lambda^{(3)} -\lambda^{(9)}\\
    \delta A_i^{(8)} &= - \chi_i^{(10)}\\
  \end{aligned}
  \qquad\qquad
  \begin{aligned}
    \delta\phi^{(11)} &= 2 \lambda^{(2)} - i p \cdot \chi^{(8)} + 2 \lambda^{(12)}\\
    \delta\phi^{(13)} &= 2 \lambda^{(1)} - i p \cdot \chi^{(10)} - 3 \lambda^{(11)} \\
    \delta\phi^{(21)} &= \tfrac12 \lambda^{(12)}\\
    \delta A_i^{(22)} &= i p_i \lambda^{(9)}\\
    \delta A_i^{(25)} &= 0\\
    \delta T^{(26)}_{ij} &= -\tfrac12 \tau^2 \delta_{ij} \lambda^{(12)} + i p_{(i} \chi_{j)}^{(6)}\\
    \delta T^{(27)}_{[ij]} &= i p_{[i} \chi_{j]}^{(10)}.
  \end{aligned}
\end{align*}

The Kalb--Ramond field is the easiest to deal with: $B_{\mu\nu}$
decomposes into $(B_{ij},B_{0i})$, while the linearised transformation
of $B_{\mu\nu}$ is
\begin{equation}
  \delta B_{\mu\nu} = \delta^j_\mu i p_{[j}\theta_{\nu]} \Rightarrow
  \delta B_{ij} = ip_{[i}\theta_{j]}~~\text{and}~~ \delta B_{0i} =
  -ip_i \theta_0,
\end{equation}
where we used that $p_0 = 0$, and where $\theta$ is a $1$-form (note
that diffeomorphisms are absent since the background value of the
$B$-field vanishes). There is only a single skewsymmetric tensor in
$\sZ^2(0,\p)$, namely $T^{(27)}_{[ij]}$, which transforms as $\delta
T^{(27)}_{[ij]} = i p_{[i} \chi_{j]}^{(10)}$, allowing us to identify
$\theta^\perp_i = \chi_i^{(10)}{}^\perp$. Furthermore, we identify
$B_{0i} = A^{(22)}_i$ so that $\theta_0 = -\lambda^{(9)}$.

Next up is the carrollian ruler $h_{\mu\nu}$: since, as discussed
above, $h^{(1)}_{00} = 0$, we get $h^{(1)}_{\mu\nu} =
(h^{(1)}_{ij},h^{(1)}_{0i} )$, where $h^{(1)}_{ij}$ is symmetric.  The
ruler transforms only under linearised diffeomorphisms generated by a
vector $\zeta^\mu$, leading to
\begin{equation}
  \delta h^{(1)}_{ij} = ip_{(i}\zeta_{j)},\qquad \delta h^{(1)}_{0i} = ip_i \zeta^0.
\end{equation}
The only candidate for $h^{(1)}_{ij}$ is $T^{(26)}_{ij}$, but its
trace does not transform as it should.  We can modify this by
combining $T^{(26)}_{ij}$ with $\phi^{(21)}$ as
\begin{equation}
    h_{ij}^{(1)} = T^{(26)}_{ij} + \tau^2\delta_{ij}\phi^{(21)},
\end{equation}
which transforms as $\delta h_{ij}^{(1)} = ip_{(i}\chi_{j)}^{(6)}$, allowing
us to identify the spatial components of the linearised diffeomorphism
with $\chi_i^{(6)}$.

This leaves the carrollian vector field $\kappa^\mu$.  We saw how
$\kappa^{(1)}_i = - h^{(1)}_{0i}$ and $\delta\kappa^{(1)} = 0$ for the
Carroll spacetime, so we may identify $\kappa^{(1)}_i$ with
$A_i^{(25)}$.  How about $\kappa^{(1)}{}^0$?  This should be a scalar
cocycle which does not receive any contributions from coboundaries.
There are precisely four such scalar cocycles: $\phi^{(36)}$, $\p \cdot
A^{(25)}$ (which has already been identified with the longitudinal
component $\p \cdot \kappa^{(1)}$) and the following two complicated
expressions:
\begin{equation}
  \begin{aligned}
    e_1 &:= 2 i \tau^2 \p \cdot A^{(33)} - \tfrac34 p^4 \phi^{(2)} +
    i p^2 \p \cdot A^{(4)} + \tfrac{i p^4}2 \p \cdot A^{(8)} - p^2
    \tau^2 \phi^{(11)} - \tfrac{p^4}2 \phi^{(13)}\\
    & \qquad {} + 5 p^2 \tau^2 \phi^{(21)} + T^{(26)}_{ij} p_i p_j\\
    e_2 &:= 2 i \tau^2 \p \cdot A^{(33)} - \tfrac34 p^4 \phi^{(2)} +
    i p^2 \p \cdot A^{(4)} + \tfrac{i p^4}2 \p \cdot A^{(8)} - p^2
    \tau^2 \phi^{(11)} - \tfrac{p^4}2 \phi^{(13)}\\
    & \qquad {} + 29 p^2 \tau^2 \phi^{(21)} + p^2 \tr T^{(26)}\\
  \end{aligned}
\end{equation}
which obey
\begin{equation}
  \tfrac1{p^2}(e_2 - e_1) = \left(\delta_{ij} - \frac{p_i p_j}{p^2}\right)
  T^{(26)}_{ij} + 24 \tau^2 \phi^{(21)}.
\end{equation}
It would be tempting to associate this with $\kappa^{(1)}{}^0$, but it
is actually the transverse trace of $h^{(1)}$ and we don't see why
there should be a priori any such relationship.  This leaves
$\phi^{(36)}$ (or one of $e_1, e_2$) as possible $\kappa^{(1)}{}^0$.
Alas, these do not seem to be of the type $c C \Phi$ for some field
$\Phi$ depending only on the matter fields.  Thus, taking the
cohomological calculations at face value, we are led to conclude that
there is no candidate cohomology class corresponding to
$\kappa^{(1)}{}^0$. This remains a puzzle thus far.

We may summarise this discussion as follows.  The fields which create
$2$-cocycle representatives for the remaining Carroll geometry
deformations are as follows:
\begin{equation}
  \begin{aligned}
    \kappa^{(1)}_i = - h^{(1)}_{0i} &: \qquad c C \left( \tau^2 \d X^i \d X^0 + \Pi^i \Pi_0 \right) V_p\\
    h^{(1)}_{ij} &: \qquad c C \left( \d X^i \d X^j + \tau^2 \delta^{ij} \Pi_0^2 \right) V_p\\
    B^{(1)}_{0i} &: \qquad c C \d X^i \Pi_0 V_p\\
    B^{(1)}_{ij} &: \qquad c C \left( \d X^{[i} \Pi^{j]} + \tfrac12 p^{[i} \d^2 X^{j]} \right) V_p,\\
  \end{aligned}
\end{equation}
where we have introduced the notation $V_p := \exp (i p_\mu X^\mu)$.
The gauge parameters correspond to the following fields:
\begin{equation}
  \theta_i^\perp = C \Pi_i^\perp V_p,\qquad \theta_0 = C \Pi_0 V_p
  \qquad\text{and}\qquad \zeta_i = C \d X^i V_p.
\end{equation}

These cohomology classes appear at other ghost numbers.  Every
BRST-invariant field at ghost number $2$ of the form $c C W$, for some
field $W$ not involving ghosts, appears three more times as
$\d c c C W$ and $\d C c C W$ at ghost number $3$ and
$\d c \d C c C W$ at ghost number $4$.

\section{Conclusions and outlook}
\label{sec:conclusions}

In this work, we have quantised the carrollian bosonic string
propagating on Carroll spacetime given by~\eqref{eq:flat-carroll} 
and worked out the full associated BRST cohomology using the framework
developed in~\cite{Figueroa-OFarrill:2024wgs}. This analysis
revealed that the cohomology is localised on zero energy, $p_0 = 0$,
suggesting that the spectrum is ``magnetic'' in the sense
of~\cite{deBoer:2021jej}.  In contrast with the standard bosonic
string, the spectrum is finite-dimensional, lacking the infinite
Kaluza--Klein tower of massive states.  These
cohomologies appear in Tables~\ref{tab:Habs0p} ($\p \neq 0)$ 
and~\ref{tab:Habspeqzero} ($\p = 0$). We interpreted them in terms of
UIRs of the $26$-dimensional Carroll group: when $\p = 0$, these
representations are finite-dimensional, while they are
infinite-dimensional when $\p \neq 0$.  We furthermore identified
BRST-invariant vertex operators (defining classes in $\sH^2(0,\p)$)
corresponding to (most of) the first-order deformations of the
carrollian structure (augmented by the Kalb--Ramond field) of the
Carroll spacetime.   These cohomology classes also appear at other
ghost numbers, corresponding to the phenomenon of ``pictures''
associated to the ghost zero modes.

Our results suggest a number of interesting avenues for further study,
and we list some of these below.
\begin{description} 
    \item[Supersymmetry] It would be interesting to compute the BRST
      cohomology of the quantum carrollian superstring.  By
      appropriately contracting two copies of the $\mathcal{N}{=}1$ Virasoro
      superalgebra (see, e.g.,~\cite{Mandal:2016wrw, Banerjee:2016nio,
        Lodato:2016alv, Bagchi:2022owq}), one obtains the $\mathcal{N}{=}(1,1)$
      BMS$_3$ superalgebra with operator product expansions augmenting
      the ones in equation~\eqref{eq:bms-as-opes} with
      \begin{equation}
        \label{eq:super-BMS-opes}
        \begin{aligned}
          T(z) G^\pm(w) &= \frac{\tfrac32 G^\pm(w)}{(z-w)^2} + \frac{\D G^\pm(w)}{z-w} + \reg\\
          G^+(z)G^-(z) &= \frac{\tfrac13 c_M \1}{(z-w)^3} + \frac{2M(w)}{z-w} + \reg\\
          G^\pm(z) G^\pm(w) &= \reg\\
          M(z) G^\pm(w) &= \reg,
        \end{aligned}
      \end{equation}
      where we have added two fermionic fields\footnote{There is
        another real form of the complexification of this superalgebra
        generated by $G^I$, $I=1,2$, where $G^\pm = G^1 \pm i G^2$.}
      $G^\pm$ of conformal weight $\tfrac32$.  The BRST complex
      exists provided that $c_M = 0$ and $c_L = 30$, resulting in the
      following mode algebra
    \begin{equation}
    \label{eq:super-BMS}
    \begin{aligned}\relax
        [L_n,L_m] &= (n-m)L_{m+n} + \tfrac{5}{2}n(n^2 -1)\delta_{m+n,0}\\
        [L_n,M_m] &= (n-m)M_{m+n}\\
        [L_m,G^\pm_r] &= \left(\tfrac m2 -r \right) G^\pm_{m+r}\\
        [G_r^+,G^-_s] &= M_{r+s},
    \end{aligned}
  \end{equation}
  where
  \begin{equation}
    G^\pm(z) = \sum_{r\in \ZZ + \tfrac12}G^\pm_r\, z^{-r-\tfrac32}.
  \end{equation}
  In addition to the fermionic ghost systems $(b,c)$ and $(B,C)$ with
  weights $(2,-1)$ introduced in the main text, we must now also
  introduce two bosonic ghost systems $(\beta^\pm,\gamma_\pm)$ with
  weights $(\tfrac32,-\tfrac12)$.  The BRST operator is the zero mode
  of the BRST   current given by the usual Lie algebraic form:
  \begin{equation}
    \label{eq:super-BRST}
    j_{\text{BRST}} = cT + CM + \gamma_+ G^+ + \gamma_- G^- + \tfrac12
    c \Tgh + \tfrac12 C \Mgh + \tfrac12 \gamma_+ \Gpgh + \tfrac12
    \gamma_- \Gmgh,
  \end{equation}
  where now
  \begin{equation}
    \begin{aligned}
      \Tgh &= -2 b\d c - \d b c - 2 B \d C - \d B C - \tfrac32 \beta^+
      \d \gamma_+ - \tfrac12 \d \beta^+ \gamma_+ - \tfrac32 \beta^-
      \d \gamma_- - \tfrac12 \d \beta^- \gamma_-\\
      \Mgh &= - 2 B\d c - \d B c\\
      \Gpgh &= \d \beta^+ c - B \gamma_- + \tfrac32 \beta^+ \d c\\
      \Gmgh &= \d \beta^- c - B \gamma_+ + \tfrac32 \beta^- \d c.
    \end{aligned}
  \end{equation}
  The cohomological analysis must be redone, of course.
  Quantisation of the $\mathcal{N}{=}(1,1)$ carrollian superstring in
  the flipped vacuum was recently considered
  in~\cite{Chen:2025gaz}. More generally, from a purely algebraic
  perspective, there exists an $\mathcal{N}{=}(1,1)$ extension of the
  BMS$_3$-like algebra $\g_\lambda$
  of~\cite{Figueroa-OFarrill:2024wgs}. Now $M$ has conformal weight
  $1-\lambda$ and $G^\pm$ have conformal weight $1-\tfrac12\lambda$.
  There is again a BRST complex of Lie algebraic form where now the
  critical central charge is $c_L = 6(4 + \lambda^2)$.
   \item[Interpolating string theories] The considerations at the end of
    Section~\ref{sec:mode-exps-and-constraints} suggest a whole family of 
    interpolating representations which describe a carrollian string
     propagating in a target manifold with a $p$-brane carrollian
     structure:
     \begin{equation}
         T(z) = \D X^\mu \Pi_\mu\qquad{\text{and}}\qquad M(z) =
         - \tfrac12\eta^{AB}\Pi_A \Pi_B - \tfrac12 \delta_{IJ}\D X^I \D X^J,
     \end{equation}
     where $A,B = 0,\dots,p$ and $I,J = 1,\dots,D-p$. In particular, the
     case $p = D$ corresponds to the ambitwistor string
     (cf.~\eqref{eq:ambi-twistor}), and $p=0$ corresponds to the case
     in the present paper. The $p$-brane carrollian geometry of the target 
     spacetime is obtained by increasing the dimensionality of the 
     ``longitudinal'' space of degenerate directions of the ruler $h$. 
     Concretely, in a $p$-brane carrollian geometry, we replace the carrollian 
     vector field $\kappa \in \mathscr{X}(M)$ by $\kappa_A \in \mathscr{X}
     (M)$, where $A = 0,\dots,p$. This vector field satisfies $h(\kappa_A,-) = 
     0$, and so $h$ has corank $p+1$. Furthermore, the object 
    \begin{equation}
    \label{eq:carr-metric}
        \bar\kappa := \eta^{AB}\kappa_A \kappa_B,
    \end{equation}
    where $\eta_{AB} = \text{diag}(-1,1,\dots,1)$ is the
    $(p+1)$-dimensional Minkowski metric, has signature
    $(\underbrace{-1,1,\dots,1}_{p+1\text{ entries}},0,\dots,0)$.  The
    dual to $\bar\kappa$ is denoted $\bar\xi$.\footnote{For more
      details about string ($p=1$) and more general $p$-brane
      carrollian geometries, we refer the reader
      to~\cite{Bergshoeff:2023rkk, Blair:2023noj, Bagchi:2023cfp,
        Gomis:2023eav, Bagchi:2024rje, Blair:2024aqz}.}  The $p$-brane
    carrollian generalisation of the carrollian string
    action~\eqref{eq:carroll-action} is then%
    \footnote{We remark that this class of string actions naturally
      appears when considering decoupling limits leading to matrix
      theories on instantonic branes, relevant to dS holography
      \cite{Blair:2025nno}.}
    \begin{equation}
        S = \int_\Sigma \,\dvol(\Sigma)\left[ \tfrac12 \tau^2 \left( \vv^\alpha \D_\alpha X^\mu \tilde P_\mu + h_{\mu\nu} \ee^\alpha\ee^\beta \D_\alpha X^\mu \D_\beta X^\nu \right) + \tfrac12 \bar\xi_{\mu\nu} \vv^\alpha \vv^\beta \D_\alpha X^\mu\D_\beta X^\nu \right]
    \end{equation}
    In addition to the ambitwistor string, appearing for $p=D$, the
    other extreme occurs when $p=-1$, in which case the ruler
    $h_{\mu\nu}$ has full rank, and as such can be taken to be (pseudo)-riemannian, with a whole range of interpolating
    string theories arising in between for $-1 < p < D$.   It would be
    very interesting to repeat the quantisation procedure for all
    these string theories and this currently work in progress.

    \item[Scattering amplitudes] It would be interesting to study amplitudes for
    the carrollian string; for example, one could start with the (unintegrated)
    vertex operators
    \begin{equation}
      h^{(1)}_{ij} c C \left( \d X^i \d X^j + \tau^2 \delta^{ij} \Pi_0^2 \right) \exp(i p_k X^k)
    \end{equation}
    corresponding to deformations of the carrollian ruler, and use
    standard arguments to extract expressions for scattering between
    what one might call ``carrollian gravitons''.  Moreover, for the
    interpolating string theories we mention above, the longitudinal
    sector is related to the ambitwistor string~\cite{Mason:2013sva},
    and it would be interesting to explore amplitudes using the
    formalism developed in that paper (see~\cite{Gomis:2023eav} for an
    example of this in the case of a galilean string). 

  \item[Spin matrix strings and other generalisations] Spin matrix
    strings arise from an 
    additional nonrelativistic worldsheet limit combined with a certain scaling 
    of the target spacetime geometry that turns it into what has been dubbed
    $\U(1)$-galilean geometry~\cite{Harmark:2014mpa, Harmark:2017rpg,
    Harmark:2018cdl, Harmark:2020vll,Bidussi:2023rfs}.
  Spin matrix strings are trivial when the background is Galilei spacetime, 
  while they reduce to a free theory on ``flat-fluxed 
  backgrounds'' that arise from the Penrose limit of $\zAdS^5{\times} \mathsf{S}^5$ 
  combined with the previously mentioned scaling. They are holographically dual 
  to near-BPS corners of $N{=}4$ super Yang--Mills theory known as spin matrix 
  theory, from which the strings derive their name. Spin matrix theory is 
  closely related to spin chains and integrability. This furnishes a 
  tantalisingly tractable version of the AdS/CFT correspondence, though a 
  crucial missing ingredient has been the quantisation of the bulk spin matrix
  string theory. The spin matrix theory limit described above leads to a worldsheet
  described by the two-dimensional galilean conformal algebra, which is isomorphic
  to the BMS$_3$ algebra. The methods developed in this work therefore form
  a natural starting point for the quantisation of the SMT string,
  something we plan to pursue in the near future. 
  Such novel worldsheet actions are also related to the fundamental string action
  obtained in multicritical near-BPS limits~\cite{Blair:2023noj,Gomis:2023eav}
  along with carrollian versions of these multiple limits~\cite{Blair:2025nno}.
  
    \item[Wess--Zumino--Witten models] These are some of the best
      understood conformal field theories and describe string
      propagation on Lie groups admitting a bi-invariant metric.
      Recently in \cite{Figueroa-OFarrill:2025nmo}, taking inspiration
      from the construction in \cite{Figueroa-OFarrill:2022pus} of Lie
      groups with a bi-invariant galilean structure, a
      non-relativistic string theory reminiscent of the Gomis--Ooguri
      string was obtained by null gauging a lorentzian WZW model.  In
      \cite{Figueroa-OFarrill:2022pus} it was also shown that Lie
      groups admitting bi-invariant carrollian structures arise as
      null codimension-one normal subgroups of certain Lorentzian Lie
      groups.  It would be interesting to implement this in a WZW
      model and in this way obtain novel strings with lorentzian
      worldsheet and carrollian target. En passant, we note that 
      the ``magnetic carrollian string'' 
      of~\cite{Bagchi:2023cfp,Bagchi:2024rje} and the string models
      proposed in~\cite{Harksen:2024bnh} are of this type, 
      though they do not arise as WZW models.

    \item[Strings on nontrivial backgrounds and beta functions] It
      would be interesting to consider carrollian strings in
      background fields as defined by the states that appear at ghost
      number $2$, as discussed in
      Section~\ref{sec:interpr-terms-carr-1}, which in particular
      involves the carrollian structure and a $B$-field. It remains an
      open question to interpret the remaining states in the
      cohomology not accounted for by the carrollian structure and the
      Kalb--Ramond field; one way to get a handle on this would be to
      compute the string beta functions. These should be related the
      equations of motion for the bosonic sector of an appropriate
      carrollian supergravity (see, e.g.,~\cite{Hartong:2015xda,
        Bergshoeff:2017btm, Hansen:2021fxi, Figueroa-OFarrill:2022mcy}
      for details about carrollian gravity). Moreover, it could be
      interesting to consider strings propagating on nontrivial
      carrollian backgrounds; for example, one could consider
      carrollian strings on backgrounds that include the carrollian
      version of anti-de Sitter space, which, as shown
      in~\cite{Figueroa-OFarrill:2021sxz} (see
      also~\cite{Have:2024dff,Borthwick:2024skd}) admits an
      interpretation as the blowup $\mathsf{Ti}$ of timelike infinity
      $i^+$ in asymptotically flat spacetimes.

    \item[BV structure in BRST cohomology] It is well-known that the
      BRST cohomology of (super)string theories admits the structure
      of a Batalin--Vilkovisky (BV) algebra
      \cite{Lian:1992mn,Getzler:1994yd,MR1314668} and, more generally,
      that this is the case for any topological conformal algebra
      \cite{Figueroa-OFarrill:1995qkv,MR1466615}.  The BMS BRST
      complex (regardless the matter realisation) satisfies the axioms
      of a topological conformal algebra and hence the BRST cohomology
      admits the structure of a BV algebra, with the Virasoro
      antighost zero mode $b_0$ playing the rôle of the BV operator.
      As shown in \cite{Figueroa-OFarrill:2024wgs}, this BV algebra is
      isomorphic to the chiral ring of a topologically twisted $N{=}2$
      superconformal field theory obtained by coupling the BMS string
      to topological gravity \cite{Getzler:1994yd}.  The BV structure
      is particularly rich in the case of non-critical strings and it
      might be interesting to explore non-critical BMS strings.  One
      obvious question is what is the BMS-analogue of the Liouville
      theory.  It is well known that any 2d generally covariant theory
      can be promoted to a conformally invariant theory by coupling to
      Liouville gravity.  So a natural question is what plays the rôle
      of Liouville gravity in promoting a 2d generally covariant
      theory to a BMS field theory.

    \item[Carroll limit of the bosonic string BRST cohomology] The
     carrollian string action~\eqref{eq:carroll-action} may be obtained
     by taking a Carroll limit of the phase space action for the bosonic
     string. Together with the fact that the spectrum of the
     carrollian string is finite-dimensional, suggests that the BRST
     cohomology of the carrollian string summarised in Tables~\ref{tab:Habs0p}
     and~\ref{tab:Habspeqzero} can be obtained by taking an
     appropriate Carroll limit of the cohomology of the bosonic
     string.  Based on the similarity with the ambitwistor string, we
     expect that this limit mixes massless states of open and closed
     strings in a non-trivial way akin to what happens in~\cite{Jusinskas:2021bdj}.
    
\end{description}

\section*{Acknowledgments}
\label{sec:acknowledgments}

We are grateful to Stefan Fredenhagen, Troels Harmark, Jelle Hartong, Girish
Vishwa, and Ziqi Yan for useful discussions.  JMF is grateful to the NBI
for hospitality and support during the final stages of this work. 
The work of EH and NO is supported by Villum Foundation Experiment project
00050317,
``Exploring the wonderland of Carrollian physics''. 
The Center of Gravity is a Center of Excellence funded by the Danish National
Research
Foundation under grant No.~184.

\appendix

\section{The BRST differential on the relative subcomplex}
\label{sec:acti-brst-diff}

In this appendix we collect formulae used in the computation of the
relative BRST cohomology.  We will calculate the action of the BRST
differential on a field basis for $\fM^\bullet(0)$.  Together with
Lemmas~\ref{lem:dC0} and \ref{lem:df} and using the state-field
correspondence~\eqref{eq:state-field-corr}, this suffices to compute
the action of $d$ on $\Crel^\bullet(p)$.  The computations on fields
were performed with \texttt{OPEdefs}
\cite{Thielemans:1991uw,Thielemans:1992mu, Thielemans:1994er}.

Since the BRST differential is $\SO(25)$-equivariant, we may decompose
the calculations based on the $\SO(25)$ representation the basis
elements belong to: scalars, vectors and $2$-tensors.

There is only one basis field at ghost number $0$: the identity, which
is a scalar and a cocycle.

At ghost number $1$ we have that the action on
$\SO(25)$-scalars is given by
\begin{align}
  \label{eq:d-on-M-scalars-gh-no-1}
  \begin{aligned}
    d(c \d X^0) &= - c \d(C \Pi_0)\\
    d(c \Pi_0) &= 0\\
    d(C \Pi_0) &= c \d (C \Pi_0)\\
    d(c C B) &= c C M - \tfrac32 \d^2 c c\\
    d(C \d X^0) &= c \d(C \d X^0) + \d C C \Pi_0\\
    d(c C b) &= c C T - 2 c B \d C C + \tfrac32 c \d^2 C + \tfrac32 \d^2 c C,
  \end{aligned}
\end{align}
whereas on vectors it is given by
\begin{align}
  \label{eq:d-on-M-vectors-gh-no-1}
  \begin{aligned}
    d(c \d X^i) &= 0\\
    d(c \Pi_i)&=  \tau^2 c \d( C \d X^i)\\
    d(C \d X^i) &= c \d (C \d X^i)\\
    d(C \Pi_i) &= c \d (C \Pi_i) - \tau^2 \d C C \d X^i.
  \end{aligned}
\end{align}
There are no tensors at ghost number $1$.

At ghost number $2$, the action on scalars\footnote{At this and higher
  ghost number there are additional scalars obtained by taking traces
  of tensors.  Since $d$ commutes with taking trace, the action of $d$
  on them can be read from that on tensors by taking the trace and so we
  will not list them here.} is given by
\begin{align}
  \label{eq:d-on-M-scalars-gh-no-2}
  \begin{aligned}
    d(c C \d \Pi_0) &= \d^2c c C \Pi_0\\
    d(c C \Pi_0 \Pi_0) &= 0\\
    d(c C \d^2 X^0) &= \d^2 c c C \d X^0 - c \d^2 C C \Pi_0 - 2 c \d C C \d \Pi_0\\
    d(c C \d X^0 \Pi_0) &=\tfrac16 \d^3 c c C- c \d C C (\Pi_0)^2\\
    d(c C \d X^0 \d X^0) &= -\tfrac16 c \d^3 C C - 2 c \d C C \d X^0 \Pi_0\\
    d(c\d^2 c) &= 0\\
    d(c\d^2 C) &= c\d(\d^2 c C)\\
    d(\d^2c C) &= c\d(\d^2 c C)\\
    d(C\d^2C) &= -c \d^2 C \d C - c \d^3 C C + \d^2 c  \d C C;
  \end{aligned}
\end{align}
whereas that on vectors is given by
\begin{equation}
  \label{eq:d-on-M-vectors-gh-no-2}
  \begin{aligned}
    d(c C \d^2 X^i) &= \d^2 c c C \d X^i\\
    d(c C \d X^i \Pi_0) &= 0\\
    d(c C \d X^0 \d X^i) &= -c \d C C \d X^i \Pi_0\\
    d(c C \d \Pi_i) &= \d^2 c c C \Pi_i - \tau^2 c C \left( \d^2 C \d X^i + 2 \d C \d^2 X^i\right)\\
    d(c C \d X^0 \Pi_i) &= -c \d C C \Pi_i \Pi_0 + \tau^2 c \d C C \d X^i \d X^0\\
    d(c C \Pi_0 \Pi_i) &= \tau^2 c \d C C \d X^i \Pi_0,
  \end{aligned}
\end{equation}
and that on tensors is given by
\begin{equation}
  \label{eq:d-on-M-tensors-gh-no-2}
  \begin{aligned}
    d(c C \d X^i \Pi_j) &= \tfrac16 \delta^i_j \d^3c c C + \tau^2 c \d C C \d X^i \d X^j\\
    d(c C \d X^i \d X^j) &= 0\\
    d(c C \Pi_i \Pi_j) &= \tfrac16 \tau^2 c \d^3 C C \delta_{ij} + \tau^2 \left( c \d C C \Pi_i \d X^j + c \d C C \d X^i \Pi_j \right).
  \end{aligned}
\end{equation}

At ghost number $3$, the action on scalars is given by
\begin{equation}
  \label{eq:d-on-M-scalars-gh-no-3}
  \begin{aligned}
    d(c C \d^2 c \Pi_0) &= 0\\
    d(c C \d^2 C \d X^0) &= \d^2 c c \d C C \d X^0 - c \d^2 C \d C C \Pi_0\\
    d(c C \d^2 c \d X^0) &= \d^2 c c \d C C \Pi_0\\
    d(c C \d^2 C \Pi_0) &= \d^2 c c \d C C \Pi_0\\
    d(c C \d^3c) &= 0\\
    d(c C \d^3C) &= 2\d^3 c c \d C C;
  \end{aligned}
\end{equation}
whereas that on vectors is given by
\begin{equation}
  \label{eq:d-on-M-vectors-gh-no-3}
  \begin{aligned}
    d(c C \d^2 c \d X^i) &= 0\\
    d(c C \d^2 c \Pi_i) &= - \tau^2 \d^2 c c \d C C \d X^i\\
    d(c C \d^2 C \Pi_i) &= \d^2 c c \d C C \Pi_i + \tau^2 c \d^2 C \d C C \d X^i\\
    d(c C \d^2 C \d X^i) &= \d^2 c c \d C C \d X^i.
  \end{aligned}
\end{equation}
At ghost number $4$ we have only one scalar and the BRST differential
is identically zero:
\begin{align}
  \label{eq:d-on-M-scalars-gh-no-4}
  \begin{aligned}
    d(c C \d^2 c \d^2 C) &= 0.
  \end{aligned}
\end{align}

\providecommand{\href}[2]{#2}\begingroup\raggedright\endgroup
\end{document}